\documentclass[12pt]{article}
\usepackage[utf8]{inputenc}

\usepackage{amsmath}
\usepackage{amstext}
\usepackage{amssymb}
\usepackage{amsfonts}
\usepackage{geometry}
\usepackage[pdftex]{graphicx}
\usepackage{indentfirst}
\usepackage{appendix}
\usepackage{float}
\usepackage{setspace}
\usepackage{hyperref,cleveref}
\usepackage{aliascnt}
\usepackage{bm}
\usepackage{comment}
\usepackage[authoryear,round,sort]{natbib}
\usepackage{mathtools}

\usepackage[moderate]{savetrees}
\usepackage{times}
\usepackage{xcolor}

\hypersetup{colorlinks=true,
            citecolor=blue,
            linkcolor=blue,
            urlcolor=blue,
			bookmarksopen=true,
			pdfstartview={XYZ null null 1.00},
			pdfpagelayout={SinglePage}
}

\newtheorem{theorem}{Theorem}
\newtheorem{assumption}{Assumption}

\newtheorem{corollary}{Corollary}
\newtheorem{definition}{Definition}
\newtheorem{lemma}{Lemma}
\newtheorem{proposition}{Proposition}
\newtheorem{remark}{Remark}
\newenvironment{proof}{{\noindent\textbf{Proof}.\quad}}{\hfill $\blacksquare$\par}

\title{
\textbf{Dividing the Spoils:\\ Strategic Prize Allocation in Team Contests}\thanks{We thank Bo Chen, Yi-Chun Chen, Shanglyu Deng, Christian Ewerhart, Igor Letina, Yingkai Li, Zihao Li, Wojciech Olszewski, Harry Pei, Marco Serena, Satoru Takahashi, Allen Vong, Zhewei Wang, Zijia Wang, Julian Wright, Zenan Wu, Junjie Zhou, Yuxuan Zhu, and the participants of the NUS Theory Group Lunch Workshop, the 2023 SAET (Paris), the 2024 Game Theory World Congress (Beijing), the 2024 EAGT (Jeju) and the 2025 SDU IO Summer Workshop (Jinan), for their helpful comments and suggestions. Junhao Chen and Xinya Li provided excellent research assistance. AI-assisted software was used to improve language and readability. The authors reviewed and edited all output and take full responsibility for the content of the manuscript. Any remaining errors are our own.}}
\author{
Zhonghong Kuang\thanks{Zhonghong Kuang, School of Economics, Renmin University of China, 59 Zhongguancun Street, Beijing, China, 100872. \emph{E-mail:} kuang@ruc.edu.cn.}
\and
Jingfeng Lu\thanks{Jingfeng Lu, Department of Economics, National University of Singapore, Singapore, 117570. \emph{E-mail:} ecsljf@nus.edu.sg.}
\and
Yiyao Zhu\thanks{Yiyao Zhu, Department of Economics, National University of Singapore, Singapore, 117570. \emph{E-mail:} zhuyiyao@u.nus.edu.}}

\date{July 2026}

\begin{document}
\maketitle
\begin{abstract}
\noindent Rival teams compete through battles and reward members only after team victory. We study how team managers allocate victory-contingent reward budgets across players to maximize their teams' winning probabilities in a majoritarian contest. Under a regularity condition on battle technologies, the allocation game has a unique pure-strategy equilibrium. Despite differences in budgets and player costs across teams and technologies across battles, both managers choose the same normalized reward schedule, a property we call reward schedule alignment. Each battle's share is proportional to discriminatory power, closeness, and pivotality. With precommitted rewards, allocations and team-winning probabilities are independent of battle order.

\medskip

\noindent JEL Classification: C61, C72, D72, D74.

\noindent Keywords: Multi-battle contest, team contest, strategic prize allocation, Nash equilibrium.

\end{abstract}
\newpage{}

\section{Introduction}
\label{Section:Intro}

Many competitions are won collectively but fought locally. A procurement consortium must perform across technical dimensions to secure a contract, while a national squad accumulates pairwise victories to win a title. In both settings, opposing members meet in separate battles, where costly effort shapes their winning probabilities, and local victories determine the collective outcome. Before effort is chosen, a team principal, such as a prime contractor or sports federation, may precommit how a victory-contingent reward budget is divided among members. We call this principal the manager. Because the rival manager makes her own commitment, dividing the collective prize is itself strategic.

To see the strategic tension, consider a three-battle contest in which team~$A$ is slightly stronger in two battles and much weaker in the third. A natural response is for $A$ to concentrate its rewards on the two winnable battles, while team~$B$ treats the third as secure and targets one of the other two. More generally, asymmetric teams may seem likely to specialize. But reward schedules are chosen strategically, and each manager anticipates the other's choice. The key question is whether this interaction reinforces specialization or overturns it.

We study this problem in a majoritarian team contest \citep*{Fu-2015-Team}. Two rival teams compete through an odd number of pairwise battles, with a majority determining team victory. Each manager seeks to maximize her team's probability of victory. Before effort is chosen, she publicly allocates an earmarked reward budget among her players, payable only if her team wins. Players then choose costly effort in their assigned battles. Battle technologies are homogeneous of degree zero, so only relative effort matters; the Tullock power form is the leading example. Under a regularity condition on battle technologies, the allocation game has a unique pure-strategy Nash equilibrium. Yet the equilibrium contradicts the specialization intuition: after normalization by total rewards, the managers choose exactly the same reward schedule.

We call this result \emph{reward schedule alignment}. Differences in budgets, player costs, and battle technologies, such as discriminatory power and bias, determine which battles receive larger shares, but both managers rank the battles in the same way. Strategic competition therefore produces mirroring rather than specialization.

Alignment reduces the allocation problem to a common assessment of which battles are most worth incentivizing. That assessment reflects three distinct forces, none sufficient on its own. Their interaction is summarized by a single index, which we call a battle's \emph{salience}. In equilibrium, each battle's share of either manager's reward budget is proportional to its salience:
\[
   \text{salience}
   = \text{discriminatory power}
     \times \text{closeness}
     \times \text{pivotality}.
\]
The three components have distinct economic roles. \emph{Discriminatory power} captures how strongly the battle technology translates relative incentives into changes in winning probabilities. \emph{Closeness}, defined as the product of the two teams' equilibrium battle-winning probabilities, measures how much the outcome remains in play and is greatest when the battle is evenly matched. \emph{Pivotality} is the probability that the remaining battles split evenly, in which case the focal battle determines the majority. Their product therefore combines incentive responsiveness, competitive balance, and importance for team victory. A battle receives a large reward share only when additional incentives can materially shift its outcome and that outcome is likely to decide the contest. A close but rarely pivotal battle, or a pivotal battle that responds little to incentives, receives a small share.

The rule can be computed directly. Reward schedule alignment fixes the reward ratio in every battle at the ratio of the two total budgets. This pins down battle-winning probabilities and pivotalities from budgets, player costs, and battle technologies. Each manager then allocates her budget across battles in proportion to salience.

Why does strategic competition generate alignment? An additional unit of reward matters through two channels. It changes the probability of winning the focal battle, and that change affects the probability of team victory only when the remaining battles split evenly. This pivotal event is the same for both teams. Moreover, homogeneity of degree zero means that only relative effort affects the local winning probability; in equilibrium, its response to rewards is governed by relative incentives. Comparing the managers' optimality conditions therefore eliminates team-specific scale factors and leaves them with the same marginal ranking of battles. In equilibrium, the reward ratio between the teams is consequently constant across battles and equals the ratio of their total budgets. Differences in budgets scale the two schedules, while player costs and battle technologies change the salience of a battle for both managers rather than tilting their schedules in opposing directions.

The main technical challenge is global existence. The first-order conditions identify a candidate allocation but do not by themselves establish a global best response, because a manager's majority-winning probability can have an irregular shape. We show that, under the condition imposed on battle technologies, each manager's team-winning probability is log-concave in her own reward allocation. Log-concavity makes every local optimum global and validates the salience characterization. This argument yields existence and uniqueness for any odd number of battles. For the Tullock power technology, the condition reduces to discriminatory power no greater than one.

The model yields two further implications. First, strengthening a team, either by lowering one member's marginal cost or by increasing the team's reward budget, raises its equilibrium probability of victory even after rewards are reallocated across battles. Second, when reward schedules are committed in advance, neither the equilibrium allocation nor either team's chance of winning depends on battle order. Simultaneous battles, fully sequential battles, and battles organized in sequential clusters therefore generate the same equilibrium reward allocations and ex ante team-winning probabilities. This timing invariance need not survive if managers can revise later rewards after observing earlier results.

These results come with clear boundaries. The model allows unequal budgets and substantial heterogeneity in player costs and battle technologies, but team victory is determined by majority rule, each player is assigned to one battle, reward budgets are exogenous, and rewards are committed in advance and paid only after team victory. Procurement workshares and sports bonus pools provide the clearest applications because rewards can be specified before effort and made contingent on collective success. Legislative elections permit a looser relational-contract interpretation in which committee positions, agenda access, or future party support serve as rewards contingent on majority control. We discuss these applications and their limitations in \autoref{subsec:implication}.

Our contribution is closest to the literature on team contests with pairwise battles. \citet*{Fu-2015-Team} take player stakes as given and characterize equilibrium effort under majority rule and arbitrary temporal structure. We add a prior strategic stage in which rival managers choose those stakes. This additional stage makes the cross-battle distribution of incentives endogenous and yields reward schedule alignment. \citet*{Feng-2024-Optimal} study how a centralized organizer divides a total prize across individual battles, whereas the schedules here emerge from competition between opposing principals. Related work examines tug-of-war contests \citep*{Hafner-2017-Tug,Hafner-2020-Eternal}, winner-effort maximization \citep*{Barbieri-2024-Winners}, and equilibrium player ordering \citep*{Konishi-2022-Equilibrium}.

The paper also relates to multi-battle contests \citep*{Harris-1987-Racing,Klumpp-2006-Primaries,Konrad-2009-Multi,Gelder-2014-From,Feng-2018-How,Clark-2020-Creating} and Colonel Blotto contests \citep*{Roberson-2006-Colonel,Kovenock-2021-Generalizations}. These models typically let contestants allocate productive resources directly across battlefields. Our managers instead allocate victory-contingent rewards, and strategic players respond by supplying costly effort. This distinction links resource allocation across battles to incentive design within teams and produces reward schedule alignment rather than specialization across battlefields.

Finally, the paper relates to prize design in contests \citep*{lazear1981rank,moldovanu2001optimal,olszewski2016large,fang2020turning,lemus2025contingent}, prize sharing in group contests \citep*{Nitzan-2011-Prize,Nitzan-2014-Intra,Kobayashi-2024-Effort,Konishi-2024-Allocation}, and moral hazard in teams \citep*{holmstrom1982moral,rayo2007relational,winter2004incentives,halac2021rank}. This literature generally studies how a single principal or organizer designs rewards to manage incentives. We study two principals who simultaneously design their internal reward schedules while anticipating each other's choices. Majority aggregation adds a further force absent from standard team incentive problems: the value of motivating a player depends on the probability that her battle determines collective victory. The resulting pivotality channel also connects the paper to electoral targeting \citep*{penrose1946elementary,shapley1954method,banzhaf1965weighted,brams1974three,dixit1996determinants,stromberg2008electoral,kujala2020donors}.

The remainder of the paper is organized as follows. \autoref{Section:Model} sets up the model. \autoref{Section:Equi} characterizes the unique pure-strategy equilibrium. \autoref{sec:discussion} discusses the role of the technological assumptions, develops comparative statics, and presents applications. \autoref{Section:Conclusion} concludes. The Appendix contains technical proofs.

\section{Model}\label{Section:Model}

Two rival teams, $A$ and $B$, play a two-stage game: each team's manager first divides a reward budget across her players, and the players then exert effort and fight the battles. The teams contest $2N+1$ pairwise battles ($N\geq 1$), indexed by $t\in\mathcal{N}\triangleq\{1,\ldots,2N+1\}$. Each team fields one risk-neutral player per battle and is run by a single manager. The pairing is fixed and commonly known: in battle~$t$, player~$i(t)$ of team~$i\in\{A,B\}$ exerts effort at constant marginal cost $c_{it}>0$. Conditional on effort choices, battle outcomes are independent across battles. A team wins the contest by winning at least $N+1$ battles. The baseline model takes the battles to be simultaneous; \autoref{lem:temporal} below shows that, for precommitted allocations, the order of play does not affect ex ante team-winning probabilities.

Manager~$i$ controls an earmarked reward fund $W_i>0$. The fund is contractually restricted to rewards for members of team~$i$, becomes payable only if team~$i$ wins, and cannot be retained by the manager or diverted to another organizational use. Before effort is chosen, the managers publicly and simultaneously commit to reward allocations
\[
  \mathbf{v}_i=(v_{i1},\ldots,v_{i,2N+1})\in\Delta_i
  \;\triangleq\;
  \Bigl\{\mathbf{v}\ge 0:
  \textstyle\sum_{t\in\mathcal{N}}v_t=W_i\Bigr\},
\]
where $v_{it}\ge 0$ is paid to player~$i(t)$ if team~$i$ wins. Each manager chooses only the distribution of her predetermined fund and seeks to maximize her team's probability of victory. Because the fund has no residual value to the manager, every feasible allocation exhausts $W_i$. The fund may represent, for example, a championship bonus pool fixed by a federation or a pool of workshare claims that a procurement consortium has committed to distribute among its participating units if it secures the contract. The budgets need not be equal; $W_A=W_B$ is the equal-budget benchmark.

Having observed $(\mathbf{v}_A,\mathbf{v}_B)$, the two players in each battle~$t$ simultaneously choose efforts $e_{At},e_{Bt}\ge 0$. Player~$A(t)$ wins the local battle with probability $\widetilde{p}_{At}(e_{At},e_{Bt})$, and player~$i(t)$ receives payoff $v_{it}\mathbf{1}\{\text{team $i$ wins}\}-c_{it}e_{it}$. The primitive battle success function may be biased, so equal efforts need not imply equal winning probabilities.

\paragraph{The contract space and commitment.}
Three modeling choices deserve comment. First, rewards are contingent on \emph{team} victory because they are claims on the spoils of collective success: a procurement workshare, a championship bonus pool, or the organizational rents associated with majority control become available only when the team wins. Within the present contract space, allowing an optional payment in the losing state would not improve incentives. Such a payment reduces the player's payoff difference between team victory and defeat, and a manager who may leave the corresponding resources uncommitted therefore sets it to zero.\footnote{Suppose player~$i(t)$ receives $v_{it}\ge 0$ if team~$i$ wins and an optional payment $u_{it}\ge 0$ if it loses. His payoff can be written as $u_{it}+(v_{it}-u_{it})\mathbf{1}\{\text{team $i$ wins}\}-c_{it}e_{it}$. Holding $v_{it}$ fixed, a larger $u_{it}$ lowers the incentive differential $v_{it}-u_{it}$ and therefore weakens effort incentives. If the manager can leave these resources uncommitted and values only the probability of team victory, she optimally chooses $u_{it}=0$. This argument does not cover compulsory losing-state budgets, participation constraints, limited liability, risk sharing, or other contractual roles for payments following defeat.}

Second, conditional on team victory, we restrict attention to reward amounts fixed before effort rather than schedules contingent on the realized profile of battle outcomes. This benchmark fits settings in which workshares or bonus amounts are announced ex ante and allows us to isolate the strategic interaction \emph{between} the rival managers. Outcome-contingent sharing within the winning team would introduce an additional tournament-design problem and is left for future work (\autoref{Section:Conclusion}).

Third, in the simultaneous baseline, commitment simply requires managers to choose rewards before effort, when no interim battle outcomes are available. Under sequential timing, commitment becomes a substantive restriction: publicly announced bonus pools or signed teaming agreements cannot be revised after early outcomes are observed. \autoref{lem:temporal} establishes timing invariance under such precommitment; history-contingent renegotiation is discussed in \autoref{Section:Conclusion}.

\subsection{Contest Technology}

Throughout, $\widetilde{p}_{At}+\widetilde{p}_{Bt}=1$, and we maintain the following regularity conditions on $\widetilde{p}_{At},\widetilde{p}_{Bt}$.
\begin{assumption}\label{ass:battle_technology}
For every battle $t\in\mathcal N$:
\begin{enumerate}
    \item For strictly positive efforts, the primitive success function $\widetilde p_{At}$ is \textbf{homogeneous of degree zero}: it can be represented by a twice continuously differentiable log-odds function $\ell_t:\mathbb R\to\mathbb R$ with $\ell_t'>0$,
    \[
      \ell_t(z_t)\triangleq\log\frac{\widetilde p_{At}}{\widetilde p_{Bt}},
      \qquad
      z_t\triangleq\log(e_{At}/e_{Bt}),
      \qquad
      \widetilde p_{At}=\frac{e^{\ell_t(z_t)}}{1+e^{\ell_t(z_t)}}.
    \]
    \item \textbf{Boundary regularity} holds: $\lim\limits_{z\to-\infty}\ell_t(z)=-\infty$ and $\lim\limits_{z\to\infty}\ell_t(z)=\infty$. Accordingly, we use the one-sided boundary extension $\widetilde p_{At}(0,e_{Bt})=0$ for $e_{Bt}>0$ and $\widetilde p_{At}(e_{At},0)=1$ for $e_{At}>0$, with $\widetilde p_{At}(0,0)=1/2$ by convention.
    \item The log-odds function satisfies the \textbf{curvature condition}
    \begin{equation}\label{eq:curvature}
       0<\ell_t'(z)\le 1,
       \qquad
       \bigl|\ell_t''(z)\bigr|\le\ell_t'(z)\bigl(1-\ell_t'(z)\bigr),
       \qquad\forall z\in\mathbb R.
    \end{equation}
\end{enumerate}
\end{assumption}

Boundary regularity rules out ``headstart'' technologies with bounded log-odds, e.g., $\ell_t(z_t)=\log(e^{z_t}+C)$ with $C>0$ (i.e., $\widetilde{p}_{At}=\frac{e_{At}+Ce_{Bt}}{e_{At}+(1+C)e_{Bt}}$), under which a player retains a positive winning probability by choosing zero effort. The \textbf{curvature condition}~\eqref{eq:curvature} is a simple sufficient condition for the auxiliary two-player effort game to have a unique interior equilibrium at every pair of positive local stakes.\footnote{With $p\triangleq\widetilde p_{At}=1/(1+e^{-\ell_t(z)})$ and $z=\log(e_{At}/e_{Bt})$, one computes $\partial^2 \widetilde p_{At}/\partial e_{At}^2=\frac{p(1-p)}{e_{At}^2}\bigl[(1-2p)(\ell_t')^2+\ell_t''-\ell_t'\bigr]$ and $\partial^2 \widetilde p_{Bt}/\partial e_{Bt}^2=\frac{p(1-p)}{e_{Bt}^2}\bigl[(2p-1)(\ell_t')^2-\ell_t''-\ell_t'\bigr]$. The upper bound $\ell_t''\le\ell_t'(1-\ell_t')$ makes $\widetilde p_{At}$ concave in $e_{At}$, and the lower bound $\ell_t''\ge-\ell_t'(1-\ell_t')$ makes $\widetilde p_{Bt}$ concave in $e_{Bt}$; the first-order conditions then pin down a unique interior equilibrium.} \autoref{subsec:robust} discusses in detail the distinct roles played by the three parts of \autoref{ass:battle_technology}.

We introduce useful notation based on battle technologies. For each battle $t$ and each pair of positive local stakes $(x_A,x_B)$, consider the auxiliary one-battle contest in which player $A(t)$ chooses $e_A$, player $B(t)$ chooses $e_B$, and payoffs are $x_A\widetilde p_{At}(e_A,e_B)-c_{At}e_A$ and $x_B\bigl(1-\widetilde p_{At}(e_A,e_B)\bigr)-c_{Bt}e_B$, respectively. By \autoref{ass:battle_technology}, a unique interior equilibrium exists; let $\hat e_{At}(x_A,x_B)$ and $\hat e_{Bt}(x_A,x_B)$ denote its efforts, and define
\[
   p_{At}(x_A,x_B)
   \triangleq
   \widetilde p_{At}\bigl(\hat e_{At}(x_A,x_B),\hat e_{Bt}(x_A,x_B)\bigr),
   \qquad
   p_{Bt}(x_A,x_B)=1-p_{At}(x_A,x_B).
\]
We call $p_{At}$ the \emph{single-battle reduced-form success function}.

At one-sided stake boundaries, we complete the managers' reduced-form allocation game using
\[
   p_{At}(0,x_B)=0\quad(x_B>0),
   \qquad
   p_{At}(x_A,0)=1\quad(x_A>0),
\]
and set $p_{At}(0,0)=1/2$. These conventions define the managers' objectives on their entire strategy simplices. All equilibrium statements below refer to this completed allocation game. This qualification has no effect on equilibrium-path behavior because \autoref{lem:no_boundary} shows that both managers assign a positive reward to every battle in equilibrium.\footnote{The one-sided values are limiting conventions rather than outcomes of effort-subgame equilibria. For example, if $x_A>0=x_B$, player~$B$ chooses zero effort, while player~$A$ prefers an arbitrarily small positive effort to zero but has no optimal positive effort. The one-sided effort subgame therefore has no pure-strategy equilibrium. Moreover, the completion is not jointly continuous at $(0,0)$. Neither issue affects the analysis: an interior equilibrium candidate is constructed and shown to be globally optimal against an interior opponent, while boundary allocation profiles are ruled out separately in \autoref{lem:no_boundary}.}

\begin{lemma}[Reduced-form success function]\label{lem:csf}
For every battle $t$ and all $x_A,x_B>0$, $p_{At}$ is \textbf{homogeneous of degree zero} and strictly increasing in the ratio $x_A/x_B$. More precisely, writing $u_t\triangleq\log(x_A/x_B)$,
\begin{equation}\label{eq:csf}
  \log\frac{p_{At}(x_A,x_B)}{p_{Bt}(x_A,x_B)}=h_t(u_t),
  \qquad
  h_t(u_t)=\ell_t\!\left(u_t+\log\frac{c_{Bt}}{c_{At}}\right).
\end{equation}
Here, $h_t$ is strictly increasing and twice continuously differentiable.
\end{lemma}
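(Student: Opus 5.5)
The plan is to solve the auxiliary one-battle contest explicitly through its first-order conditions and read off the equilibrium effort ratio. By the footnote to \autoref{ass:battle_technology}, the curvature condition makes $\widetilde p_{At}$ concave in $e_A$ and $\widetilde p_{Bt}$ concave in $e_B$. The interior equilibrium $(\hat e_{At},\hat e_{Bt})$ is therefore characterized by the two first-order conditions. Writing $p=\widetilde p_{At}$ and $z=\log(e_A/e_B)$, the chain rule through the log-odds representation gives
\[
\frac{\partial \widetilde p_{At}}{\partial e_A}=\frac{p(1-p)\,\ell_t'(z)}{e_A},
\qquad
\frac{\partial \widetilde p_{Bt}}{\partial e_B}=\frac{p(1-p)\,\ell_t'(z)}{e_B}.
\]
The first-order conditions are then
\[
x_A\,p(1-p)\,\ell_t'(z)=c_{At}e_A,
\qquad
x_B\,p(1-p)\,\ell_t'(z)=c_{Bt}e_B .
\]

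Dividing the two conditions cancels the common factor $p(1-p)\ell_t'(z)$, which is strictly positive at any interior point. This yields
\[
\frac{e_A}{e_B}=\frac{x_A c_{Bt}}{x_B c_{At}},
\qquad\text{so}\qquad
z=u_t+\log\frac{c_{Bt}}{c_{At}}.
\]
Substituting into the log-odds representation gives $\log(p_{At}/p_{Bt})=\ell_t(z)=\ell_t\bigl(u_t+\log(c_{Bt}/c_{At})\bigr)$, which is \eqref{eq:csf}.

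The remaining claims follow directly. Since $p_{At}$ depends on $(x_A,x_B)$ only through $u_t=\log(x_A/x_B)$, it is homogeneous of degree zero. The function $h_t$ is a translate of $\ell_t$, so it inherits the properties $C^2$ and $h_t'=\ell_t'>0$. Hence $h_t$ is strictly increasing, and because the logistic map is increasing, $p_{At}=e^{h_t}/(1+e^{h_t})$ is strictly increasing in $x_A/x_B$.

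The only delicate step is justifying that the first-order conditions characterize the equilibrium, so that the ratio computed above is indeed the equilibrium ratio. The paper grants existence and uniqueness of the interior equilibrium under \autoref{ass:battle_technology}, so this step reduces to two observations. At an interior equilibrium each player's objective is differentiable in his own effort, so the first-order conditions must hold. Conversely, by the concavity from the footnote, any interior solution of the two conditions is mutual best responses. For completeness, I would also check directly that the solution is interior. The system determines $z$ uniquely from the ratio equation. Given $z$, the level $e_B$ is pinned down by the second condition, $e_B=x_B p(1-p)\ell_t'(z)/c_{Bt}>0$. This confirms existence of an interior solution without further argument.
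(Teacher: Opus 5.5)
Your proposal is correct and follows the paper's proof essentially step for step: the same first-order conditions, the same division to obtain $z^*=u_t+\log(c_{Bt}/c_{At})$, the same explicit construction of positive efforts solving both conditions, and the same reading of $h_t$ as a translate of $\ell_t$. The paper takes two extra steps. First, it compares the interior payoff $x_A p^*[1-(1-p^*)\ell_t'(z^*)]>0$ with the zero-effort deviation; your concavity argument covers only positive efforts, though continuity at $e_A=0$, which follows from boundary regularity, closes this. Second, it rules out equilibria with zero effort, so that the interior equilibrium is the unique equilibrium of the effort game.
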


The reduced-form success function is a \emph{local} object: it takes as inputs only the two stakes attached to battle~$t$ and that battle's primitive technology. Because common rescalings of $(x_A,x_B)$ leave $p_{At}$ unchanged, the reduced-form probability depends only on the relative incentives within the battle.

\paragraph{Leading example (power form).}
We illustrate every result with the canonical Tullock technology,
\begin{equation*}
  \widetilde{p}_{At}(e_{At},e_{Bt})=\frac{e_{At}^{\,r_t}}{e_{At}^{\,r_t}+e_{Bt}^{\,r_t}}, \qquad r_t\in(0,1],
\end{equation*}
where $r_t$ governs how sharply effort translates into wins. The log-odds function is linear, $\ell_t(z_t)=r_t z_t$ (so $\ell_t'\equiv r_t$ and $\ell_t''\equiv 0$), and the \textbf{curvature condition}~\eqref{eq:curvature} reduces to $r_t\le 1$, securing a unique interior equilibrium regardless of local stakes.\footnote{Within the power-form class, $r_t\leq1$ is the standard bound guaranteeing an interior pure-strategy equilibrium for every pair of positive local stakes. Thus, the curvature condition is tight for this leading example.} 
Specializing \autoref{lem:csf} then gives
\begin{equation}\label{eq:csf_tullock}
  p_{At}(v_{At},v_{Bt})=\frac{(c_{Bt}\,v_{At})^{r_t}}{(c_{Bt}\,v_{At})^{r_t}+(c_{At}\,v_{Bt})^{r_t}}, \qquad h_t(u_t)=r_t\,u_t+r_t\log\frac{c_{Bt}}{c_{At}}.
\end{equation}
The same analysis covers two generalizations: the \emph{biased} Tullock technology $\widetilde{p}_{At}=\frac{\zeta_t\,e_{At}^{\,r_t}}{\zeta_t\,e_{At}^{\,r_t}+(1-\zeta_t)\,e_{Bt}^{\,r_t}}$ with $\zeta_t\in(0,1)$, where $\zeta_t$ is an intrinsic bias ($\zeta_t=1/2$ is unbiased, $\zeta_t>1/2$ favors team~$A$); and technologies beyond the power class, such as $\ell_t(z_t)=\tfrac12 z_t+\tfrac{1}{10}\tanh z_t$, which is nonlinear and hence non-Tullock.

\subsection{Preliminaries}\label{subsec:preliminary}

So far the reduced form describes one battle in isolation. To embed it in the majoritarian contest, note that a battle matters only through its effect on the team-level outcome, and that this strategic importance enters both players' incentives through the same multiplier.

\begin{lemma}\label{lem:local_reduction}
Fix any precommitted allocation $(\mathbf{v}_A,\mathbf{v}_B)$ and any history under any fixed temporal structure.

(i) In a battle $t$ that can still affect the identity of the winning team, the increase in either team's continuation winning probability from winning rather than losing battle~$t$ is \textbf{common}, denoted by $\pi^{(t)}>0$.

(ii) If $v_{At},v_{Bt}>0$, then battle $t$ yields the same winning probability as the auxiliary contest with stakes $(\pi^{(t)}v_{At},\pi^{(t)}v_{Bt})$. Hence, by homogeneity of degree zero,
\[
   p_{At}\bigl(\pi^{(t)}v_{At},\pi^{(t)}v_{Bt}\bigr)
   =p_{At}(v_{At},v_{Bt}).
\]
\end{lemma}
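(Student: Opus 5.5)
The plan is to analyze a single battle $t$ at a given history, holding everything else fixed. I would condition on the efforts and outcomes of all other battles: those already resolved (given by the history) and those still to be played, whose outcomes depend on future behavior but not on the current efforts in battle~$t$. Under a fixed temporal structure and precommitted allocations, the continuation play after battle~$t$ may depend on the outcome of battle~$t$, but it does not depend on the efforts $e_{At},e_{Bt}$ themselves, since only the win or loss is observed. For each outcome of battle~$t$, I would define the continuation probability that team~$A$ wins the contest, $Q_A^{W}$ if $A$ wins battle~$t$ and $Q_A^{L}$ if it loses, and similarly $Q_B^{W},Q_B^{L}$ for team~$B$. Within a simultaneous cluster these are expectations over the other contemporaneous battles' equilibrium outcomes, which are independent conditional on efforts.

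\textbf{Part (i).} Since exactly one team wins the contest, $Q_B=1-Q_A$ state by state. Team~$B$ winning battle~$t$ is the same event as team~$A$ losing it. Hence
\[
Q_B^{W}-Q_B^{L}=(1-Q_A^{L})-(1-Q_A^{W})=Q_A^{W}-Q_A^{L}\triangleq\pi^{(t)}.
\]
To show $\pi^{(t)}\ge 0$, I would use monotonicity of majority rule. Couple the two continuation paths by fixing all other outcomes; then flipping battle~$t$ from a loss to a win for $A$ can only help $A$. With continuation behavior depending on history, I would argue by backward induction that each later battle's winning probability is well defined and that the coupling still works. Strictness, $\pi^{(t)}>0$, is the one delicate step. ``Can still affect the identity of the winning team'' means that, with positive probability, the remaining outcomes are such that battle~$t$ is decisive. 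For positive stakes, later battles have interior winning probabilities in $(0,1)$ by the interior auxiliary equilibria; with zero stakes the boundary conventions apply. So every completion of the remaining battles has positive probability, which gives $\pi^{(t)}>0$. The main obstacle here is the continuation probabilities being endogenous to the outcome of battle~$t$. I would handle it by induction from the last battle backward, showing that later battles' equilibria depend only on the history, so that the coupling argument is legitimate.

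\textbf{Part (ii).} Player $A(t)$'s expected payoff is
\[
v_{At}\bigl[Q_A^{L}+\widetilde p_{At}\,\pi^{(t)}\bigr]-c_{At}e_{At},
\]
which, up to a constant independent of his effort, equals $\pi^{(t)}v_{At}\widetilde p_{At}-c_{At}e_{At}$. Likewise, player $B(t)$ maximizes $\pi^{(t)}v_{Bt}(1-\widetilde p_{At})-c_{Bt}e_{Bt}$. This is exactly the auxiliary contest with stakes $(\pi^{(t)}v_{At},\pi^{(t)}v_{Bt})$, both positive. By Assumption~\ref{ass:battle_technology} it has a unique interior equilibrium, so the winning probability is $p_{At}(\pi^{(t)}v_{At},\pi^{(t)}v_{Bt})$. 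The homogeneity of degree zero from Lemma~\ref{lem:csf} then removes the common factor $\pi^{(t)}$.

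For simultaneous clusters, one point needs care: $\pi^{(t)}$ depends on the other contemporaneous players' efforts. I would note that in any equilibrium those efforts are fixed, so the best-response reduction above holds at the equilibrium. The probability obtained is independent of $\pi^{(t)}$, so the equilibrium winning probability is pinned down regardless.
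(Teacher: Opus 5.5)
Your proposal is correct and follows the paper's argument. Complementarity of team victory makes the increment $\pi^{(t)}$ common to both teams, and dropping terms that do not depend on a player's own effort turns battle~$t$ into the auxiliary contest with stakes $(\pi^{(t)}v_{At},\pi^{(t)}v_{Bt})$, after which homogeneity of degree zero removes $\pi^{(t)}$.

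One small slip: when a later battle has a one-sided allocation, not every completion of the remaining outcomes has positive probability. You do not need that claim, though. Your coupling already shows that $\pi^{(t)}$ equals the probability that battle~$t$ is decisive, and this is positive by the very meaning of ``can still affect the identity of the winning team.''
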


In the baseline setting of simultaneous battles, $\pi^{(t)}>0$ holds for every battle. The economic content of \autoref{lem:local_reduction} is that the common multiplier $\pi^{(t)}$ cancels from each battle's \emph{winning probability}, so the team-level winning probability can be assembled directly from the local reduced forms $p_{it}(v_{At},v_{Bt})$.

\paragraph{The managers' problem.}
Manager~$i$ chooses $\mathbf{v}_i\in\Delta_i$ to maximize her team's chance of a majority. By \autoref{lem:local_reduction}, the managers' payoff-relevant battle probabilities are the single-battle reduced forms $p_{it}(v_{At},v_{Bt})$. Given these probabilities, conditional independence gives
\begin{equation}\label{eq:Probi}
\textnormal{Prob}_i(\mathbf{v}_A,\mathbf{v}_B)
\triangleq\sum_{\substack{w_i\subseteq\mathcal{N}\\|w_i|\ge N+1}}
\;\prod_{t\in w_i}p_{it}\;\prod_{q\in\mathcal{N}\setminus w_i}p_{jq},
\qquad j\in\{A,B\}\setminus\{i\},
\end{equation}
where $w_i$ ranges over the winning coalitions and each $p_{it}=p_{it}(v_{At},v_{Bt})$ is given by~\eqref{eq:csf} on the positive orthant and by the boundary conventions otherwise.

We have assumed simultaneous battles, but the managers' precommitted allocation problem is the same under any fixed timing. Consider a timing in which battles are grouped into clusters played in sequence and are simultaneous within each cluster; the fully simultaneous and fully sequential contests are the two extremes. Once a team has secured $N+1$ victories, later battles are not played and their players exert no effort. For ex ante evaluation, however, we may complete every stopped history by drawing the unplayed battles independently according to their reduced-form probabilities $(p_{At},p_{Bt})$. These artificial draws cannot change the identity of the team that has already secured a majority. They provide an outcome-equivalent completion under which team victory is represented by the same majority event as in the simultaneous contest.

\begin{lemma}[Same team-winning probability across all temporal structures]\label{lem:temporal}
For any fixed temporal structure and any precommitted allocation profile, team~$i$'s ex ante winning probability is given by~\eqref{eq:Probi}.
\end{lemma}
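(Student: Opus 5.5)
The plan is to show that, under any fixed temporal structure, the joint law of battle outcomes after the completion described above is that of independent Bernoulli draws. Each battle~$t$ is won by team~$A$ with probability $p_{At}(v_{At},v_{Bt})$. Team victory in the real contest is the majority event of the completed outcome profile. Formula~\eqref{eq:Probi} then follows by summing over winning coalitions.

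\emph{Step 1: local probabilities are history-independent.} Fix a cluster and a history $H$ of earlier results under which the contest is still undecided. Consider a battle $t$ in that cluster that can still affect the identity of the winner. By \autoref{lem:local_reduction}(i), both players face the common multiplier $\pi^{(t)}(H)>0$, which may depend on $H$ and on the equilibrium play of the other battles in the cluster. By \autoref{lem:local_reduction}(ii), battle $t$ is then won by $A$ with probability $p_{At}(\pi^{(t)}v_{At},\pi^{(t)}v_{Bt})$. By the homogeneity in \autoref{lem:csf}, this equals $p_{At}(v_{At},v_{Bt})$, which does not depend on $H$. If $v_{At}=0$ or $v_{Bt}=0$, the boundary conventions give the same history-free value.

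A battle in a live cluster may be unable to affect the winner because the majority is already determined within that branch. Such a battle is treated like an unplayed battle: whatever its real outcome, we may replace it with an independent artificial draw with probabilities $(p_{At},p_{Bt})$ without changing who wins. Battles in stopped histories are completed the same way, as described before the lemma. Within a cluster, conditional independence given efforts holds, and each effort pair depends only on the common history. So conditional on $H$, the cluster's outcomes are independent, each with its reduced-form probability.

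\emph{Step 2: chain rule.} I would induct over clusters in order. Conditional on any history through cluster $k-1$, the completed outcomes of cluster $k$ are independent draws with probabilities $p_{At}$, and these probabilities do not depend on the history. It follows that the full completed profile $(\omega_1,\dots,\omega_{2N+1})$ has product law $\prod_t p_{\omega_t t}$. Because the artificial draws never change which team has secured $N+1$ wins, the event ``team $i$ wins'' coincides with $\{|\{t:\omega_t=i\}|\ge N+1\}$. Summing the product law over this event gives~\eqref{eq:Probi}.

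The main obstacle is Step 1 in the case where $\pi^{(t)}$ varies across histories and is endogenous to simultaneous play within a cluster. One must make sure that the cancellation in \autoref{lem:local_reduction}(ii) applies history by history, and that equilibrium selection inside a cluster does not break independence. I would handle this by invoking \autoref{lem:local_reduction} exactly as stated: it covers any history under any temporal structure. The remaining care lies in classifying each battle as either pivotal-possible, where the lemma applies, or decided, where the artificial draw applies.
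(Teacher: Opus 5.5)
Your proposal is correct and follows essentially the paper's route. Both arguments induct cluster by cluster, use \autoref{lem:local_reduction} and homogeneity of degree zero to make each live cluster's outcomes independent with history-free probabilities $p_{At}$, and complete stopped histories with artificial draws. The only differences are bookkeeping. You establish the product law for the whole completed profile and then sum over the majority event, whereas the paper inducts directly on $\Pr(i^z)$, splitting it into ``majority already secured'' and ``majority first secured in cluster $z+1$''. Your explicit treatment of battles with $\pi^{(t)}=0$ inside a live cluster is a detail the paper leaves implicit.
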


This invariance concerns precommitted reward allocations and ex ante team-winning probabilities. It does not imply that the same battles are actually played or that realized effort costs are invariant across temporal structures. The result need not survive history-contingent renegotiation, under which managers revise later rewards after observing earlier outcomes.

Finally, equilibrium allocations are always interior, so the finite log-odds representation~\eqref{eq:csf} applies throughout; boundary cases away from equilibrium are handled by the one-sided conventions above. Two degenerate cases can arise outside the interior: (i) one team concedes, $v_{it}=0<v_{jt}$, handing the battle to team~$j$; or (ii) both ignore it, $v_{At}=v_{Bt}=0$, leaving a coin flip. The next lemma rules out both.

\begin{lemma}\label{lem:no_boundary}
No pure-strategy Nash equilibrium $(\mathbf{v}_A^{*},\mathbf{v}_B^{*})$ has $v_{it}^{*}=0$ for any $i\in\{A,B\}$ and $t\in\mathcal{N}$.
\end{lemma}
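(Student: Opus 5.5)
We argue by contradiction and use profitable deviations. Suppose $(\mathbf{v}_A^*,\mathbf{v}_B^*)$ is a pure-strategy equilibrium with some zero entry. The key tool is the pivotality decomposition of \eqref{eq:Probi}: for any battle $t$,
\[
\textnormal{Prob}_i=\textnormal{Prob}_i^{-t}(\text{$i$ already has $N+1$ among others})+p_{it}\,\Pi_t,
\]
where $\Pi_t$ is the probability that the other $2N$ battles split $N$--$N$. Whenever every other battle has interior probabilities, $\Pi_t>0$. The proof splits into the two degenerate cases listed before the lemma.

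\textbf{Case (i), concession: $v_{it}^*=0<v_{jt}^*$.} Here $p_{it}=0$. Manager $i$ moves a small amount $\varepsilon$ from a battle $s$ with $v_{is}^*>0$ to battle $t$. By the boundary convention and \autoref{lem:csf}, $p_{it}$ jumps from $0$ to $p_{it}(\varepsilon,v_{jt}^*)>0$. This is a discontinuous gain of order $\Pi_t\,p_{it}(\varepsilon,v_{jt}^*)$, while the loss in battle $s$ is continuous in $\varepsilon$ and vanishes as $\varepsilon\to0$.

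This comparison is the main obstacle. Since $p_{it}(\varepsilon,v_{jt}^*)\to0$ as well, both sides vanish, and a plain continuity argument does not settle it. I would compare rates using \eqref{eq:csf}. The gain is
\[
p_{it}(\varepsilon,v_{jt}^*)\approx e^{h_t(\log(\varepsilon/v_{jt}^*))}.
\]
Because $0<h_t'\le 1$ (from \eqref{eq:curvature}), $h_t(u)\ge h_t(0)+u$ for $u<0$. So the gain is at least of order $\varepsilon$, with an explicit constant.

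The loss in battle $s$ is $\Pi_s\,\partial p_{is}/\partial v_{is}\cdot\varepsilon$ when $v_{js}^*>0$. If the constants do not compare favorably, I would instead take a deviation of finite size $\delta$, since a finite transfer also yields a strictly positive gain. The cleanest route is to pick $s$ so that the loss is weak. For example, if some battle $s$ has $v_{is}^*>0=v_{js}^*$, then $p_{is}=1$ is retained for any positive remaining reward, so moving part of $v_{is}^*$ to $t$ costs nothing and strictly gains, provided $\Pi_t>0$.

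In general, I would argue via the first-order condition. At an interior optimum of $i$'s allocation, the marginal values $\Pi_s\,\partial p_{is}/\partial v_{is}$ are equal and finite across battles with positive reward. Meanwhile, the marginal value at $v_{it}=0$ is effectively infinite or discontinuous, because
\[
\frac{\partial p_{it}}{\partial v_{it}}=\frac{p(1-p)\,h_t'}{v_{it}}
\]
and the jump from $0$ dominates. I would also have to verify that $\Pi_t>0$. This could fail only if several other battles are conceded so heavily that the $N$--$N$ split is impossible. In that event, I would handle a sub-case in which the losing manager's other zeros are treated first: if team $i$ is certain to lose, any reallocation that makes $N+1$ battles interior strictly raises $\textnormal{Prob}_i$ from $0$.

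\textbf{Case (ii), mutual neglect: $v_{At}^*=v_{Bt}^*=0$.} Then $p_{it}=1/2$. Either manager can move a small $\varepsilon$ into $t$, which raises $p_{it}$ to $1$ by the convention $p_{At}(x_A,0)=1$. This is a discrete gain of $\tfrac12\Pi_t$ against a loss that vanishes as $\varepsilon\to0$. It is therefore profitable whenever $\Pi_t>0$, and $\Pi_t>0$ is argued as in case (i) after first eliminating concessions elsewhere.

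Combining the two cases, no equilibrium can have a zero entry, which proves the lemma.
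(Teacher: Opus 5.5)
\paragraph{Gap in Case (i).} Your Case (i) does not go through: you try to make the \emph{conceding} manager $i$ deviate, and that deviation need not be profitable. Boundary regularity makes $p_{it}(\cdot,v_{jt}^*)$ continuous at zero, so there is no jump. The curvature condition also allows $\ell_t'\equiv1$, and then the slope at zero is finite. With Tullock $r_t=1$, $p_{At}(x,v_{Bt})=c_{Bt}x/(c_{Bt}x+c_{At}v_{Bt})$ has slope $c_{Bt}/(c_{At}v_{Bt})$ at $x=0$. So the marginal value of the first unit in battle $t$ is not ``effectively infinite.'' Moreover, $\mathbf v_i^*$ is not interior, so the relevant optimality conditions are KKT inequalities, and these permit a zero coordinate whose marginal value lies below the common value elsewhere.

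This actually occurs. Take three battles, $r_t=1$, equal costs, $\mathbf v_B=(W_B-2\kappa,\kappa,\kappa)$ and $W_B>W_A+4\kappa$. At $\mathbf v_A=(0,W_A/2,W_A/2)$, the derivative of $\textnormal{Prob}_A$ along every feasible direction that moves reward into battle~1 is strictly negative. Your rate comparison is conclusive only when $p_{it}(\varepsilon,v_{jt}^*)/\varepsilon\to\infty$ (e.g.\ $r_t<1$), which the curvature condition does not guarantee. The fallback of a finite transfer proves nothing, since it also carries a finite loss.

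\paragraph{The missing idea.} Let the \emph{funding} manager $j$ deviate instead. Since $v_{it}^*=0$, team $j$ wins battle $t$ for sure with any positive reward. It can therefore withdraw $\varepsilon<v_{jt}^*$ at zero cost and place it in a battle $s$ that it does not win for sure; such an $s$ exists because $\textnormal{Prob}_j<1$. This strictly raises $p_{js}$ whether $s$ is contested, unfunded, or conceded by $j$, and the gain is $\Pi_s\,\Delta p_{js}>0$. You wrote down exactly this costless-withdrawal trick, but only for a hypothetical battle of team $i$, not for team $j$ in battle $t$ itself.

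The argument also needs, before Case (i), that $\Pi_s>0$ for every battle even though one-sided battles are present. Your remarks on $\Pi_t$ do not establish this. The paper first shows $\textnormal{Prob}_A,\textnormal{Prob}_B\in(0,1)$ via the full-support deviation, and then counts. Let $D_A$ be the set of $A$'s sure wins and $R$ the set of battles with interior probability. Then $|D_A|\le N$ and $|D_A|+|R|\ge N+1$, which makes the exact $N$--$N$ split of the other battles a positive-probability event for every $t$. With Case (i) settled this way, your Case (ii) goes through as written.
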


The intuition is simple. If team~$i$ assigns zero reward to a battle that team~$j$ funds, team~$j$ already wins that battle with certainty. Team~$j$ can then transfer a sufficiently small amount from that battle to one it does not already win for sure without changing the conceded battle's outcome; positive pivotality makes the transfer profitable. If neither team funds a battle, either manager can commit an arbitrarily small reward to turn the coin flip into a sure local victory, securing a discrete gain at negligible cost.

\section{Equilibrium Characterization}
\label{Section:Equi}

We establish the existence and uniqueness of a pure-strategy Nash equilibrium in this section. By \autoref{lem:no_boundary}, we work on $\textnormal{int}(\Delta_A)\times\textnormal{int}(\Delta_B)$ throughout. The argument proceeds in three steps. First, using the pivotality decomposition in \autoref{lem:pivot}, we show that the two managers allocate their budgets in identical proportions across battles. Second, we construct an explicit candidate profile $(\hat{\mathbf{v}}_A,\hat{\mathbf{v}}_B)$ and verify that it is the unique solution to the first-order conditions (\autoref{thm:main}). Third, we prove that each team's contest-winning probability is log-concave in its own allocation given the opponent's (\autoref{thm:contprob}), making the first-order solution an equilibrium.

\subsection{Reward Schedule Alignment}

Let $\mathcal{N}_{-t}\triangleq\mathcal{N}\setminus\{t\}$ denote the set of all battles other than battle~$t$. Under majority rule, a battle matters for the contest outcome only when remaining battles are split evenly. This motivates the following definition.

\begin{definition}[Pivotality]\label{def:pivotality}
The \textbf{pivotality} of battle~$t$, denoted by $\theta(t)$, is defined as the probability that both teams win the same number of battles in $\mathcal{N}_{-t}$. It is the same for both teams.
\end{definition}

This definition gives the following marginal-decomposition identity.

\begin{lemma}\label{lem:pivot}
For any interior $(\mathbf{v}_A,\mathbf{v}_B)$, team $i$, and battle $t$,
\begin{equation*}
\frac{\partial\textnormal{Prob}_i}{\partial v_{it}}=\theta(t)\cdot\frac{\partial p_{it}}{\partial v_{it}}.
\end{equation*}
\end{lemma}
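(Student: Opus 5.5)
The plan is to exploit multilinearity of $\textnormal{Prob}_i$ in the battle-winning probabilities. Fix an interior profile $(\mathbf{v}_A,\mathbf{v}_B)$ and a battle $t$. By \autoref{lem:csf}, every $p_{iq}$ depends only on $(v_{Aq},v_{Bq})$. Hence $v_{it}$ enters \eqref{eq:Probi} only through $p_{it}$ (and $p_{jt}=1-p_{it}$), and all other factors do not depend on $v_{it}$. I will therefore split the sum in \eqref{eq:Probi} according to whether $t\in w_i$ or $t\notin w_i$.

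Write $K$ for the number of battles in $\mathcal{N}_{-t}$ won by team~$i$. By conditional independence, $K$ is distributed as a sum of independent Bernoulli variables with parameters $p_{iq}$, $q\in\mathcal{N}_{-t}$, and its law does not depend on $v_{it}$. Team $i$ wins the contest in the following cases:
\begin{itemize}
\item if $K\ge N+1$, it wins regardless of battle $t$;
\item if $K=N$, it wins exactly when it wins battle $t$;
\item if $K\le N-1$, it loses regardless of battle $t$.
\end{itemize}
Since $|\mathcal{N}_{-t}|=2N$, this gives
\[
\textnormal{Prob}_i=\Pr(K\ge N+1)+p_{it}\,\Pr(K=N).
\]
This is just the regrouping of the coalitions in \eqref{eq:Probi} into those containing $t$ and those not containing $t$. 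The event $K=N$ is exactly the event that both teams win $N$ battles in $\mathcal{N}_{-t}$, so $\Pr(K=N)=\theta(t)$ by \autoref{def:pivotality}. This also confirms that $\theta(t)$ is the same for both teams.

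Differentiating in $v_{it}$ now gives the claim. The first term and $\theta(t)$ are constant in $v_{it}$, so $\partial\textnormal{Prob}_i/\partial v_{it}=\theta(t)\,\partial p_{it}/\partial v_{it}$. Differentiability of $p_{it}$ at interior points follows from \autoref{lem:csf}: $p_{it}$ is a logistic transform of $h_t(\log(v_{At}/v_{Bt}))$, and $h_t$ is $C^2$.

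The only point needing care is bookkeeping rather than a real obstacle. I must verify that the rearrangement of \eqref{eq:Probi} is exact, i.e.\ that no winning coalition is double counted when $K\ge N+1$ and team $i$ also wins battle $t$. Summing over $t\in w_i$ and $t\notin w_i$ shows that the $p_{it}$ and $p_{jt}$ weights combine to $1$ in that case. For team $B$ the identity is symmetric, with $p_{Bt}=1-p_{At}$.
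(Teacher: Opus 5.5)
Your proposal is correct and follows essentially the same route as the paper. You condition on the outcomes in $\mathcal{N}_{-t}$, write $\textnormal{Prob}_i=\Pr(K\ge N+1)+\theta(t)\,p_{it}$, and differentiate, noting that only $p_{it}$ depends on $v_{it}$; your remarks on differentiability and on the exactness of regrouping \eqref{eq:Probi} are sound extra detail that the paper leaves implicit.
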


\begin{proof}
Since battles are independent conditional on $(\mathbf{v}_A,\mathbf{v}_B)$, we can condition on the outcomes of the battles in $\mathcal{N}_{-t}$ to decompose $\textnormal{Prob}_i=\Pr(\text{team $i$ wins at least $N+1$ battles in }\mathcal{N}_{-t})+\theta(t) p_{it}$. Only $p_{it}$ in the second term depends on $v_{it}$. Differentiating this expression with respect to $v_{it}$ gives the stated identity.
\end{proof}

We now establish a necessary condition for any pure-strategy equilibrium: the two managers allocate \emph{proportional} prizes across battles, meaning that the ratio of team~$B$'s prize to team~$A$'s prize is the same in every battle and equals the budget ratio $k\triangleq W_B/W_A$.

\begin{proposition}[Reward Schedule Alignment]\label{pro:PPP}
At any pure-strategy equilibrium,
\[
\frac{v_{Bt}^{*}}{v_{At}^{*}}=k
\qquad\text{for every }t\in\mathcal{N}.
\]
\end{proposition}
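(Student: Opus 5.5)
By \autoref{lem:no_boundary}, any pure-strategy equilibrium $(\mathbf{v}_A^{*},\mathbf{v}_B^{*})$ lies in $\textnormal{int}(\Delta_A)\times\textnormal{int}(\Delta_B)$. Each manager's objective is differentiable there: by \autoref{lem:csf}, $p_{it}$ is a $C^2$ function of $u_t=\log(v_{At}/v_{Bt})$. Hence each manager's allocation must satisfy the interior Lagrangian first-order conditions. For manager~$A$, there is a multiplier $\lambda_A$ such that $\partial\textnormal{Prob}_A/\partial v_{At}=\lambda_A$ for every $t$, and similarly for manager~$B$ with a multiplier $\lambda_B$. I will compare these two systems battle by battle.

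\emph{Step 1: compute the marginal effects.} By \autoref{lem:pivot}, $\partial\textnormal{Prob}_i/\partial v_{it}=\theta(t)\,\partial p_{it}/\partial v_{it}$, and $\theta(t)$ is common to both teams. Write $p_{At}=\sigma(h_t(u_t))$ with $\sigma$ the logistic function, so that $\sigma'=p_{At}p_{Bt}$. Then
\[
\frac{\partial p_{At}}{\partial v_{At}}=\frac{p_{At}p_{Bt}\,h_t'(u_t)}{v_{At}},\qquad
\frac{\partial p_{Bt}}{\partial v_{Bt}}=\frac{p_{At}p_{Bt}\,h_t'(u_t)}{v_{Bt}}.
\]
The second identity holds because $p_{Bt}=1-p_{At}$ and $\partial u_t/\partial v_{Bt}=-1/v_{Bt}$. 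Define $S_t\triangleq\theta(t)\,p_{At}p_{Bt}\,h_t'(u_t)>0$; positivity uses $h_t'>0$, interiority, and $\theta(t)>0$. The first-order conditions then become $S_t/v_{At}^{*}=\lambda_A$ and $S_t/v_{Bt}^{*}=\lambda_B$ for all $t$.

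\emph{Step 2: divide and use the budgets.} Dividing the two conditions for the same battle $t$ eliminates the common factor $S_t$ and gives $v_{Bt}^{*}/v_{At}^{*}=\lambda_A/\lambda_B$, a constant independent of $t$. Summing $v_{Bt}^{*}=(\lambda_A/\lambda_B)v_{At}^{*}$ over $t$ and using $\sum_t v_{it}^{*}=W_i$ yields $\lambda_A/\lambda_B=W_B/W_A=k$, which is the claim. As a by-product, $v_{it}^{*}=W_i S_t/\sum_s S_s$, so each share is proportional to salience.

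\emph{Main obstacle.} The step needing care is justifying that the first-order conditions hold with $S_t>0$ and finite multipliers. This requires $\theta(t)>0$, which is true since at interior profiles every $p_{it}\in(0,1)$, so every even split of $\mathcal{N}_{-t}$ has positive probability. It also requires that the equilibrium allocation be an interior local maximizer on the simplex, so that the standard Lagrange condition with equal marginal values across coordinates applies. Interiority is supplied by \autoref{lem:no_boundary}, and differentiability by \autoref{lem:csf}. Only necessity is claimed here, so no concavity argument is needed at this stage.
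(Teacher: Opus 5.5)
Your proof is correct and follows essentially the paper's route. You use interiority from \autoref{lem:no_boundary} and the pivotality decomposition of \autoref{lem:pivot} to cancel the common $\theta(t)$, and the identity $v_{At}\,\partial p_{At}/\partial v_{At}=v_{Bt}\,\partial p_{Bt}/\partial v_{Bt}$ (which you get by explicit logistic differentiation where the paper invokes Euler's theorem for the HD-0 reduced form), after which dividing the first-order conditions and applying the budgets gives $\lambda_A/\lambda_B=k$ as in the paper.
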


\begin{proof}
By \autoref{lem:no_boundary}, any equilibrium is interior, so the first-order conditions hold with finite multipliers $\lambda_A,\lambda_B>0$. By \autoref{lem:pivot}, for each battle $t$, 
\[
\theta(t)\,\frac{\partial p_{At}}{\partial v_{At}}=\lambda_A,\qquad \theta(t)\,\frac{\partial p_{Bt}}{\partial v_{Bt}}=\lambda_B.
\]
Dividing the two first-order conditions eliminates $\theta(t)$ and gives $\dfrac{\partial p_{At}/\partial v_{At}}{\partial p_{Bt}/\partial v_{Bt}}=\dfrac{\lambda_A}{\lambda_B}$ for all $t$.

Since \eqref{eq:csf} is homogeneous of degree zero in $(v_{At},v_{Bt})$, Euler's theorem gives $v_{At}\,\frac{\partial p_{At}}{\partial v_{At}}+v_{Bt}\,\frac{\partial p_{At}}{\partial v_{Bt}}=0$, while $p_{At}+p_{Bt}=1$ gives $\frac{\partial p_{At}}{\partial v_{Bt}}=-\frac{\partial p_{Bt}}{\partial v_{Bt}}$. Hence $v_{At}\,\frac{\partial p_{At}}{\partial v_{At}}=v_{Bt}\,\frac{\partial p_{Bt}}{\partial v_{Bt}}$, and combining with the ratio above gives $v_{Bt}/v_{At}=\lambda_A/\lambda_B$ for all $t$. The budget constraints then force $\lambda_A/\lambda_B=W_B/W_A=k$.
\end{proof}

This proportionality is notable because the two teams may face entirely different strategic considerations, yet they respond with identical relative allocations, differing only by the scalar $k$. The condition pins down the \emph{ratio} $v_{Bt}^{*}/v_{At}^{*}$ but leaves the \emph{levels} $v_{At}^{*}/W_A$ undetermined; those are characterized in \autoref{subsec:main}.

Consider a three-battle example in which team~$A$ is slightly stronger in battles 1 and 2 and much weaker in battle 3. One might expect $A$ to concentrate rewards on the first two battles and concede the third. \autoref{pro:PPP} rules out such asymmetric tilting. Homogeneity of degree zero implies that only the within-battle reward ratio affects the local outcome, while pivotality is common to the two teams. Combining these two facts with the managers' first-order conditions and Euler's identity forces the normalized allocations to coincide. This is an equilibrium alignment result, not a sequential-response argument: neither manager observes and then offsets the other's deviation.

When $W_A=W_B$, proportionality becomes equality, $v_{At}^*=v_{Bt}^*$ in every battle. Cost and technological asymmetries still matter because they change equilibrium battle probabilities, closeness, pivotality, and therefore the common reward assigned to each battle. What they do not generate is disagreement between the managers about relative reward shares.

\subsection{Main Results}\label{subsec:main}

The alignment property immediately determines the battle-winning probabilities in any pure-strategy equilibrium. If $v_{Bt}^{*}=k v_{At}^{*}$, then the log prize ratio $u_t^{*}=-\log k$ is the same across all battles. Define the resulting \emph{proportional ratio} probabilities as
\begin{equation}\label{eq:pStar_general}
p_{At}^{*}\triangleq p_{At}(1,k)
=\frac{1}{1+e^{-h_t(-\log k)}},
\qquad
p_{Bt}^{*}=1-p_{At}^{*}.
\end{equation}
By \autoref{pro:PPP}, they are the equilibrium probabilities and depend only on the primitives of the corresponding battle and the budget ratio~$k$. For the leading example, $h_t(u)=r_t u+r_t\log(c_{Bt}/c_{At})$ gives the explicit formula
\begin{equation}\label{eq:pStar}
p_{At}^{*}=\frac{c_t}{c_t+k^{r_t}},\qquad p_{Bt}^{*}=\frac{k^{r_t}}{c_t+k^{r_t}},
\end{equation}
where $c_t\triangleq\rho_t^{r_t}$, with $\rho_t\triangleq c_{Bt}/c_{At}$, captures player~$A(t)$'s cost advantage. The corresponding \emph{pivotality} is
\begin{equation}\label{eq:eq-pivot}
\theta^{*}(t)=\sum_{\substack{w_i\subseteq\mathcal{N}_{-t}\\|w_i|=N}}
\;\prod_{s\in w_i}p^*_{is}\;\prod_{q\in\mathcal{N}_{-t}\setminus w_i}p^*_{jq},
\qquad j\in\{A,B\}\setminus\{i\}.
\end{equation}
Since every $p_{it}^{*}\in(0,1)$, each $\theta^{*}(t)$ is strictly positive.

To understand how managers split their budgets, consider team~$A$'s problem. For any technology that is homogeneous of degree zero, abbreviated HD-0, differentiating the reduced form~\eqref{eq:csf} and using $\eta_t\triangleq h_t(u_t)$ shows that the marginal return to prize investment factors cleanly: $\frac{\partial p_{At}}{\partial v_{At}}=\frac{p_{At}\,p_{Bt}\,h_t'(u_t)}{v_{At}}$.\footnote{Write $\eta_t=\log\frac{p_{At}}{1-p_{At}}=h_t(u_t)$ with $u_t=\log v_{At}-\log v_{Bt}$, and differentiate along $v_{At}\mapsto u_t\mapsto\eta_t\mapsto p_{At}$. Since $v_{Bt}$ is fixed, $\partial u_t/\partial v_{At}=1/v_{At}$, so $\partial\eta_t/\partial v_{At}=h_t'(u_t)/v_{At}$. Inverting the log-odds gives the logistic $p_{At}=e^{\eta_t}/(1+e^{\eta_t})$, whose derivative is $\partial p_{At}/\partial\eta_t=p_{At}(1-p_{At})=p_{At}p_{Bt}$. The chain rule then yields this formula.}
Evaluating at the proportional ratio $v_{Bt}=k\,v_{At}$ (where $u_t=-\log k$) yields
\begin{equation}\label{eq:dpAt}
\frac{\partial p_{At}}{\partial v_{At}}\bigg|_{v_{Bt}=k\,v_{At}}=\frac{h_t'(-\log k)\,p^*_{At}\,p^*_{Bt}}{v_{At}}.
\end{equation}
Since the denominator $v_{At}$ is merely a budget-scaling factor, the battle-level marginal impact is governed by the numerator $h_t'(-\log k)\,p^*_{At}\,p^*_{Bt}$, which we single out as a measure of each battle's importance. Team~$B$'s problem yields the \emph{same} numerator, since $v_{At}\,\frac{\partial p_{At}}{\partial v_{At}}=v_{Bt}\,\frac{\partial p_{Bt}}{\partial v_{Bt}}$. Moreover, $h_t'(-\log k)=\ell_t'\!\left(-\log k+\log\frac{c_{Bt}}{c_{At}}\right)$. For the leading example, $h_t'\equiv\ell_t'\equiv r_t$, so \eqref{eq:dpAt} reduces to $r_t\,p^*_{At}\,p^*_{Bt}/v_{At}$.

\begin{definition}[Responsiveness]\label{def:responsiveness}
The \textbf{responsiveness} of battle~$t$ at the proportional ratio is defined as
\begin{equation*}
    R_t \triangleq \underbrace{h_t'(-\log k)}_{\text{discriminatory power}}\cdot\underbrace{p_{At}^*\,p_{Bt}^*}_{\text{closeness}},\qquad\forall t.
\end{equation*}
For the leading example $h_t'\equiv r_t$, so $R_t=r_t\,p_{At}^*p_{Bt}^*$.
\end{definition}

The quantity $R_t$ measures how effectively prize investment translates into a higher winning probability in battle~$t$ at the candidate proportional ratio. It is the product of two components. The first is the \emph{(local) discriminatory power} $h_t'(-\log k)$, which measures the slope of the log-odds at the proportional prize ratio and thus how sharply the technology distinguishes between unequal efforts. In the Tullock framework, it equals $r_t$, consistent with the usual meaning of discriminatory power in contest theory. The second is the \emph{closeness} $p_{At}^*p_{Bt}^*$, which measures how evenly matched the two sides are. Intuitively, responsiveness increases with both discriminatory power and closeness. Closeness is maximized at $1/4$ when $p_{At}^*=p_{Bt}^*=1/2$. Conversely, $R_t$ vanishes as either side's effective strength dominates, because a lopsided battle whose outcome is nearly predetermined cannot be swayed by marginal changes in prize allocation.

Under majority rule, however, the marginal return to prize investment depends not only on the responsiveness of the corresponding battle, but also on the likelihood that winning that battle changes the overall contest outcome. We therefore introduce a composite measure that captures both dimensions.

\begin{definition}[Salience]\label{def:salience}
The \textbf{salience} of battle~$t$ at the proportional ratio is defined as $S_t\triangleq\theta^{*}(t)\cdot R_t$.
\end{definition}

Note that salience naturally combines two forces: how \emph{responsive} a battle is to additional investment and how \emph{pivotal} it is for the contest outcome. Since $p^*_{At},p^*_{Bt}\in(0,1)$, $h_t'>0$ by \autoref{ass:battle_technology}, and $\theta^*(t)>0$, salience is strictly positive for every battle. With this measure in hand, we can state the main characterization.

\begin{theorem}[Reward Schedule Alignment and Salience]\label{thm:main}
Under \autoref{ass:battle_technology}, for any odd number $2N+1$ of battles with $N\geq1$ and any budgets $W_A,W_B>0$, the completed allocation game has a \textbf{unique} pure-strategy Nash equilibrium $(\mathbf{v}_A^{*},\mathbf{v}_B^{*})$, given by
\[
v_{At}^{*}=W_A\cdot\frac{S_t}{\sum_{\tau}S_{\tau}},\qquad v_{Bt}^{*}=W_B\cdot\frac{S_t}{\sum_{\tau}S_{\tau}},\qquad \forall t.
\]
Consequently, the probabilities in \eqref{eq:pStar_general} and the pivotalities in \eqref{eq:eq-pivot} are their equilibrium values.
\end{theorem}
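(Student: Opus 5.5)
The proof splits into necessity, which gives uniqueness, and sufficiency, which gives existence.

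\emph{Necessity.} Take any pure-strategy equilibrium. By \autoref{lem:no_boundary} it is interior, and by \autoref{pro:PPP} it satisfies $v_{Bt}^*=kv_{At}^*$ for every $t$. Every log prize ratio then equals $u_t=-\log k$. So all battle probabilities equal the $p^*_{it}$ in \eqref{eq:pStar_general}, and all pivotalities equal $\theta^*(t)$ in \eqref{eq:eq-pivot}. None of these depends on the unknown levels $v_{At}^*$. Substituting \eqref{eq:dpAt} into team $A$'s first-order condition from \autoref{lem:pivot} gives
\[
\theta^*(t)\,\frac{R_t}{v_{At}^*}=\frac{S_t}{v_{At}^*}=\lambda_A\quad\text{for all }t,
\]
so $v_{At}^*=S_t/\lambda_A$. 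The budget constraint then gives $\lambda_A=\sum_\tau S_\tau/W_A$, and $v_{Bt}^*=kv_{At}^*=W_B S_t/\sum_\tau S_\tau$. Every equilibrium must therefore coincide with the displayed profile $(\hat{\mathbf v}_A,\hat{\mathbf v}_B)$. This establishes uniqueness, and the claim that \eqref{eq:pStar_general} and \eqref{eq:eq-pivot} are the equilibrium values follows at once.

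\emph{Existence.} We must show $\hat{\mathbf v}_A$ is a global best response to $\hat{\mathbf v}_B$ over all of $\Delta_A$, and symmetrically for $B$.

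First, the first-order conditions hold at the candidate. By construction, $\partial\textnormal{Prob}_A/\partial v_{At}=S_t/\hat v_{At}=\sum_\tau S_\tau/W_A$ is the same for every $t$. The analogous statement holds for $B$, using $v_{At}\partial_{v_{At}}p_{At}=v_{Bt}\partial_{v_{Bt}}p_{Bt}$.

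Second, the candidate is a global maximizer, which I expect to be the main obstacle. I would invoke the forthcoming log-concavity result (\autoref{thm:contprob}): for fixed interior $\mathbf v_B$, $\textnormal{Prob}_A$ is log-concave in $\mathbf v_A$ on the interior of the simplex. If this has to be proved here, I would proceed in three steps:
\begin{enumerate}
\item Show that each $p_{At}$ is log-concave in $v_{At}$. In the variable $x=\log v_{At}$, $\log p_{At}=-\log(1+e^{-h_t(x-\log v_{Bt})})$. The curvature bounds $0<\ell'\le1$ and $|\ell''|\le\ell'(1-\ell')$ should make both $p_{At}$ and $1-p_{At}$ concave in $v_{At}$, not just in $\log v_{At}$; this is the same computation as in the footnote on effort concavity.
\item Write $\textnormal{Prob}_A$ as a multilinear function of $(p_{A1},\dots,p_{A,2N+1})$ in which each battle enters through independent Bernoulli draws.
\item Use the fact that the majority event is an upper set (monotone), and that a product of log-concave functions in distinct variables is log-concave on a product domain. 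Then apply a Prékopa-type preservation argument: represent each Bernoulli outcome as the event $\{U_t\le p_{At}(v_{At})\}$ with $U_t$ uniform, so that $\textnormal{Prob}_A=\Pr\bigl(\mathbf U\in\{\mathbf u:\#\{t:u_t\le p_{At}\}\ge N+1\}\bigr)$, and argue log-concavity from the concavity of $p_{At}$ in $v_{At}$.
\end{enumerate}
Where the argument goes wrong, I would fall back to checking that the Hessian of $\log\textnormal{Prob}_A$ is negative semidefinite. Its diagonal terms come from the concavity of $p_{At}$, and its cross terms come from the pivot-of-pivot probabilities.

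Once log-concavity is in hand, the conclusion follows. The budget constraint is linear, so $\log\textnormal{Prob}_A$ restricted to $\textnormal{int}(\Delta_A)$ is concave and satisfies the KKT conditions at $\hat{\mathbf v}_A$. Hence $\hat{\mathbf v}_A$ maximizes over the interior. Boundary deviations cannot do better: by the one-sided conventions, setting $v_{At}=0$ against $\hat v_{Bt}>0$ yields $p_{At}=0$, which equals the limit from the interior. So $\textnormal{Prob}_A$ is upper semicontinuous up to the boundary against an interior opponent, and the supremum over $\Delta_A$ equals the interior maximum. The same argument for $B$ completes existence.
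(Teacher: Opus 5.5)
Your argument is the paper's proof. Uniqueness follows from \autoref{lem:no_boundary}, \autoref{pro:PPP}, and the first-order conditions, with the budget pinning down $\lambda_A$. Existence follows from the first-order conditions at the candidate, the log-concavity of \autoref{thm:contprob}, and a continuity argument for boundary deviations. Invoking \autoref{thm:contprob} is exactly what the paper does, and it makes your proof complete.

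Two points need correcting, though neither affects your main line.

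\textbf{The boundary step.} What you need, and what you actually check, is that $p_{At}(\cdot,\hat v_{Bt})$ is continuous at $0$. This uses boundary regularity, $h_t(u)\to-\infty$ as $u\to-\infty$. Continuity makes every boundary payoff a limit of interior payoffs, so it is bounded by the interior maximum. ``Upper semicontinuous'' points the wrong way for that conclusion.

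\textbf{The contingency plan for proving log-concavity would fail.}
\begin{itemize}
\item The set $\{(\mathbf v,\mathbf u):\#\{t:u_t\le p_{At}(v_{At})\}\ge N+1\}$ is a union over winning coalitions of convex sets, not a convex set. So Pr\'ekopa's theorem does not apply.
\item More fundamentally, \autoref{rmk:nocomposition} shows that $\textnormal{Prob}_A$ is not even quasiconcave in $\mathbf p_A$. No argument that merely composes concavity of each $v_{At}\mapsto p_{At}$ with a property of the majority aggregator can succeed.
\item Your Step~1 claim is also off. $p_{At}$ and $1-p_{At}$ cannot both be concave in $v_{At}$, since that would make $p_{At}$ affine. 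The curvature bounds give concavity of $p_{At}$ in $v_{At}$ (upper bound) and of $p_{Bt}$ in $v_{Bt}$ (lower bound).
\item Your Hessian fallback is the right object, but it skips the whole difficulty. You must show that the conditional covariances of battle outcomes given the majority event cannot overturn the own-battle curvature. The paper handles this by reducing to the technology-free matrix $M^0=D-\Sigma_A+\mathrm{diag}(G_t)$, using $G_t\ge0$ and $h_t'-h_t''\ge(h_t')^2$. It then proves $M^0\succeq0$ by showing every monomial coefficient matrix of a polynomial matrix in the loss odds is positive semidefinite.
\end{itemize}
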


The equilibrium can be computed directly from the primitives, without solving a fixed-point problem:
\begin{enumerate}
    \item For each battle~$t$, compute the proportional-ratio winning probabilities $p_{At}^*=p_{At}(1,k)$ and $p_{Bt}^*=1-p_{At}^*$.
    \item Compute $\theta^*(t)$, the probability that the remaining $2N$ battles split evenly, using~\eqref{eq:eq-pivot}. All pivotalities can be computed jointly in quadratic time.\footnote{Let $n\triangleq2N+1$ and $P(x)\triangleq\prod_{s\in\mathcal N}(p_{Bs}^*+p_{As}^*x)$. For each battle~$t$, $\theta^*(t)$ is the coefficient on $x^N$ in $P(x)/(p_{Bt}^*+p_{At}^*x)$. The coefficients of $P$ can be constructed by the standard Bernoulli recursion, after which synthetic division by each linear factor recovers the corresponding pivotality in $O(N)$ operations. Thus all pivotalities can be obtained in $O(nN)=O(n^2)$ operations.}
    \item Form each battle's salience and normalize across battles:
    \[
        S_t
        =
        h_t'(-\log k)\,p_{At}^*p_{Bt}^*\theta^*(t),
        \qquad
        v_{it}^*
        =
        W_i\frac{S_t}{\sum_{\tau\in\mathcal N}S_\tau},
        \quad i\in\{A,B\}.
    \]
\end{enumerate}

Each manager allocates budget across battles in proportion to battle salience, directing larger shares to battles that are simultaneously more responsive and more pivotal. Neither component alone governs the allocation: a highly pivotal battle may receive a small share if it is too lopsided for additional incentives to make a difference, while a perfectly balanced battle may receive little if it is unlikely to be decisive---for instance, in a three-battle contest, a balanced battle carries low pivotality when one team dominates in the other two battles.

\begin{remark}
The salience $S_t$ is a battle weight generated by decentralized competition between managers. \citet*{Feng-2024-Optimal} obtain a different weight in a centralized design problem: their organizer can jointly choose battle scores and a headstart to maximize total effort, which can make weight increase with asymmetry. Here managers take the aggregation rule as given and maximize their own winning probabilities. A highly asymmetric battle is less responsive to marginal reward changes, so the equilibrium assigns it a smaller common weight, all else equal. The contrast reflects the institutional difference between centralized rule design and strategic reward allocation.
\end{remark}

\begin{remark}
The closeness--pivotality component of salience is related to the electoral-targeting index in \citet*{stromberg2008electoral}: competing campaigns devote more attention to states that are both contestable and potentially decisive. Our object is different. Managers allocate victory-contingent rewards to strategic agents rather than deploy campaign resources directly, and local \textbf{discriminatory power} enters the allocation rule in addition to closeness and pivotality.
\end{remark}

\begin{remark}
Our results extend the temporal-structure independence of team contests established by \citet*{Fu-2015-Team} to the dividing-the-spoils game played by competing team managers. By \autoref{lem:temporal}, the equilibrium $(\mathbf{v}_A^*,\mathbf{v}_B^*)$ and the resulting team-winning probabilities are invariant to the temporal structure. In a fully sequential setting where each outcome is publicly revealed before the next battle begins, one might expect managers to exploit the order of play, loading prizes onto early battles to build momentum or reserving resources for a clinching scenario. No such dynamic consideration arises under precommitment. The unordered product formula in \eqref{eq:Probi} already integrates over all histories: later battles matter only in those histories in which the majority has not yet been clinched, and the probability of reaching such histories is itself pinned down by the same precommitted battle probabilities. The temporal structure therefore changes the order in which uncertainty is resolved, not the ex ante allocation problem.
\end{remark}

In the leading example the salience is $S_t=\theta^{*}(t)\,r_t\,p^{*}_{At}p^{*}_{Bt}$, recovering the explicit Tullock characterization. The biased Tullock technology, even with a battle-specific bias $\zeta_t\in(0,1)$, is also covered: $\zeta_t$ enters the reduced form only through the intercept of $h_t$, so the formula applies verbatim. The bias shifts the win probabilities $p^{*}_{At}$, and hence the \emph{level} of each salience, but preserves the alignment property.

Consider a three-battle contest with equal budgets $W_A = W_B = 1$, common Tullock discriminatory power $r_t = 1$ for all $t$, and cost ratios $c_1=1$, $c_2 = 4$, $c_3 = 2$. Team~$A$'s player is more able in battles 2 and 3, and the two sides are evenly matched in battle 1. Since $k = 1$, the equilibrium battle-winning probabilities~\eqref{eq:pStar} reduce to $p^*_{At} = c_t/(1+c_t)$. To illustrate the second step of the computation, battle~1's pivotality is the probability that battles~2 and~3 split evenly: $\theta^{*}(1)=p^{*}_{A2}\,p^{*}_{B3}+p^{*}_{B2}\,p^{*}_{A3}=\tfrac{4}{5}\cdot\tfrac{1}{3}+\tfrac{1}{5}\cdot\tfrac{2}{3}=\tfrac{2}{5}$; the other pivotalities follow in the same way. \autoref{tab:example} reports the resulting equilibrium computed from $S_t=\theta^{*}(t)\,r_t\,p^{*}_{At}p^{*}_{Bt}$.

\begin{table}[!htbp]
\centering
\caption{Equilibrium in the three-battle example.}
\label{tab:example}
\medskip
\begin{tabular}{ccccccc}
\hline
$t$ & $c_t$ & $p^*_{At}$ & $\theta^*(t)$ & $R_t=r_t\,p^{*}_{At}p^{*}_{Bt}$ & $S_t=\theta^*(t)R_t$ & $v^*_{At} = v^*_{Bt}$ \\
\hline
1 & 1 & $1/2$  & $2/5$  & $1/4 = 0.250$ & $1/10=0.100$ & $45/131 \approx 0.343$ \\
2 & 4 & $4/5$  & $1/2$  & $4/25 = 0.160$ & $2/25=0.080$ & $36/131 \approx 0.275$ \\
3 & 2 & $2/3$  & $1/2$  & $2/9 \approx 0.222$ & $1/9\approx0.111$ & $50/131 \approx 0.382$ \\
\hline
\end{tabular}
\end{table}

Several features of the equilibrium are worth noting. First, battle~3 receives the largest prize share (0.382), even though it is neither the most nor the least balanced battle. It earns the largest share because it combines high pivotality ($\theta^*(3) = 1/2$, since the other two battles split with substantial probability) with intermediate responsiveness ($R_3=2/9$). Second, battle~1 has the highest responsiveness ($R_1=1/4$), owing to its perfect balance, yet receives a smaller share than battle~3 because its pivotality is lower ($\theta^*(1) = 2/5$, depressed by team~$A$'s advantages in the other two battles). Third, battles~2 and~3 share the same high pivotality ($1/2$) but differ in responsiveness ($4/25$ versus $2/9$), illustrating that pivotality alone does not determine budget shares.

One natural special case is worth recording.

\begin{corollary}\label{cor:sameC}
If every battle is \textbf{effectively balanced} at the equilibrium ratio, i.e.\ $p_{At}^*=p_{Bt}^*=1/2$ (equivalently $h_t(-\log k)=0$) for all $t\in\mathcal{N}$, then both the closeness term and the pivotality term are constant across battles, so the equilibrium allocation is governed solely by the local discriminatory powers: $v_{it}^{*}=W_i\cdot h_t'(-\log k)/\sum_{\tau}h_\tau'(-\log k)$. For the leading example this is the case $c_t=k^{r_t}$, giving $v_{it}^{*}=W_i\cdot r_t/\sum_{\tau}r_{\tau}$.
\end{corollary}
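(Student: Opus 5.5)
The plan is to specialize \autoref{thm:main} directly. Under effective balance, the three components of salience simplify one by one. I would not re-derive anything from the first-order conditions.

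First, take the hypothesis $p_{At}^*=p_{Bt}^*=1/2$ for all $t$. By \eqref{eq:pStar_general}, $p_{At}^*=1/(1+e^{-h_t(-\log k)})$, so this hypothesis is equivalent to $h_t(-\log k)=0$, which is the parenthetical equivalence. The closeness term is then $p_{At}^*p_{Bt}^*=1/4$ for every battle.

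Second, compute the pivotality from \eqref{eq:eq-pivot}. With every $p^*_{is}=1/2$, each summand equals $2^{-2N}$. The number of coalitions $w_i\subseteq\mathcal N_{-t}$ with $|w_i|=N$ is $\binom{2N}{N}$. Hence $\theta^*(t)=\binom{2N}{N}2^{-2N}$, which does not depend on $t$.

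Third, substitute into \autoref{def:salience}. This gives $S_t=\tfrac14\binom{2N}{N}2^{-2N}\,h_t'(-\log k)$. The common positive constant cancels in the normalization of \autoref{thm:main}, so $v_{it}^*=W_i\,h_t'(-\log k)/\sum_\tau h_\tau'(-\log k)$. Existence and uniqueness are inherited from \autoref{thm:main}.

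For the Tullock case, \eqref{eq:pStar} gives $p^*_{At}=c_t/(c_t+k^{r_t})$. This equals $1/2$ exactly when $c_t=k^{r_t}$. Since $h_t'\equiv r_t$, the formula becomes $W_i r_t/\sum_\tau r_\tau$.

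No step is a real obstacle; the only point needing care is checking that the pivotality is truly battle-independent, which the counting in the second step handles.
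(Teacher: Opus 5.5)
Your proposal is correct and follows the paper's route: the paper offers the corollary as a special case of \autoref{thm:main} without a separate proof, and it is read off from the salience formula exactly as you do, with closeness $1/4$ and pivotality $\binom{2N}{N}2^{-2N}$ for every battle, so both cancel in the normalization. Your Tullock specialization, via $p_{At}^*=c_t/(c_t+k^{r_t})$ and $h_t'\equiv r_t$, also matches.
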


When each battle is effectively balanced, the more discriminatory the battle, the larger the prize share it commands. In the equal-budget case ($k=1$), the leading-example condition $c_t=k^{r_t}$ reduces to $c_t=1$.

\subsection{Proof of \autoref{thm:main}}

The following condition plays a key role in the proof; it is established in \autoref{subsec:logconcavity}.

\paragraph{Log-concavity condition.}
Under \autoref{ass:battle_technology}, for $i\in\{A,B\}$ and any fixed $\mathbf{v}_{j}\in\textnormal{int}(\Delta_j)$, $\mathbf{v}_i\mapsto\textnormal{Prob}_i$ is log-concave on $\textnormal{int}(\Delta_i)$.

\begin{remark}\label{rmk:nocomposition}
One might hope to establish quasiconcavity of $\textnormal{Prob}_i$ in $\mathbf{v}_i$ by a composition argument: each $v_{it}\mapsto p_{it}(v_{it},\cdot)$ is concave, and if $\textnormal{Prob}_i$ were quasiconcave in the vector of battle-winning probabilities $\mathbf{p}_i$, quasiconcavity in $\mathbf{v}_i$ would follow. However, $\textnormal{Prob}_i$ is \textbf{not} quasiconcave in $\mathbf{p}_i$, even in the three-battle case. To see this, take $\mathbf{p}_A=(0.999,0.999,0.001)$ and $\mathbf{p}_A'=(0.001,0.999,0.999)$: both give $\textnormal{Prob}_A\approx 0.998$, but their midpoint $(0.5,0.999,0.5)$ gives $\textnormal{Prob}_A\approx 0.7495$, violating quasiconcavity. Nor is $\textnormal{Prob}_i$ generally concave in $\mathbf{v}_i$.
\end{remark}

We now prove \autoref{thm:main}. Throughout, suppose \autoref{ass:battle_technology} holds.

\medskip
\noindent\textbf{Existence.}
Define the candidate allocations $\hat v_{At}\triangleq W_A\cdot S_t/\sum_{\tau}S_{\tau}$ and $\hat v_{Bt}\triangleq W_B\cdot S_t/\sum_{\tau}S_{\tau}$ for $t\in\mathcal{N}$. Since each $S_t>0$, the profile $(\hat{\mathbf{v}}_A,\hat{\mathbf{v}}_B)\in\textnormal{int}(\Delta_A)\times\textnormal{int}(\Delta_B)$. Note that $\hat v_{Bt}=k\,\hat v_{At}$ where $k=W_B/W_A$. We now verify that the candidate satisfies the first-order conditions. Given \eqref{eq:dpAt}, the Lagrangian first-order condition $\theta(t)\cdot\partial p_{At}/\partial v_{At}=\lambda_A$ becomes $S_t/\hat v_{At}=\lambda_A$ for each $t\in\mathcal{N}$. Since $\hat v_{At}=W_A\cdot S_t/\sum_{\tau}S_{\tau}$, this gives $\lambda_A=\sum_{\tau}S_{\tau}/W_A$ for every $t$. Hence $\hat{\mathbf{v}}_A$ satisfies the Lagrangian first-order condition for team $A$ given the opponent's allocation $\hat{\mathbf{v}}_B$. For team $B$, the symmetric derivative is $\partial p_{Bt}/\partial v_{Bt}=h_t'(-\log k)\,p_{At}^{*}p_{Bt}^{*}/v_{Bt}$ at the same proportional ratio, so $\hat v_{Bt}=W_B\cdot S_t/\sum_{\tau}S_{\tau}$ yields $\lambda_B=\sum_{\tau}S_{\tau}/W_B$.

Suppose the \textbf{log-concavity condition} holds, so that $\log\textnormal{Prob}_A(\cdot,\hat{\mathbf{v}}_B)$ is concave on $\textnormal{int}(\Delta_A)$. Because $\log$ is strictly increasing and $\textnormal{Prob}_A>0$ on the interior, the stationary points of $\textnormal{Prob}_A(\cdot,\hat{\mathbf{v}}_B)$ and of $\log\textnormal{Prob}_A(\cdot,\hat{\mathbf{v}}_B)$ coincide. Hence, any point satisfying the constrained first-order conditions for $\textnormal{Prob}_A$ is a critical point of the concave function $\log\textnormal{Prob}_A$, and therefore a global maximizer of $\log\textnormal{Prob}_A$---equivalently, of $\textnormal{Prob}_A$. Since $\hat{\mathbf{v}}_A$ satisfies the Lagrangian first-order conditions for team~$A$ given $\hat{\mathbf{v}}_B$, it follows that $\hat{\mathbf{v}}_A$ is a global maximizer of $\textnormal{Prob}_A(\cdot,\hat{\mathbf{v}}_B)$ on $\textnormal{int}(\Delta_A)$.

We then show that no boundary strategy $\mathbf{v}'_A\in\Delta_A\setminus\textnormal{int}(\Delta_A)$ yields a higher payoff. Since $\hat{\mathbf{v}}_B\in\textnormal{int}(\Delta_B)$, the opponent assigns $\hat v_{Bt}>0$ to every battle. By the boundary regularity in \autoref{ass:battle_technology}, $h_t(u_t)=\ell_t\bigl(u_t+\log(c_{Bt}/c_{At})\bigr)\to-\infty$ as $u_t\to-\infty$, so the reduced form $p_{At}(\cdot\,,\hat v_{Bt})$ extends continuously to $v_{At}=0$ with $p_{At}(0,\hat v_{Bt})=0$, matching the one-sided boundary extension; hence it is continuous on $[0,W_A]$; for the leading example this is immediate from \eqref{eq:csf_tullock}, whose denominator is bounded below by $(c_{At}\hat v_{Bt})^{r_t}>0$. Consequently, $\textnormal{Prob}_A(\cdot\,,\hat{\mathbf{v}}_B)$ is continuous on the entire closed simplex $\Delta_A$. Now take any $\mathbf{v}'_A\in\Delta_A\setminus\textnormal{int}(\Delta_A)$ and choose a sequence $\mathbf{v}_A^{(n)}\in\textnormal{int}(\Delta_A)$ with $\mathbf{v}_A^{(n)}\to\mathbf{v}'_A$. By global optimality of $\hat{\mathbf{v}}_A$, $\textnormal{Prob}_A(\mathbf{v}_A^{(n)},\hat{\mathbf{v}}_B)\le\textnormal{Prob}_A(\hat{\mathbf{v}}_A,\hat{\mathbf{v}}_B)$ for every $n$. By continuity, $\textnormal{Prob}_A(\mathbf{v}'_A,\hat{\mathbf{v}}_B)=\lim_n\textnormal{Prob}_A(\mathbf{v}_A^{(n)},\hat{\mathbf{v}}_B)\le\textnormal{Prob}_A(\hat{\mathbf{v}}_A,\hat{\mathbf{v}}_B)$.

The preceding argument shows that $\hat{\mathbf{v}}_A$ is a global maximizer of $\textnormal{Prob}_A(\cdot,\hat{\mathbf{v}}_B)$ on $\Delta_A$. By the team-$B$ version of the \textbf{log-concavity condition}, the function $\log\textnormal{Prob}_B(\hat{\mathbf{v}}_A,\cdot)$ is concave on $\textnormal{int}(\Delta_B)$. An identical argument applied to team~$B$ shows that $\hat{\mathbf{v}}_B$ is a global maximizer of $\textnormal{Prob}_B(\hat{\mathbf{v}}_A,\cdot)$ on $\Delta_B$. Therefore $(\hat{\mathbf{v}}_A,\hat{\mathbf{v}}_B)$ is a pure-strategy Nash equilibrium.

\bigskip
\noindent\textbf{Uniqueness.}
Let $(\mathbf{v}_A^{*},\mathbf{v}_B^{*})$ be any pure-strategy Nash equilibrium. By \autoref{lem:no_boundary}, it is interior; by \autoref{pro:PPP}, $v_{Bt}^{*}=k\,v_{At}^{*}$ for all $t$. The proportional-prize condition pins down every battle-winning probability as a function of primitives alone, so the Lagrangian first-order conditions reduce to $v_{At}^{*}\cdot\lambda_A=S_t$ for each $t$, where $S_t$ depends only on primitives and $k$. Summing over $t$ and applying the budget constraint determines $\lambda_A$ and hence each $v_{At}^{*}$ uniquely. Any equilibrium must therefore coincide with the candidate $(\hat{\mathbf{v}}_A,\hat{\mathbf{v}}_B)$.

\subsection{Proof of Log-concavity Condition}\label{subsec:logconcavity}

This section is devoted to proving the following result.

\begin{theorem}\label{thm:contprob}
Under \autoref{ass:battle_technology}, the \textbf{log-concavity condition} holds for both $i\in\{A,B\}$ and every $N \geq 1$.
\end{theorem}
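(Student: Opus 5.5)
The plan is to reduce the statement to an inequality about the majority function of independent Bernoulli trials. Fix team $A$; team $B$ is symmetric, replacing $h_t(u)$ by $-h_t(-u)$, which satisfies the same curvature bounds. Fix $\mathbf{v}_B$ in the interior. By \autoref{lem:csf}, each $p_t\triangleq p_{At}$ depends on $\mathbf{v}_A$ only through $v_t\triangleq v_{At}$, via $p_t=\bigl(1+e^{-h_t(\log v_t-\log v_{Bt})}\bigr)^{-1}$. Write $\textnormal{Prob}_A=F(p_1,\dots,p_{2N+1})$, where $F$ is the multiaffine majority polynomial in \eqref{eq:Probi}. Its first partials are $F_t=\theta(t)>0$, as in \autoref{lem:pivot}. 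Its second partials are
\[
F_{st}=a^{(st)}_{N-1}-a^{(st)}_N,
\]
where $a^{(st)}_m$ is the probability that team $A$ wins exactly $m$ of the battles in $\mathcal N\setminus\{s,t\}$. The chain rule then gives
\[
\nabla^2_{\mathbf v}\textnormal{Prob}_A=DHD+\operatorname{diag}\bigl(F_t\,p_t''\bigr),\qquad D=\operatorname{diag}(p_t'),
\]
with $H=\nabla^2_{\mathbf p}F$ and $g=\nabla_{\mathbf p}F$. Log-concavity is equivalent to $F\,\nabla^2_{\mathbf v}F-\nabla_{\mathbf v}F\,\nabla_{\mathbf v}F^{\top}\preceq 0$. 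It is enough to prove this on all of $\mathbb R^{2N+1}_{>0}$, which is stronger than proving it on the simplex.

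The first step uses the curvature condition \eqref{eq:curvature} to bound $p_t''$. From the footnote computation,
\[
p_t'=\frac{p_tq_th_t'}{v_t},\qquad p_t''=\frac{p_tq_t}{v_t^2}\bigl[(q_t-p_t)h_t'^2+h_t''-h_t'\bigr],\qquad q_t=1-p_t .
\]
The upper bound $h_t''\le h_t'(1-h_t')$ yields $p_t''\le -2p_t^2q_th_t'^2/v_t^2=-2(p_t')^2/q_t$. So the own-battle concavity is strictly quantified, not just nonpositive. Substituting $x=D\mathbf y$, it then suffices to prove the following purely combinatorial statement. For every $\mathbf p\in(0,1)^{2N+1}$ and every $x\in\mathbb R^{2N+1}$,
\[
F\,x^{\top}Hx-(g^{\top}x)^2-2F\sum_t\frac{\theta(t)}{q_t}x_t^2\le 0. \tag{$\ast$}
\]
Note that the technology has disappeared from $(\ast$), which is consistent with the curvature condition being the only input. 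Remark~\ref{rmk:nocomposition} shows that the penalty term cannot be dropped: $F$ alone is not quasiconcave in $\mathbf p$.

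The second step, and the main obstacle, is proving $(\ast)$. All entries are Poisson-binomial quantities, and I would exploit two facts. First, the Poisson-binomial pmf is log-concave, so Newton's inequalities give $a_m^2\ge a_{m-1}a_{m+1}$ with the usual improvement factors. Second, conditioning on a single battle gives recursions of the form $\theta(t)=q_s a^{(st)}_N+p_s a^{(st)}_{N-1}$. I would decompose $x^{\top}Hx$ by conditioning on pairs and express $F$ and $g$ through the pmf of the other battles. The aim is to show that the indefinite part of $F H$ is dominated by $(g^{\top}x)^2+2F\sum_t\theta(t)x_t^2/q_t$, perhaps via Cauchy--Schwarz after writing $H$ as a difference of two positive semidefinite Gram-type matrices built from $a_{N-1}$ and $a_N$.

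If a direct attack fails, the fallback is induction on the number of battles. Adding two battles at a time, I would write $F_{N+1}$ as a combination of $F_N$ and the pivot terms, and propagate the inequality using log-concavity of the pmf in the index $m$. The base case $N=1$ would be checked by explicit computation with three battles. The delicate point is that $(\ast)$ must hold uniformly, including near $p_t\to1$ where $\theta(t)/q_t$ is large. This is exactly where I expect the factor $2$ from the curvature bound to be needed and to be tight.
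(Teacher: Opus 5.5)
Your reduction is correct, but it stops exactly where the real work begins: the technology-free inequality $(\ast)$ is never proved.

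\paragraph{The reduction is right.}
In different coordinates, your first step is exactly the paper's reduction. Your bound $p_t''\le-2(p_t')^2/q_t$ uses the upper half of \eqref{eq:curvature} once, just as the paper's relaxation $M\succeq M^0$ does. Set $w_t\triangleq x_t/(p_tq_t)$, use $G_t=\theta(t)p_tq_t/F$, and use the log-odds Hessian $\Sigma_A-\operatorname{diag}(p_tq_t)$. A direct computation then shows that the left-hand side of $(\ast)$ equals $-F^2\,w^\top M^0 w$, where $M^0=\operatorname{diag}(p_tq_t)-\Sigma_A+\operatorname{diag}(G_t)$. So $(\ast)$ is precisely \autoref{prop:covdom_all}. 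Your team-$B$ reflection $h_t\mapsto-h_t(-\,\cdot\,)$, which exchanges the two halves of \eqref{eq:curvature}, also matches the paper.

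\paragraph{The gap: $(\ast)$ is unproved.}
$(\ast)$ carries the entire difficulty of the theorem, and what follows it in your proposal is a list of tools rather than an argument. Log-concavity of the Poisson-binomial pmf (Newton's inequalities) gives only sign information. It shows that the Tur\'an defect $T_N=\Phi_{N-1}^2-\Phi_{N-2}\Phi_N$ is nonnegative, so by \eqref{eq:general_offdiag_M0} the off-diagonal entries of $M^0$ are nonnegative, i.e., conditional covariances are nonpositive. But a symmetric matrix with positive diagonal and nonnegative off-diagonal entries need not be PSD, so a quantitative comparison is required.

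There is no room for a lossy comparison. Take $p_t=p\to0$ on a set $S$ of $N+1$ battles, with the other probabilities fixed. Conditional on victory, team~$A$ then wins exactly one battle of $S$, uniformly. The $S\times S$ block of $M^0$ tends to $(N+1)^{-2}J_{N+1}$, which is singular and, for $N\ge2$, not diagonally dominant. So any domination or Cauchy--Schwarz step has to be exact in that regime.

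The other suggested routes are not specified either. The ``difference of Gram matrices plus Cauchy--Schwarz'' step is never written down. The induction fallback does not say how a $(2N+1)$-dimensional matrix inequality would propagate when every entry changes as two battles are added. A hand check at $N=1$ does not reach general $N$.

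\paragraph{What the paper does here.}
The paper closes this step with a certificate valid for all $N$ and all $\mathbf p\in(0,1)^{2N+1}$:
\begin{enumerate}
    \item Pass to loss odds $\phi_t=q_t/p_t$.
    \item Conjugate by $\Phi_N(\boldsymbol\phi)\operatorname{diag}(1+\phi_t)$. This gives a polynomial matrix $\widetilde M(\boldsymbol\phi)$ whose entries are explicit in $e_N$, $\Phi_N$ and $T_N$.
    \item Prove that every monomial coefficient matrix $C_\alpha$ is PSD. Permutation symmetry collapses each $C_\alpha$ to a two-block matrix with five binomial parameters, so PSD-ness reduces to three scalar inequalities.
\end{enumerate}
Some such global argument is the missing idea in your proposal.
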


Recall that the model and the equilibrium characterization in \autoref{thm:main} allow an arbitrary odd number $2N+1$ of battles. Establishing the log-concavity condition for every $N\ge1$ is therefore the central technical task of the paper.

The Hessian of a manager's log team-winning probability reflects two forces. The first is the own-battle curvature generated when an additional reward changes the local winning probability. The curvature condition in \autoref{ass:battle_technology} makes this component sufficiently negative. The second is the dependence created by conditioning on team victory: although battle outcomes are independent before conditioning, they become correlated within the event that the team wins a majority. The main question is whether this conditional dependence can overturn the own-battle curvature. We show that it cannot under majority rule.

The proof proceeds in three steps. First, \autoref{lemma:Hessian} expresses the Hessian condition as positive semidefiniteness of a matrix $M$ that combines own-battle curvature with conditional covariance. Second, the curvature condition bounds $M$ below by a technology-free matrix $M^0$. Third, \autoref{prop:covdom_all} proves $M^0\succeq0$ by converting it into a polynomial matrix and showing that every coefficient matrix is positive semidefinite. We present the argument for team~$A$ and then explain how the lower half of the curvature condition yields the corresponding result for team~$B$.

\medskip

We now start the proof of \autoref{thm:contprob}; throughout, suppose \autoref{ass:battle_technology} holds. For each $t\in\mathcal{N}=\{1, \dots, 2N+1\}$, let $X_t \sim \textnormal{Bernoulli}(p_t)$ be independent indicators of team $A$ winning battle $t$, where $p_t = p_{At}(v_{At}, v_{Bt})$ is the current battle-winning probability. Let
\[
A \triangleq \Bigl\{ \mathbf{x} \in \{0,1\}^{2N+1} : \sum_{t} x_t \geq N+1 \Bigr\}
\]
denote the majority event, and write $f(\mathbf{v}_A) \triangleq \textnormal{Prob}_A(\mathbf{v}_A, \mathbf{v}_B) = \Pr(A)$ for team $A$'s probability of winning the overall contest. Our goal is to show that $\log f$ is concave in $\mathbf{v}_A$.

\paragraph{From log-concavity to a matrix condition.}
Log-concavity of $f$ in $\mathbf{v}_A$ is, by definition, the negative semidefiniteness of the Hessian $H$ of $\log f$. Define
\[
G_t \triangleq \mathbb{E}[X_t \mid A] - p_t,
\]
which measures how much conditioning on team $A$'s overall victory raises the marginal probability of winning battle $t$. Let $D = \mathrm{diag}(p_t(1-p_t))$ collect the unconditional Bernoulli variances, let $\Sigma_A = \operatorname{Cov}(\mathbf{X} \mid A)$ denote the conditional covariance matrix, and set
\[
M \triangleq D - \Sigma_A + \mathrm{diag}\!\Bigl(G_t\,\tfrac{h_t'-h_t''}{(h_t')^2}\Bigr),
\]
with $p_t=p_{At}(v_{At},v_{Bt})$ and each $h_t',h_t''$ evaluated at the current log prize-ratio $u_t=\log(v_{At}/v_{Bt})$ (the point at which concavity is checked). For the leading example $h_t'\equiv r_t$, $h_t''\equiv 0$, so the diagonal correction is $G_t/r_t$ and $M=D-\Sigma_A+\mathrm{diag}(G_t/r_t)$.

\begin{lemma}[Negative semidefiniteness of $H$]\label{lemma:Hessian}
$H \preceq 0$ if and only if $M \succeq 0$.
\end{lemma}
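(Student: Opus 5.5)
The plan is to reparametrize $f$ by the battle log-odds. For team~$A$, write $\eta_t\triangleq\log\frac{p_t}{1-p_t}=h_t(u_t)$, with $u_t=\log v_{At}-\log v_{Bt}$ and $\mathbf{v}_B$ held fixed. Then view $\log f$ as the composition $\mathbf{v}_A\mapsto\boldsymbol\eta\mapsto\varphi(\boldsymbol\eta)\triangleq\log\Pr(A)$. The point of this choice is that, under the product Bernoulli law with natural parameters $\eta_t$, the indicator vector $\mathbf X$ forms an exponential family. Conditioning on the event $A$ then yields a tilted family whose cumulant function is
\[
\log\textstyle\sum_{\mathbf x\in A}e^{\boldsymbol\eta\cdot\mathbf x}\;-\;\sum_t\log(1+e^{\eta_t}).
\]
Hence $\varphi$ equals this difference of log-partition functions.

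From this representation, differentiate once and twice in $\boldsymbol\eta$. The first derivative is
\[
\partial\varphi/\partial\eta_t=\mathbb E[X_t\mid A]-p_t=G_t .
\]
The second derivative is
\[
\nabla^2_{\boldsymbol\eta}\varphi=\operatorname{Cov}(\mathbf X\mid A)-\mathrm{diag}(p_t(1-p_t))=\Sigma_A-D.
\]
Both follow because derivatives of a log-partition function give conditional means and covariances. This step is routine, and it can also be checked directly using the decomposition $f=\Pr(\text{$\ge N+1$ wins in }\mathcal N_{-t})+\theta(t)p_t$ from \autoref{lem:pivot}. That decomposition gives $G_t=p_t(1-p_t)\theta(t)/f$.

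Next apply the chain rule through $\eta_t=h_t(\log v_{At}-\log v_{Bt})$. Each $\eta_t$ depends on $v_{At}$ alone, so the Jacobian is diagonal:
\[
J=\mathrm{diag}(\eta_t'),\qquad \eta_t'=h_t'/v_{At}>0,\qquad \eta_t''=(h_t''-h_t')/v_{At}^2 .
\]
The Hessian in $\mathbf v_A$ is therefore
\[
H=J(\Sigma_A-D)J+\mathrm{diag}\bigl(G_t\,\eta_t''\bigr).
\]
Writing $\eta_t''=(\eta_t')^2\,(h_t''-h_t')/(h_t')^2$ lets us factor $J$ out of both terms:
\[
H=J\Bigl[\Sigma_A-D+\mathrm{diag}\bigl(G_t\tfrac{h_t''-h_t'}{(h_t')^2}\bigr)\Bigr]J=-\,J M J .
\]

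Finally, $J$ is a diagonal matrix with strictly positive entries, using $h_t'>0$ from \autoref{lem:csf} and $v_{At}>0$ on the interior. So $JMJ$ is congruent to $M$, and Sylvester's law of inertia gives $H\preceq0\iff JMJ\succeq0\iff M\succeq0$. The only step needing care is the bookkeeping of the diagonal second-derivative term. One must confirm that the curvature of $v\mapsto\eta_t$ enters multiplied by $G_t$, the first derivative of $\varphi$, with the sign displayed in $M$. Beyond that I expect no real obstacle: the substantive difficulty, proving $M\succeq0$, is deferred to the later steps of the proof of \autoref{thm:contprob}. The analogue for team~$B$ follows by swapping roles, with $u_t$ replaced by its negative, which flips the sign of $h_t''$.
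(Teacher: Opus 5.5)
Your proposal is correct and matches the paper's proof essentially step for step. Both use the log-odds reparametrization with the exponential-family identities $\partial\log f/\partial\eta_t=G_t$ and $\nabla^2_{\boldsymbol\eta}\log f=\Sigma_A-D$, then the diagonal chain rule through $\eta_t=h_t(u_t)$, and finally the congruence $H=-JMJ$ with $J=\mathrm{diag}(h_t'/v_{At})$, which is the paper's $A_d$. One small fix: cite $\ell_t'>0$ from \autoref{ass:battle_technology} for the strict positivity of $J$'s entries, since the strict monotonicity of $h_t$ in \autoref{lem:csf} does not by itself rule out isolated zeros of $h_t'$.
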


The proof rests on a convenient reparametrization. Working in the log-odds $\eta_t = \log\bigl(p_t / (1-p_t)\bigr)$, the joint distribution of $\mathbf{X}$ becomes a canonical exponential family, and standard exponential-family identities yield
\[
\frac{\partial \log f}{\partial \eta_t} = G_t,
\qquad
\frac{\partial^2 \log f}{\partial \eta_t \, \partial \eta_s} = \operatorname{Cov}(X_t, X_s \mid A) - \delta_{ts}\, p_t(1-p_t).
\]
The first identity has a clean interpretation: raising the log-odds of winning battle $t$ improves team $A$'s overall win probability at a rate proportional to $G_t$. The second identity says that the curvature of $\log f$ along $\boldsymbol{\eta}$ is governed by how battle outcomes co-move conditional on victory, net of each Bernoulli's intrinsic variance $p_t(1-p_t)$; the latter contributes only on the diagonal, as recorded by the Kronecker delta $\delta_{ts}$.

To translate these derivatives back into the original prize variables, we use the HD-0 structure $\eta_t = h_t(u_t)$ with $u_t=\log v_{At}-\log v_{Bt}$, which gives $\partial \eta_t / \partial v_{At} = h_t'/v_{At}$ and $\partial^2 \eta_t / \partial v_{At}^2 = (h_t''-h_t')/v_{At}^2$. Setting $A_d = \mathrm{diag}(h_t'/v_{At})$, the chain rule produces
\[
H = -A_d(D-\Sigma_A)A_d - \mathrm{diag}\!\Bigl(G_t\,\tfrac{h_t'-h_t''}{v_{At}^2}\Bigr).
\]
The change of variables $\widetilde{\mathbf{u}} = A_d \mathbf{u}$---invertible since $h_t'>0$ by \autoref{ass:battle_technology}---then converts the quadratic form $\mathbf{u}^\top H \mathbf{u}$ into $-\widetilde{\mathbf{u}}^\top M \, \widetilde{\mathbf{u}}$, so that $H \preceq 0$ if and only if $M \succeq 0$. For the leading example, $A_d=RV^{-1}$ with $R=\mathrm{diag}(r_t)$, $V=\mathrm{diag}(v_{At})$, recovering $H=(RV^{-1})(\Sigma_A-D)(RV^{-1})-RV^{-2}\mathrm{diag}(G_t)$.

\paragraph{Reduction to a technology-free matrix.}
The matrix $M$ is still inconvenient to analyze directly because its diagonal correction depends on the technology through $h_t',h_t''$. To eliminate this dependence, we bound $M$ below by a simpler, technology-free matrix. The key observation is that $G_t \geq 0$: conditioning on team $A$'s overall victory can only raise the marginal probability of winning any individual battle. Intuitively, the event $A$ is a ``good news'' event for every $X_t$, since it requires that at least $N+1$ of the indicators turn up favorable.\footnote{Formally, by Bayes' rule and the independence of $X_t$ from $\mathbf{X}_{-t}$, $\mathbb{E}[X_t \mid A] = \beta p_t / [\beta p_t + \gamma(1-p_t)]$, where $\beta = \Pr(\sum_{s \neq t} X_s \geq N)$ and $\gamma = \Pr(\sum_{s \neq t} X_s \geq N+1) \leq \beta$. Replacing $\gamma$ by $\beta$ enlarges the denominator and yields $\mathbb{E}[X_t \mid A] \geq p_t$.\label{footnote}} The relevant ingredient from the \textbf{curvature condition}~\eqref{eq:curvature} is its \emph{upper} bound $\ell_t''\le\ell_t'(1-\ell_t')$: since $h_t(u)=\ell_t\bigl(u+\log(c_{Bt}/c_{At})\bigr)$ shares slope and curvature with $\ell_t$, this is equivalent to $h_t'-h_t''\ge(h_t')^2$, i.e.\ $(h_t'-h_t'')/(h_t')^2 \geq 1$. Combined with $G_t\ge0$, this yields $G_t(h_t'-h_t'')/(h_t')^2 \geq G_t$, so the diagonal of $M$ pointwise dominates that of the technology-free matrix
\[
M^{0} \triangleq D - \Sigma_A + \mathrm{diag}(G_t) ,
\]
and consequently $M^0 \preceq M$. It therefore suffices to establish $M^0 \succeq 0$, which is our second key lemma. In the leading example, the \textbf{curvature condition} is $r_t\le 1$, and the relaxation is the elementary $G_t/r_t\ge G_t$.

\begin{lemma}[Positive semidefiniteness of $M^0$]\label{prop:covdom_all}
$M^0 \succeq 0$.
\end{lemma}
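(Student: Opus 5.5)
The plan is to make the matrix inequality fully explicit in terms of the independent Bernoulli model and then reduce it to nonnegativity of a polynomial. Write $f=\Pr(A)$, $a_t=\Pr(X_t=1,A)$ and $b_{ts}=\Pr(X_t=X_s=1,A)$ for $t\neq s$, so that $\mathbb{E}[X_t\mid A]=a_t/f$. A direct computation of the entries of $M^0=D-\Sigma_A+\mathrm{diag}(G_t)$ gives a diagonal entry $p_t(1-p_t)-q_t(1-q_t)+q_t-p_t=q_t^2-p_t^2$ (with $q_t=a_t/f$) and off-diagonal entries $q_tq_s-b_{ts}/f$. Hence, for any $\mathbf{y}\in\mathbb{R}^{2N+1}$,
\[
f^2\,\mathbf{y}^\top M^0\mathbf{y}
=\Bigl(\textstyle\sum_t y_ta_t\Bigr)^2-f\sum_{t\neq s}y_ty_sb_{ts}-f^2\sum_t y_t^2p_t^2 .
\]
Introducing an independent copy $\mathbf{X}'$ with majority event $A'$, this equals
\[
\mathbb{E}\Bigl[\mathbf 1_A\mathbf 1_{A'}\Bigl(\textstyle\sum_{t,s}y_ty_sX_tX'_s-\sum_{t\neq s}y_ty_sX_tX_s-\sum_t y_t^2p_t^2\Bigr)\Bigr],
\]
with the middle term symmetrized between the two copies. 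The goal is to show that this expression is nonnegative.

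Next I will pass to odds variables $x_t=p_t/(1-p_t)>0$ and multiply by $\prod_t(1-p_t)^2$. Every probability above then becomes a polynomial in $\mathbf{x}$ with nonnegative integer coefficients indexed by winning coalitions. Concretely, $f\prod(1-p)^{-1}=\sum_{|w|\ge N+1}x^w$, $a_t$ sums over $w\ni t$, and $b_{ts}$ sums over $w\supseteq\{t,s\}$; the term $p_t^2$ is handled by absorbing the factor $(1-p_t)^{-2}$ into the coalition sums. The quadratic form thus becomes $\sum_{\alpha}x^\alpha\,\mathbf{y}^\top C_\alpha\mathbf{y}$, where $\alpha$ ranges over exponent vectors in $\{0,1,2\}^{2N+1}$ arising from pairs of coalitions $(w,w')$ with $x^wx^{w'}=x^\alpha$. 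Since $x^\alpha>0$, it suffices to prove $C_\alpha\succeq0$ for every $\alpha$. This is the route the text preceding the statement indicates.

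For a fixed $\alpha$, let $T_2=\{t:\alpha_t=2\}$ (in both coalitions), $T_1=\{t:\alpha_t=1\}$ (in exactly one) and $T_0$ the rest. The pairs $(w,w')$ contributing to $\alpha$ correspond to splittings of $T_1$ into two parts $S$, $T_1\setminus S$ such that $|T_2|+|S|\ge N+1$ and $|T_2|+|T_1\setminus S|\ge N+1$. The entries of $C_\alpha$ then depend only on which of $T_0,T_1,T_2$ the indices $t,s$ lie in, together with the counts of admissible splittings (numbers of subsets of $T_1$ of sizes within a window, possibly with one or two prescribed elements). I therefore expect $C_\alpha$ to be a block matrix with constant blocks plus diagonal terms, of the form $\mathrm{diag}(d)+$ a low-rank part. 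PSD-ness should then reduce to a few scalar inequalities between binomial-type counts. Cross-terms $t\in T_1$, $s\in T_2$ will be controlled by Cauchy--Schwarz, and the remaining case analysis will use the majority threshold through the symmetric window $|S|\in[N+1-|T_2|,\,|T_1|-N-1+|T_2|]$.

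The main obstacle is this last step: getting the exact form of each $C_\alpha$ right, especially how the $-\sum_t y_t^2p_t^2$ term, which only sees $T_2$-type contributions, distributes across $\alpha$, and verifying the resulting binomial inequalities for all $N$. My first attempt is to show each $C_\alpha$ is diagonally dominant after the block reduction, using the symmetry $S\leftrightarrow T_1\setminus S$ of the window. If that fails, I will instead write $C_\alpha$ as a Gram matrix $\sum_S \mathbf{c}_S\mathbf{c}_S^\top$ over admissible splittings. The positivity of $G_t$ (footnote~\ref{footnote}) serves as a sanity check on the diagonal.
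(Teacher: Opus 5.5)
\paragraph{A gap at the polynomialization step.} Your formula for $f^2\,\mathbf y^\top M^0\mathbf y$ and the two-copy representation are correct. The overall plan (clear denominators, expand in monomials, show every coefficient matrix is PSD) is also the paper's. The gap is the step that is supposed to make everything polynomial.

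Divide (not multiply) by $\prod_t(1-p_t)^2$. The first two terms then become sums over pairs of winning coalitions. The last term, however, becomes $y_t^2x_t^2F^2/(1+x_t)^2$ with $F=\sum_{|w|\ge N+1}x^w$. Since $F$ is not divisible by $1+x_t$ (for $N=1$, setting $x_1=-1$ leaves $F=-(x_2+x_3)$), the form is not a polynomial in $\mathbf x$ for fixed $\mathbf y$. No ``absorption into the coalition sums'' rescues this. You must rescale $\mathbf y$ coordinatewise, i.e.\ pass to a diagonal congruence.

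That choice is not innocuous, because coefficientwise positive semidefiniteness is not preserved under congruence by non-constant diagonal polynomial matrices. Take the natural choice $y_t=(1+x_t)\tilde y_t$ with $N=1$. The $(1,2)$ entry is $x_1x_2x_3^2(1+x_1)(1+x_2)$. Every monomial of the $(1,1)$ and $(2,2)$ entries contains $x_1^2$, resp.\ $x_2^2$. So the coefficient matrix of $x_1x_2x_3^2$ has zero diagonal and a nonzero off-diagonal entry, and the coefficientwise argument fails.

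The scaling that works is $y_t=\tilde y_t/p_t$. This is exactly the paper's $D_\phi=\mathrm{diag}(1+\phi_t)$ in the loss odds $\phi_t=(1-p_t)/p_t=1/x_t$. Under it:
\begin{itemize}
\item the $p_t^2$ term becomes $-\tilde y_t^2F^2$, which contributes to \emph{every} monomial of $F^2$, not only to those with $t\in T_2$;
\item the off-diagonal entries reduce, through the cancellation producing the Tur\'an defect $T_N=\Phi_{N-1}^2-\Phi_{N-2}\Phi_N$, to $\phi_t\phi_s(1+\phi_t)(1+\phi_s)T_N(\boldsymbol{\phi}_{-\{t,s\}})$;
\item each variable appears to power at most two.
\end{itemize}

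\paragraph{The structure of $C_\alpha$ and the verification step.} Even with the right normalization, the structure you anticipate for $C_\alpha$ and your first verification strategy are off. The rows indexed by your $T_2$ (exponent $2$ in $\mathbf x$, i.e.\ $\alpha_t=0$ in the paper's $\phi$-expansion) vanish identically. So $C_\alpha$ is a two-block matrix on $S_1=T_1$ and $S_2=T_0$, determined by five binomial numbers $d_1,d_2,c_{11},c_{12},c_{22}$. The off-diagonal ones require the coefficient $[\binom{r}{L'-1}-\binom{r}{L'}]_+$ of a monomial in $T_N$.

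Diagonal dominance fails. For $N=3$, $|S_1|=3$, $|S_2|=1$ one gets $C_\alpha=I_4+J_4$, whose rows have diagonal $2$ but off-diagonal sum $3$.

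What is actually needed, with $b=|S_1|$ and $a=|S_2|$, is three conditions: $d_1\ge c_{11}$; $d_2\ge c_{22}$; and the determinant inequality $[d_1+(b-1)c_{11}][d_2+(a-1)c_{22}]\ge ab\,c_{12}^2$ on $\operatorname{span}(\mathbf 1_b)\oplus\operatorname{span}(\mathbf 1_a)$. The paper proves the last one by writing all five numbers as multiples of $\binom{b-1}{L}$, with $L=N-a$ and $d=2L+2-b$. It then shows the resulting bracket is nonnegative at $a=1$ and nondecreasing in $a$, and handles $L=-1$ and $b=2L+2$ separately. Your Gram-matrix fallback might be made to work, but as written this decisive step is left open.
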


We sketch the proof strategy here. The first step is a change of variables to the loss odds $\phi_t = (1-p_t)/p_t \in (0, \infty)$, under which the conditional distribution given the majority event becomes clean to manipulate: every entry of $M^0$ admits a closed form as a ratio of polynomials in $\boldsymbol{\phi}$. Clearing denominators reduces the question to showing that a polynomial matrix $\widetilde M(\boldsymbol{\phi})$ is positive semidefinite (PSD) for all positive $\boldsymbol{\phi}$. We then expand $\widetilde M$ in monomials and establish the stronger statement that \emph{each coefficient matrix is individually PSD}. Symmetry of the construction forces every such coefficient matrix into a simple two-block form parametrized by just five numbers, which we compute explicitly as binomial coefficients. Finally, we verify that these five-number matrices are PSD by reducing them to a handful of scalar inequalities and proving those inequalities algebraically.

\medskip
Together, \autoref{lemma:Hessian} reduces log-concavity of $f$ to the matrix inequality $M \succeq 0$, the technology-free relaxation reduces $M \succeq 0$ to $M^0 \succeq 0$, and \autoref{prop:covdom_all} delivers the latter. Combining these three steps establishes that $\log f$ is concave in $\mathbf{v}_A$.

\paragraph{The team-$B$ side.}
The argument above establishes the \textbf{log-concavity condition} for $i=A$. The case $i=B$ follows by an identical construction with the roles of the two teams interchanged; we record why it holds under the \emph{same} \autoref{ass:battle_technology}, since the model is not symmetric in primitives and $\ell_t$ need not be anonymous. Let $Y_t=\mathbf 1\{B\text{ wins battle }t\}\sim\textnormal{Bernoulli}(q_t)$ with $q_t=p_{Bt}=1-p_t$, let the majority event for $B$ be $B\triangleq\{\mathbf y\in\{0,1\}^{2N+1}:\sum_t y_t\ge N+1\}$, and write $f_B\triangleq\textnormal{Prob}_B$. Repeating the reparametrization with the team-$B$ log-odds
\[
\eta_t^{B}=\log\frac{q_t}{1-q_t}=-h_t(u_t),\qquad u_t=\log\frac{v_{At}}{v_{Bt}},
\]
and differentiating in $v_{Bt}$ (so that $\partial u_t/\partial v_{Bt}=-1/v_{Bt}$) reproduces \autoref{lemma:Hessian} verbatim, with the diagonal correction now governed by $(h_t'+h_t'')/(h_t')^2$ in place of $(h_t'-h_t'')/(h_t')^2$. The \emph{lower} bound in \eqref{eq:curvature}, $\ell_t''\ge-\ell_t'(1-\ell_t')$---equivalently $h_t'+h_t''\ge(h_t')^2$, i.e.\ $(h_t'+h_t'')/(h_t')^2\ge1$---combined with $G_t^{B}\triangleq\mathbb E[Y_t\mid B]-q_t\ge0$, again yields the technology-free relaxation to $M^0$, and \autoref{prop:covdom_all} delivers $M^0\succeq0$. Hence $\log f_B$ is concave in $\mathbf v_B$. The two-sided \textbf{curvature condition}~\eqref{eq:curvature} thus secures log-concavity for \emph{both} teams: the upper bound controls team $A$ and the lower bound controls team $B$. In the leading example $h_t''\equiv0$, so both bounds collapse to $r_t\le 1$ and the two sides are identical. This completes the proof of \autoref{thm:contprob}.

\begin{remark}[Mixed-strategy uniqueness under strict curvature]\label{rmk:strict_mixed}
Suppose that $\lvert\ell_t''(z)\rvert<\ell_t'(z)(1-\ell_t'(z))$ for every $t\in\mathcal N$ and $z\in\mathbb R$. Then each manager's team-winning probability is strictly log-concave in her own interior allocation. For the power-form technology, this condition reduces to $r_t<1$ for every~$t$.

For team~$A$, interior battle probabilities imply $G_t>0$.\footnote{Using $\beta$ and $\gamma$ defined in \autoref{footnote}, $G_t=\frac{p_t(1-p_t)(\beta-\gamma)}{\beta p_t+\gamma(1-p_t)}>0$, because $\beta-\gamma=\Pr(\sum_{s\ne t}X_s=N)>0$ when all battle-winning probabilities are interior. The team-$B$ argument is identical after relabeling.} The strict upper curvature bound therefore gives
\[
    M-M^0
    =
    \operatorname{diag}\!\left(
      G_t\left[
      \frac{h_t'-h_t''}{(h_t')^2}-1
      \right]\right)
    \succ0.
\]
Since $M^0\succeq0$ by \autoref{prop:covdom_all}, \autoref{lemma:Hessian} implies that the Hessian of $\log\textnormal{Prob}_A$ is negative definite. The analogous argument for team~$B$ uses $G_t^B>0$ and $(h_t'+h_t'')/(h_t')^2>1$. Thus both managers' objectives are strictly log-concave on the interior.

Continuity up to the boundary then makes the equilibrium allocation each manager's unique best response to the opponent's equilibrium allocation: any distinct boundary optimum would, by concavity and continuity, make the intervening interior segment optimal, contradicting strict log-concavity. Finally, equilibrium interchangeability in this constant-sum game \citep{Klumpp-2006-Primaries} implies that any mixed equilibrium could be paired with the opponent's pure equilibrium strategy. Its support would therefore consist of best responses to that strategy; because the best response is unique, the mixed strategy must be degenerate. Hence the pure-strategy equilibrium is also the unique Nash equilibrium when mixed strategies are allowed.
\end{remark}

\section{Discussion}\label{sec:discussion}

In this section, we discuss the roles of \autoref{ass:battle_technology} in \autoref{subsec:robust}, conduct comparative statics in \autoref{Section:CS}, and draw out implications in \autoref{subsec:implication}. The comparative statics are stated for the power-form Tullock contest technology, under which the equilibrium objects admit explicit formulas.

\subsection{The Roles of \autoref{ass:battle_technology}}\label{subsec:robust}

The three parts of \autoref{ass:battle_technology} play distinct roles: homogeneity of degree zero shapes the equilibrium, the \textbf{curvature condition} secures its existence, and boundary regularity rules out corner allocations. Together, they establish \autoref{thm:main} for a broad class of homogeneous-of-degree-zero technologies satisfying the boundary and curvature conditions, with the power form as our running illustration. We revisit each in turn.

\textbf{Homogeneity of degree zero} delivers the equilibrium's \emph{structure}. The single-battle reduced form depends only on the stake ratio (\autoref{lem:csf}), and the formula \eqref{eq:Probi} built on \autoref{lem:local_reduction} assembles the full-contest winning probabilities from these reduced forms; hence only relative incentives matter within a battle. Combined with the common multiplier $\pi^{(t)}$, this drives both managers' first-order conditions into a common form, yielding the alignment property (\autoref{pro:PPP}) and the salience-share formula (\autoref{thm:main}). Conditional on the regularity conditions that secure interiority and global optimality, this structural argument uses no functional form beyond homogeneity of degree zero; the power form merely makes the pieces explicit. Homogeneity is a substantive restriction rather than a normalization: for technologies outside the HD-0 class---difference-form success functions, for example---the reduced-form winning probability depends on the level of stakes and not only on their ratio, the two managers' first-order conditions lose their common structure, and alignment should not be expected.

\textbf{Curvature condition}~\eqref{eq:curvature} delivers \emph{existence}. Beyond guaranteeing a unique interior equilibrium in each within-battle effort game, it ensures that each manager's team-winning probability is log-concave in her own allocation (\autoref{thm:contprob}), turning the first-order conditions into global best responses. The bound is two-sided, its halves playing mirror-image roles across teams: the upper bound $\ell_t''\le\ell_t'(1-\ell_t')$, equivalently $h_t'-h_t''\ge(h_t')^2$, governs team~$A$, and the lower bound $\ell_t''\ge-\ell_t'(1-\ell_t')$, equivalently $h_t'+h_t''\ge(h_t')^2$, governs team~$B$. Each half enters the log-concavity argument at a single point: the relaxation from $M\succeq 0$ to $M^0\succeq 0$ in \autoref{subsec:logconcavity}. The combinatorial core (\autoref{prop:covdom_all}) and the sign property $G_t\ge 0$ are technology-free.

Finally, \textbf{boundary regularity} delivers \emph{interiority}. Through the reduced-form extension $p_{At}(0,v_{Bt})=0$ for $v_{Bt}>0$ and $p_{At}(v_{At},0)=1$ for $v_{At}>0$, it makes the no-boundary property (\autoref{lem:no_boundary}) hold regardless of the specific form of battle technology. If one team assigns zero reward while its rival assigns a positive amount, the rival already wins that battle with certainty and can profitably redirect a sufficiently small amount elsewhere. If both assign zero, an arbitrarily small reward creates a discrete local advantage. These forces rule out corner allocations.

\subsection{Comparative Statics}\label{Section:CS}

We use the leading power-form example to examine how the equilibrium of \autoref{thm:main} responds to changes in primitives, organizing the analysis around two objects: the team-winning probability and the battle salience.

\paragraph{Team-winning probability.}
We now show that a team's equilibrium winning probability rises when its reward budget increases, or any member's marginal cost falls.

\begin{proposition}\label{strength}
Let $\textnormal{Prob}_A^{*}$ denote team~$A$'s equilibrium contest-winning probability.
\begin{enumerate}
    \item[(i)] For every $t\in\mathcal N$, $\textnormal{Prob}_A^{*}$ is strictly increasing in $c_t\triangleq\left(\frac{c_{Bt}}{c_{At}}\right)^{r_t}$. Equivalently, holding $c_{Bt}$, $r_t$, and all other battle primitives fixed, $\textnormal{Prob}_A^{*}$ is strictly decreasing in $c_{At}$.
    \item[(ii)] $\textnormal{Prob}_A^{*}$ is strictly decreasing in $k\triangleq\frac{W_B}{W_A}$. Equivalently, holding $W_B$ and all cost and technology primitives fixed, $\textnormal{Prob}_A^{*}$ is strictly increasing in $W_A$.
\end{enumerate}
\end{proposition}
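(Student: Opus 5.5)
Under \autoref{thm:main}, the equilibrium allocation satisfies $v_{Bt}^*=k\,v_{At}^*$ in every battle. Equilibrium battle probabilities therefore depend only on primitives and $k$ through \eqref{eq:pStar}: $p_{At}^*=c_t/(c_t+k^{r_t})$. The equilibrium team-winning probability is $\textnormal{Prob}_A^*=\Phi(\mathbf p^*)$, where
\[
\Phi(\mathbf p)\triangleq\sum_{|w|\ge N+1}\prod_{s\in w}p_s\prod_{q\notin w}(1-p_q)
\]
is the majority polynomial from \eqref{eq:Probi}. This is the key simplification. Although the salience weights $S_t$ reallocate rewards when primitives change, the reallocation never feeds back into battle probabilities, because alignment fixes every within-battle ratio at $k$. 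So I never need to differentiate $S_t$ or $\theta^*(t)$ as allocation weights. I only need to know how $\Phi$ depends on each $p_{At}^*$ and how each $p_{At}^*$ depends on $c_t$ and $k$.

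First I would use the decomposition in the proof of \autoref{lem:pivot}:
\[
\Phi=\Pr\Bigl(\textstyle\sum_{s\ne t}X_s\ge N+1\Bigr)+\theta(t)\,p_t,
\]
where $\theta(t)$ does not depend on $p_t$. Hence $\partial\Phi/\partial p_t=\theta(t)$. Evaluated at $\mathbf p^*$, this equals $\theta^*(t)>0$, which is strictly positive because every $p_{As}^*\in(0,1)$.

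For part (i), $c_t$ enters only $p_{At}^*$. We have $\partial p_{At}^*/\partial c_t=k^{r_t}/(c_t+k^{r_t})^2>0$, so $d\textnormal{Prob}_A^*/dc_t=\theta^*(t)\,k^{r_t}/(c_t+k^{r_t})^2>0$. The equivalence with $c_{At}$ follows because $c_t=(c_{Bt}/c_{At})^{r_t}$ is strictly decreasing in $c_{At}$ when $r_t>0$.

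For part (ii), $k$ enters every battle, so the chain rule gives
\[
\frac{d\textnormal{Prob}_A^*}{dk}=\sum_t\theta^*(t)\,\frac{\partial p_{At}^*}{\partial k},
\qquad
\frac{\partial p_{At}^*}{\partial k}=-\frac{r_t\,k^{r_t-1}c_t}{(c_t+k^{r_t})^2}=-\frac{r_t\,p_{At}^*p_{Bt}^*}{k}<0.
\]
Hence $d\textnormal{Prob}_A^*/dk=-\tfrac1k\sum_tS_t<0$, so the derivative is exactly minus the total salience divided by $k$. Raising $W_A$ with $W_B$ fixed lowers $k$, which gives the equivalent statement.

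I do not expect a real obstacle. The only point needing care is justifying that the equilibrium value is the composition $\Phi(\mathbf p^*(c,k))$ in a neighborhood of the parameters. This is immediate because \autoref{thm:main} gives uniqueness for every parameter value, and the formula \eqref{eq:pStar} is smooth. So no envelope-theorem argument is needed.
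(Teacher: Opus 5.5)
Your proposal is correct and follows the paper's proof essentially step for step. Alignment pins $p_{At}^*=c_t/(c_t+k^{r_t})$, the majority polynomial has partial derivative $\theta^*(t)>0$ in each $p_{At}^*$, and the chain rule with the signs of $\partial p_{At}^*/\partial c_t$ and $\partial p_{At}^*/\partial k$ gives both parts; your extra identity $d\,\textnormal{Prob}_A^*/dk=-\tfrac{1}{k}\sum_t S_t$ is also correct and slightly sharper than what the paper records.
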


The symmetric conclusions hold for team~$B$ after interchanging the team labels.

Part~(i) embodies a \emph{monotone strengthening property}: strengthening any single player always raises the team's winning probability, even though equilibrium prize allocations adjust across \emph{all} battles in response. This is not obvious \emph{a priori}, since improving player $A(t)$ could in principle absorb prize budget from other battles and thereby undermine the team's performance elsewhere. The alignment property (\autoref{pro:PPP}) closes this channel: because $v_{Bt}^{*}/v_{At}^{*} = k$ is independent of cost ratios, each $p_{At}^{*}$ depends only on the corresponding battle's fundamentals and the common budget ratio~$k$, so a unilateral reduction in $c_{At}$ feeds through to $\textnormal{Prob}_A^{*}$ without disturbing the winning odds in any other battle. Part~(ii) records the parallel and unambiguous benefit of a larger reward budget.

Under equal budgets, $W_A=W_B$, if $c_t=1$ for every battle, then $\textnormal{Prob}_A^*=1/2$. Still under equal budgets, if $c_t\geq1$ for every $t$, with at least one strict inequality, then $\textnormal{Prob}_A^*>1/2$. A natural conjecture is therefore that the product $\prod_t c_t$ of battle-level cost advantages summarizes which team is stronger overall. This conjecture is false. For instance, with $2N+1=3$, $r_t=1$ for every $t$, and $(c_1,c_2,c_3)=(99,\,99,\,1/9801)$, we have $\prod_t c_t=1$, yet $\textnormal{Prob}_A^*\approx0.98$.

\paragraph{Battle salience.} Recall that the salience of battle~$t$ decomposes as
\[
S_t \;=\; \theta^*(t)\cdot r_t \cdot \mathrm{Clo}_t ,
\]
where $\mathrm{Clo}_t \triangleq p_{At}^* p_{Bt}^*$ is battle $t$'s closeness and $\theta^*(t)$ is its pivotality. Recall that $\rho_t = c_{Bt}/c_{At}$ denotes the cost ratio of the matched pair, so that $c_t = \rho_t^{r_t}$, and that $k = W_B/W_A$ denotes the budget ratio of the two teams.

\begin{proposition}[Comparative Statics of Battle Salience and Closeness]
\label{prop:salience_cs}
For every $t\in\mathcal N$:
\begin{enumerate}
    \item[(i)] Holding $r_t$, $k$, and the primitives of all other battles fixed, $S_t$ is strictly single-peaked in $\rho_t$, attaining its unique maximum at $\rho_t=k$, where $p_{At}^*=p_{Bt}^*=1/2$. Moreover, $S_t\to0$ as $\rho_t\to0$ or $\rho_t\to\infty$.

    \item[(ii)] Holding $\rho_t=k$ and the primitives of all other battles fixed, $S_t$ is strictly increasing in $r_t$.

    \item[(iii)] Holding $\{c_s,r_s\}_{s\in\mathcal N}$ fixed,
    \[
        \frac{\partial\log\mathrm{Clo}_t}{\partial\log k}
        =
        r_t\bigl(p_{At}^*-p_{Bt}^*\bigr).
    \]
    Hence, the own-battle closeness rises with team~$B$'s relative budget if and only if team~$A$ is the more likely winner of battle~$t$ in equilibrium.
\end{enumerate}
\end{proposition}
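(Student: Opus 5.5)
The plan is to reduce everything to the explicit Tullock formulas \eqref{eq:pStar} and \eqref{eq:eq-pivot}. The one structural fact needed is that the pivotality $\theta^*(t)$ does not depend on battle~$t$'s own primitives. By \eqref{eq:eq-pivot}, $\theta^*(t)$ is built only from $\{p^*_{is}\}_{s\in\mathcal N_{-t}}$. By \eqref{eq:pStar}, each such $p^*_{As}=c_s/(c_s+k^{r_s})$ depends only on battle~$s$'s primitives $(c_s,r_s)$ and on $k$. Hence, holding $k$ and all other battles fixed, $\theta^*(t)$ is a positive constant. Then $S_t=\theta^*(t)\,r_t\,\mathrm{Clo}_t$ varies with $(\rho_t,r_t)$ only through $r_t$ and $\mathrm{Clo}_t$. (This is where the alignment property \autoref{pro:PPP}, through \autoref{thm:main}, does the work.)

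For (i), write $p\triangleq p^*_{At}=\rho_t^{r_t}/(\rho_t^{r_t}+k^{r_t})$, using $c_t=\rho_t^{r_t}$. This $p$ is a strictly increasing continuous bijection from $\rho_t\in(0,\infty)$ onto $(0,1)$, and it equals $1/2$ exactly at $\rho_t=k$. With $r_t$ and $\theta^*(t)$ fixed, $S_t$ is a positive multiple of $p(1-p)$. This function is strictly increasing in $p$ on $(0,1/2]$ and strictly decreasing on $[1/2,1)$. Composing with the monotone map $\rho_t\mapsto p$ gives strict single-peakedness in $\rho_t$ with unique peak at $\rho_t=k$. As $\rho_t\to0$ we have $p\to0$, and as $\rho_t\to\infty$ we have $p\to1$; in both cases $p(1-p)\to0$, so $S_t\to0$.

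For (ii), at $\rho_t=k$ the formula gives $p^*_{At}=p^*_{Bt}=1/2$ for every $r_t$, so $\mathrm{Clo}_t=1/4$ independently of $r_t$. The pivotality is unaffected by $r_t$ by the first paragraph. Hence $S_t=\theta^*(t)\,r_t/4$, which is strictly increasing in $r_t$ because $\theta^*(t)>0$.

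For (iii), I would work with the log-odds $\eta_t\triangleq\log(p/(1-p))=r_t\log\rho_t-r_t\log k$, so that $\partial\eta_t/\partial\log k=-r_t$. Since $\log\mathrm{Clo}_t=\log p+\log(1-p)$, the logistic identities $d\log p/d\eta=1-p$ and $d\log(1-p)/d\eta=-p$ give $d\log\mathrm{Clo}_t/d\eta_t=1-2p$. The chain rule then yields $\partial\log\mathrm{Clo}_t/\partial\log k=-r_t(1-2p)=r_t(p^*_{At}-p^*_{Bt})$. Because $r_t>0$, this derivative is positive exactly when $p^*_{At}>p^*_{Bt}$, which is the stated equivalence.

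No step is a serious obstacle. The only point needing care is the independence of $\theta^*(t)$ from battle~$t$'s own primitives, and this follows directly from \eqref{eq:eq-pivot} together with \eqref{eq:pStar}.
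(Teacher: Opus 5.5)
Your proof is correct and follows the paper's argument: it rests on the same observation that $\theta^*(t)$ depends only on $k$ and the other battles' primitives, and then treats each part by explicit Tullock computation, with (ii) identical to the paper's. Your reparametrization by $p^*_{At}$ in (i) and the log-odds chain rule in (iii) are cosmetic variants of the paper's direct differentiation of $ax/(x+a)^2$ and of $\log c_t + r_t\log k - 2\log(c_t+k^{r_t})$.
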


Part~(i) formalizes the recurring intuition that ``equilibrium rewards favor intermediate battles'': salience is maximized at the cost ratio that equalizes equilibrium winning probabilities, not at the cost ratio that equalizes marginal costs. The peak location $\rho_t=k$ shifts with the budget ratio, so when team~$A$ enjoys a budget advantage ($k<1$), the salience-maximizing cost ratio satisfies $\rho_t<1$, meaning that team~$B$ enjoys a cost advantage.

Part~(ii) isolates the cleanest case in which discriminatory power affects salience. When the matched players are equalized at the prevailing budget ratio ($\rho_t=k$), the closeness component is fixed at $1/4$, and raising $r_t$ increases salience through the discriminatory-power component. Away from this benchmark, raising $r_t$ also widens any gap between $\rho_t$ and $k$, creating an opposing closeness effect that can offset the direct effect.

Part~(iii) concerns the own-battle closeness component rather than total salience. Raising team~$B$'s relative budget shifts the local winning probabilities toward team~$B$, increasing closeness when team~$A$ was initially favored and decreasing it otherwise. A change in $k$ also changes $\theta^*(t)$ through the equilibrium probabilities in the other battles, so the response of total salience $S_t$ to $k$ is generally ambiguous.

\subsection{Implications and Applications}\label{subsec:implication}

The equilibrium formula yields a direct allocation principle: a battle commands a large reward share when its discriminatory power, closeness, and pivotality are simultaneously high. Closeness alone is insufficient if the battle is unlikely to affect the team outcome, and pivotality alone is insufficient if the battle is too lopsided for marginal incentives to matter.

The alignment property yields the paper's sharper organizational implication. Rival managers facing the same battles and majority objective choose the same normalized reward schedule. Budgets, player costs, and technological biases affect that common schedule through equilibrium probabilities and salience, but they do not generate manager-specific relative tilts. The analysis also produces two comparative-static implications: strengthening a player raises the team's overall winning probability, and timing is irrelevant for allocations and ex ante winning probabilities when rewards are precommitted.

\paragraph{Procurement and consortium competition.}
Procurement provides the closest institutional match. In a typical defense procurement, rival prime contractors compete for a single award, while performance is assessed across specialized dimensions such as propulsion, avionics, software, integration, reliability, and cost. Contractually promised workshare or revenue shares can be made contingent on winning the procurement and therefore map naturally into $v_{it}$. The model predicts that larger shares go to subsystems in which performance is responsive to effort, the rival teams are closely matched, and the subsystem is likely to determine the award. Subsystems that are nearly predetermined or unlikely to affect the final evaluation receive smaller incentive shares.

The majority rule is an approximation because many procurements use weighted scores, technical thresholds, and holistic assessments, and subsystem performances may interact. The application is nevertheless useful because it preserves the central organizational mechanism: a collective award is produced through decentralized specialized effort, and the internal division of the award is strategically chosen against the rival.

\paragraph{Team sports.}
A federation, club, or team captain may allocate a fixed championship bonus pool across players assigned to distinct matches before competition begins. Such bonuses are naturally victory-contingent, and a series such as the Davis Cup or Ryder Cup resembles a collection of pairwise battles aggregated into team victory. The cleanest mapping has match assignments fixed in advance, with each rewarded player or pair associated with one battle; repeated participation across matches would require an extension of the model. The salience rule predicts larger promised shares for matches that are both responsive to effort and likely to decide the series. Where match-specific bonus schedules are observed, alignment offers a particularly direct empirical implication: after normalization by pool size, the two schedules should coincide match by match. The timing result applies only when the bonus schedule is fixed in advance; renegotiating rewards after early results would create a different dynamic game.

\paragraph{A looser political interpretation.}
Legislative elections share the majority structure, but the closest counterpart to $v_{it}$ is not campaign spending. It is a credible claim on the rents from majority control---for example, committee leadership, agenda access, policy concessions, patronage, or future party support. Such promises need not be legally binding; they may be sustained through repeated interaction, party discipline, and reputational concerns, as in a relational contract. Under this interpretation, the model suggests that both parties' promised post-victory rewards should place similar normalized weight on races that are contestable and pivotal for chamber control. This is the reward-contract counterpart of swing-state targeting in \citet*{stromberg2008electoral}: both sides concentrate on races that are close and pivotal, but the targeting here operates through promised claims on majority rents rather than through campaign resources.

Preelection transfers, independent expenditures, and leadership visits are inputs supplied before the collective outcome rather than rewards paid only after victory. They may also affect effort costs directly. Fundraising and campaign spending are therefore better viewed as measures of candidate strength or race closeness, or as objects for an extended model with direct effort subsidies, rather than as literal proxies for $v_{it}$; empirical work on vulnerability and fundraising illustrates these distinct channels \citep*{dynes2015partisanship,thomsen2023competition}. Two further gaps between the model and this application are worth naming. A winning candidate retains her seat regardless of chamber control, so the seat itself is an automatic battle-level prize outside the party's schedule---a force absent from procurement and bonus-pool settings, where winning one's own dimension pays nothing without the collective award. Any role for minority-state promises must also be interpreted subject to the contractual qualifications discussed in \autoref{Section:Model}. The election application should consequently be read as a qualitative relational-contract interpretation, not a literal campaign-finance model.

\section{Concluding Remarks}
\label{Section:Conclusion}

This paper studies how rival organizations divide fixed, victory-contingent reward budgets across members who fight separate battles. In the unique pure-strategy equilibrium, the managers choose the same normalized reward schedule despite differences in team budgets, player costs, and battle technologies. We call this result \emph{reward schedule alignment}. The common schedule directs larger shares to battles that combine greater discriminatory power, closer competition, and higher pivotality for team victory. Strategic competition therefore produces mirroring rather than specialization.

The main technical challenge is global equilibrium existence. Under the regularity condition on battle technologies, each manager's team-winning probability is log-concave in her own reward allocation, making the first-order characterization globally valid for any odd number of battles. This yields a unique pure-strategy Nash equilibrium. When the curvature condition holds strictly, strict log-concavity also rules out nontrivial mixing, so the same profile is the unique Nash equilibrium even when mixed strategies are allowed.

Three extensions would clarify the reach of reward schedule alignment. First, weighted aggregation and other victory rules would separate majority pivotality from the broader incentive mechanism. Second, richer contracts, including outcome-contingent rewards and history-dependent renegotiation, would introduce within-team tournament design and a meaningful role for commitment. Third, one could relax homogeneity of degree zero, which is also the technological foundation of \citet*{Fu-2015-Team}. By eliminating level effects, HD-0 underpins both their temporal-neutrality results and the alignment obtained here. Outside this class, absolute reward levels affect local outcomes, the managers' optimality conditions need not share a common structure, and alignment may fail.


\newpage
\appendix
\begin{center}
{\LARGE Appendix}
\par\end{center}

\section{Proofs for \autoref{Section:Model}}

This part collects the proofs of \autoref{lem:csf}, \autoref{lem:local_reduction}, \autoref{lem:temporal}, and \autoref{lem:no_boundary}.

\subsection*{Proof of \autoref{lem:csf}}

Fix battle~$t$ and positive stakes $(x_A,x_B)$. Write $p\equiv\widetilde p_{At}$ and $z\equiv\log(e_A/e_B)$. The log-odds representation gives
\[
\frac{\partial p}{\partial e_A}
=\frac{p(1-p)\ell_t'(z)}{e_A},
\qquad
\frac{\partial(1-p)}{\partial e_B}
=\frac{p(1-p)\ell_t'(z)}{e_B}.
\]
The first-order conditions set these derivatives, multiplied by $x_A$ and $x_B$, equal to $c_{At}$ and $c_{Bt}$. Dividing them yields
\[
\frac{e_A}{e_B}
=\frac{x_Ac_{Bt}}{x_Bc_{At}},
\qquad
z^*
=\log\frac{x_A}{x_B}+\log\frac{c_{Bt}}{c_{At}}.
\]
Let $p^*\equiv[1+\exp(-\ell_t(z^*))]^{-1}$ and define
\[
e_A^*=\frac{x_Ap^*(1-p^*)\ell_t'(z^*)}{c_{At}},
\qquad
e_B^*=\frac{x_Bp^*(1-p^*)\ell_t'(z^*)}{c_{Bt}}.
\]
These efforts are positive, have log ratio $z^*$, and satisfy both first-order conditions. The curvature bounds make each player's objective strictly concave in her own positive effort, so the conditions are sufficient. Since $0<\ell_t'\leq1$, the resulting payoffs are
\[
x_Ap^*\bigl[1-(1-p^*)\ell_t'(z^*)\bigr]>0,
\qquad
x_B(1-p^*)\bigl[1-p^*\ell_t'(z^*)\bigr]>0,
\]
whereas zero effort against a positive effort yields zero. Thus $(e_A^*,e_B^*)$ is an interior equilibrium.

No equilibrium can involve zero effort. If both efforts were zero, either player could profitably choose a sufficiently small positive effort; if exactly one were positive, that player could lower it while preserving certain victory and hence would have no optimal positive effort. Every equilibrium is therefore interior, and the first-order conditions uniquely determine first $z^*$ and then $(e_A^*,e_B^*)$.

Finally,
\[
\log\frac{p_{At}(x_A,x_B)}{p_{Bt}(x_A,x_B)}
=\ell_t\!\left(\log\frac{x_A}{x_B}+\log\frac{c_{Bt}}{c_{At}}\right)
\equiv h_t\!\left(\log\frac{x_A}{x_B}\right).
\]
Because $\ell_t$ is strictly increasing and twice continuously differentiable, so is $h_t$. The expression depends on $(x_A,x_B)$ only through $x_A/x_B$, proving homogeneity of degree zero and strict monotonicity in the stake ratio.

\subsection*{Proof of \autoref{lem:local_reduction}}

Fix a precommitted allocation profile and a history at which battle~$t$ is reached.

\emph{Proof of (i).} Holding all other future strategies and battle probabilities fixed, define
\[
\pi_A^{(t)}
\equiv
\Pr(A\text{ wins contest}\mid A\text{ wins battle }t)
-
\Pr(A\text{ wins contest}\mid A\text{ loses battle }t),
\]
and define $\pi_B^{(t)}$ analogously. Because team victories are complementary and $B$ wins battle~$t$ exactly when $A$ loses it,
\[
\pi_B^{(t)}
=\bigl[1-\Pr(A\text{ wins contest}\mid A\text{ loses battle }t)\bigr]
 -\bigl[1-\Pr(A\text{ wins contest}\mid A\text{ wins battle }t)\bigr]=\pi_A^{(t)}.
\]
Let $\pi^{(t)}$ denote this common value. It equals the conditional probability that the outcomes of the other battles leave battle~$t$ decisive. Hence $\pi^{(t)}>0$ whenever battle~$t$ can still affect the identity of the winning team.

\emph{Proof of (ii).} Up to terms independent of their own efforts, the two players' expected payoffs are
\[
\begin{aligned}
&\pi^{(t)}v_{At}\,\widetilde p_{At}(e_{At},e_{Bt})-c_{At}e_{At},\\
&\pi^{(t)}v_{Bt}\,\bigl(1-\widetilde p_{At}(e_{At},e_{Bt})\bigr)-c_{Bt}e_{Bt},
\end{aligned}
\]
respectively. Thus the battle-$t$ effort subgame coincides with the auxiliary one-battle contest with stakes $(\pi^{(t)}v_{At},\pi^{(t)}v_{Bt})$. By homogeneity of degree zero in \autoref{lem:csf},
\[
p_{At}\bigl(\pi^{(t)}v_{At},\pi^{(t)}v_{Bt}\bigr)
=
p_{At}(v_{At},v_{Bt}).
\]
At one-sided allocations, the battle outcome is defined by the reduced-form completion specified above rather than by an equilibrium of the corresponding effort subgame.

\subsection*{Proof of \autoref{lem:temporal}}

Suppose that the temporal structure contains $Z\le 2N+1$ clusters. Let $\mathcal N^z$ denote the set of battles in the first $z$ clusters, and let $i^z$ denote the event that team~$i$ has secured at least $N+1$ victories by the end of cluster~$z$. If the contest has stopped before a cluster is reached, complete that cluster by independent artificial draws with probabilities $(p_{At},p_{Bt})$. These draws do not affect the identity of a team that has already secured a majority.

We prove by induction on $z$ that
\[
\Pr(i^z)
=
\sum_{\substack{w_i\subseteq\mathcal N^z\\|w_i|\ge N+1}}
\prod_{t\in w_i}p_{it}
\prod_{q\in\mathcal N^z\setminus w_i}p_{jq},
\qquad j\in\{A,B\}\setminus\{i\}.
\]

\noindent\emph{Base case.}
For $z=1$, every battle in the first cluster is played simultaneously. Conditional independence and \autoref{lem:local_reduction} imply that its battle outcomes have product probabilities $(p_{At},p_{Bt})$, so the formula follows by direct enumeration. If the first cluster contains fewer than $N+1$ battles, both sides are zero.

\medskip
\noindent\emph{Inductive step.}
Suppose the formula holds after cluster~$z$. Conditional on any history at which neither team has yet secured a majority, \autoref{lem:local_reduction} and conditional independence imply that the outcomes in the next cluster have product probabilities generated by $(p_{At},p_{Bt})$. Team~$i$ has secured a majority after cluster~$z+1$ in exactly two mutually exclusive ways: it already had a majority after cluster~$z$, or it reaches a majority for the first time in cluster~$z+1$. Hence
\[
\Pr(i^{z+1})
=
\Pr(i^z)
+
\sum_{\substack{w_i\subseteq\mathcal N^{z+1}\\
|w_i|\ge N+1\\
|w_i\cap\mathcal N^z|\le N}}
\prod_{t\in w_i}p_{it}
\prod_{q\in\mathcal N^{z+1}\setminus w_i}p_{jq}.
\]
For histories in $i^z$, complete the battles in cluster~$z+1$ by the independent artificial draws described above. Substituting the inductive hypothesis for $\Pr(i^z)$ and expanding each of its terms over all outcomes in the new cluster, using $p_{it}+p_{jt}=1$, yields the product-probability sum over coalitions satisfying $|w_i\cap\mathcal N^z|\ge N+1$. The second sum covers the complementary winning coalitions satisfying $|w_i\cap\mathcal N^z|\le N$. Together they exhaust all $w_i\subseteq\mathcal N^{z+1}$ with $|w_i|\ge N+1$, proving the induction step.

At $z=Z$, $\mathcal N^Z=\mathcal N$, so the induction formula becomes~\eqref{eq:Probi}. Therefore, the ex ante team-winning probability is independent of the temporal structure.

\subsection*{Proof of \autoref{lem:no_boundary}}

Fix a candidate pure-strategy equilibrium and classify battles by
\[
D_A=\{t:p_{At}=1\},\qquad D_B=\{t:p_{At}=0\},\qquad R=\{t:p_{At}\in(0,1)\},
\]
so $D_A$ ($D_B$) consists of battles funded only by team~$A$ ($B$), while $R$ contains contested and unfunded battles, the latter having $p_{At}=1/2$. Call battle~$t$ \emph{pivotal} if the remaining $2N$ battles split evenly.

\emph{Interior contest probabilities.} If $\textnormal{Prob}_A=0$, team~$A$ can deviate to the full-support allocation $\tilde v_{At}=W_A/(2N+1)$, giving it a positive probability of winning every battle and hence the contest, a contradiction. Symmetrically, $\textnormal{Prob}_B>0$, so $\textnormal{Prob}_A,\textnormal{Prob}_B\in(0,1)$.

\emph{Positive pivotality.} Let $a\triangleq|D_A|$ and $r\triangleq|R|$. Because both teams have positive contest-winning probabilities, team~$A$ has at most $N$ sure wins and at least $N+1$ battles it can win with positive probability. Thus
\[
a\leq N
\qquad\text{and}\qquad
a+r\geq N+1.
\]
If $t\in D_A$, excluding $t$ leaves $a-1$ sure wins, and the remaining battles split evenly when $A$ wins exactly $N-a+1$ of the $r$ battles in $R$; this is feasible because $1\leq N-a+1\leq r$. If $t\in D_B$, the required count is $N-a$ of the $r$ battles in $R$, with $0\leq N-a\leq r$. Finally, if $t\in R$, the required count is $N-a$ of the other $r-1$ battles in $R$, with $0\leq N-a\leq r-1$. All inequalities follow from the two bounds above. Since every battle in $R$ is won by either team with strictly positive probability, independence makes each feasible exact-count event strictly positive. Hence $\theta(t)>0$ for every $t\in\mathcal N$.

\emph{No one-sided allocation.} Suppose $v_{At}=0<v_{Bt}$ for some $t$. Since $\textnormal{Prob}_B<1$, some battle $s$ is not a sure win for $B$. Shifting a small $\varepsilon>0$ from $v_{Bt}$ to $v_{Bs}$ keeps $t$ a sure win while strictly raising $p_{Bs}$. Because $s$ is pivotal with positive probability, this raises $\textnormal{Prob}_B$, a contradiction.

\emph{No unfunded battle.} Suppose $v_{At}=v_{Bt}=0$. Since one-sided allocations have been excluded and budgets are positive, some battle $s$ has $v_{As},v_{Bs}>0$. Team~$A$ shifts $\varepsilon>0$ from $s$ to $t$. By continuity at the interior allocation in battle~$s$, its probability loss vanishes as $\varepsilon\downarrow0$, whereas $t$ jumps from a coin flip to a sure win, yielding a limiting gain of $\theta(t)/2>0$. Thus a sufficiently small $\varepsilon$ is profitable, a contradiction.

Therefore, every pure-strategy equilibrium is interior.

\section{Proofs for \autoref{Section:Equi}}

This part collects the proofs of \autoref{lemma:Hessian} and \autoref{prop:covdom_all}.

\subsection*{Proof of \autoref{lemma:Hessian}}

Define the log-odds $\eta_t = \log\bigl(p_t/(1-p_t)\bigr)$, so that $\mathbf{X}$ has the canonical exponential-family law
\[
\Pr(\mathbf{x} \mid \boldsymbol{\eta}) = \prod_{t=1}^{2N+1} \frac{e^{\eta_t x_t}}{1+e^{\eta_t}}.
\]

\noindent\emph{First-order derivative.} Writing $f = \sum_{\mathbf{x} \in A} \Pr(\mathbf{x})$ and using $\partial_{\eta_t} \log[e^{\eta_t x_t}/(1+e^{\eta_t})] = x_t - p_t$,
\begin{equation*}
\frac{\partial \log f}{\partial \eta_t} = \mathbb{E}[X_t \mid A] - p_t \eqqcolon G_t.
\end{equation*}

\medskip

\noindent\emph{Second-order derivative.} By the same identity, $\partial_{\eta_s} \mathbb{E}[X_t \mathbf{1}_A] = \mathbb{E}[X_t X_s \mathbf{1}_A] - p_s\, \mathbb{E}[X_t \mathbf{1}_A]$. Applying the quotient rule to $\mathbb{E}[X_t \mid A] = \mathbb{E}[X_t \mathbf{1}_A]/f$ and using $\partial_{\eta_s} f = f \cdot G_s$, the $p_s \mathbb{E}[X_t \mid A]$ terms cancel and the result simplifies to $\operatorname{Cov}(X_t, X_s \mid A)$. Since $\partial p_t / \partial \eta_s = \delta_{ts} p_t(1-p_t)$,
\begin{equation*}
\frac{\partial^2 \log f}{\partial \eta_t\, \partial \eta_s} = \operatorname{Cov}(X_t, X_s \mid A) - \delta_{ts}\, p_t(1-p_t).
\end{equation*}

\medskip

\noindent\emph{Chain rule to prize shares.} By HD-0, $\eta_t = h_t(u_t)$ with $u_t = \log v_{At}-\log v_{Bt}$, where the dependence on $v_{Bt}$ is held fixed; hence $\eta_t$ depends on $\mathbf{v}_A$ only through $v_{At}$, with
\begin{equation*}
\frac{\partial \eta_t}{\partial v_{At}} = \frac{h_t'}{v_{At}}, \qquad \frac{\partial^2 \eta_t}{\partial v_{At}^2} = \frac{h_t''-h_t'}{v_{At}^2}.
\end{equation*}
The chain rule gives $\partial_{v_{At}} \log f = G_t h_t' / v_{At}$. Differentiating again with respect to $v_{As}$, and noting that the second-derivative term $\partial \log f/\partial \eta_t \cdot \partial^2 \eta_t/\partial v_{As}^2$ contributes only when $t = s$,
\begin{equation}\label{eq:HvApp}
H_{ts} = \frac{h_t' h_s'}{v_{At} v_{As}} \bigl[\operatorname{Cov}(X_t, X_s \mid A) - \delta_{ts}\, p_t(1-p_t)\bigr] + \delta_{ts}\, \frac{(h_t''-h_t')\,G_t}{v_{At}^2}.
\end{equation}
Setting $A_d = \operatorname{diag}(h_t'/v_{At})$, $D = \operatorname{diag}(p_t(1-p_t))$, and $\Sigma_A = \operatorname{Cov}(\mathbf{X} \mid A)$, equation \eqref{eq:HvApp} reads
\begin{equation}\label{eq:HvGen}
H = -A_d(D-\Sigma_A)A_d - \operatorname{diag}\!\Bigl(G_t\,\tfrac{h_t'-h_t''}{v_{At}^2}\Bigr).
\end{equation}
Under the invertible change of variables $\widetilde{\mathbf{u}} \triangleq A_d \mathbf{u}$,
\[
\mathbf{u}^\top H \mathbf{u} = -\widetilde{\mathbf{u}}^\top (D-\Sigma_A) \widetilde{\mathbf{u}} - \sum_t \widetilde{u}_t^2\, G_t\,\frac{h_t'-h_t''}{(h_t')^2} = -\widetilde{\mathbf{u}}^\top M \widetilde{\mathbf{u}}.
\]

\subsection*{Proof of \autoref{prop:covdom_all}}


\subsubsection*{Step 1: Loss reformulation, entries of $M^0$, and reduction to $\widetilde M$}

\paragraph{Loss reformulation.}
Let $Y_t = 1 - X_t$ and $\phi_t = (1-p_t)/p_t > 0$, so the majority event becomes $A = \{\sum_t Y_t \leq N\}$. For $U \subseteq \mathcal{N}$, write $\boldsymbol{\phi}_{-U}$ for the subvector obtained by removing the coordinates in $U$, abbreviating $\boldsymbol{\phi}_{-\{t\}}$ as $\boldsymbol{\phi}_{-t}$. Let $e_k(\boldsymbol{\phi}) = \sum_{|S|=k} \prod_{t \in S} \phi_t$ denote the elementary symmetric polynomial of degree $k$, with $e_0 = 1$ and $e_k = 0$ outside the natural range, and set
\[
\Phi_m(\boldsymbol{\phi}) = \sum_{k=0}^{m} e_k(\boldsymbol{\phi}), \qquad \Phi_m = 0 \text{ for } m < 0.
\]
Independence of the $Y_t$ gives $\Pr(\sum_t Y_t = k) = (\prod_t p_t)\, e_k(\boldsymbol{\phi})$, so $\Pr(A) = (\prod_t p_t)\, \Phi_N(\boldsymbol{\phi})$, and for any $|S| \leq N$,
\[
\Pr\bigl(\{t:Y_t=1\}=S\mid A\bigr) = \frac{\prod_{t \in S} \phi_t}{\Phi_N(\boldsymbol{\phi})}.
\]

\paragraph{Closed-form entries of $M^0$.}
Let $a_t \triangleq \Pr(Y_t = 1 \mid A)$, $b_t \triangleq \phi_t/(1+\phi_t)$, $q_t \triangleq 1 - a_t$, and $p_t \triangleq 1 - b_t$. Since $M^0 = D + \operatorname{diag}(G_t) - \Sigma_A$ with $G_t = q_t - p_t$ and $\operatorname{Var}(X_t \mid A) = q_t(1-q_t)$, the diagonal simplifies to
\[
M^0_{tt} = p_t(1-p_t) + (q_t - p_t) - q_t(1-q_t) = q_t^2 - p_t^2,
\]
while $M^0_{ts} = -\operatorname{Cov}(Y_t, Y_s \mid A)$ for $t \neq s$. Summing over winning configurations containing $t$,
\[
a_t = \sum_{S \ni t,\, |S| \leq N} \frac{\prod_{s \in S}\phi_s}{\Phi_N(\boldsymbol{\phi})} = \frac{\phi_t\, \Phi_{N-1}(\boldsymbol{\phi}_{-t})}{\Phi_N(\boldsymbol{\phi})}.
\]
Conditioning on whether battle $t$ is lost yields the splitting identity
\begin{equation}\label{eq:split}
\Phi_N(\boldsymbol{\phi}) = \Phi_N(\boldsymbol{\phi}_{-t}) + \phi_t\, \Phi_{N-1}(\boldsymbol{\phi}_{-t}),
\end{equation}
hence $q_t = \Phi_N(\boldsymbol{\phi}_{-t})/\Phi_N(\boldsymbol{\phi})$. Using \eqref{eq:split},
\[
b_t - a_t = \frac{\phi_t}{1+\phi_t} - \frac{\phi_t \Phi_{N-1}(\boldsymbol{\phi}_{-t})}{\Phi_N(\boldsymbol{\phi})} = \frac{\phi_t\, e_N(\boldsymbol{\phi}_{-t})}{(1+\phi_t)\Phi_N(\boldsymbol{\phi})},
\]
where the second equality follows from $\Phi_N(\boldsymbol{\phi}) - (1+\phi_t)\Phi_{N-1}(\boldsymbol{\phi}_{-t}) = \Phi_N(\boldsymbol{\phi}_{-t}) - \Phi_{N-1}(\boldsymbol{\phi}_{-t}) = e_N(\boldsymbol{\phi}_{-t})$. A parallel calculation gives
\[
q_t + p_t = \frac{2\Phi_N(\boldsymbol{\phi}) + \phi_t\, e_N(\boldsymbol{\phi}_{-t})}{(1+\phi_t)\Phi_N(\boldsymbol{\phi})},
\]
and multiplying via $M^0_{tt} = (q_t - p_t)(q_t + p_t)$,
\begin{equation}\label{eq:general_diag_M0}
M^0_{tt} = \frac{\phi_t\, e_N(\boldsymbol{\phi}_{-t})\bigl(2\Phi_N(\boldsymbol{\phi}) + \phi_t\, e_N(\boldsymbol{\phi}_{-t})\bigr)}{(1+\phi_t)^2 \Phi_N(\boldsymbol{\phi})^2}.
\end{equation}
For $t \neq s$, $M^0_{ts} = a_t a_s - \mathbb{E}[Y_t Y_s \mid A]$, with
\[
\mathbb{E}[Y_t Y_s \mid A] = \frac{\phi_t \phi_s\, \Phi_{N-2}(\boldsymbol{\phi}_{-\{t,s\}})}{\Phi_N(\boldsymbol{\phi})}.
\]
Combining this with $a_t a_s = \phi_t \phi_s\, \Phi_{N-1}(\boldsymbol{\phi}_{-t}) \Phi_{N-1}(\boldsymbol{\phi}_{-s}) / \Phi_N(\boldsymbol{\phi})^2$ and applying \eqref{eq:split} to expand $\Phi_{N-1}(\boldsymbol{\phi}_{-t})$ and $\Phi_N(\boldsymbol{\phi})$ relative to index $s$, the $\phi_t$- and $\phi_s$-cross-terms cancel and the bracket reduces to the \emph{Tur\'an defect}
\[
T_N(\boldsymbol{\psi}) \triangleq \Phi_{N-1}(\boldsymbol{\psi})^2 - \Phi_{N-2}(\boldsymbol{\psi}) \Phi_N(\boldsymbol{\psi}),
\]
evaluated at $\boldsymbol{\psi} = \boldsymbol{\phi}_{-\{t,s\}}$. Hence
\begin{equation}\label{eq:general_offdiag_M0}
M^0_{ts} = \frac{\phi_t \phi_s\, T_N(\boldsymbol{\phi}_{-\{t,s\}})}{\Phi_N(\boldsymbol{\phi})^2} \qquad (t \neq s).
\end{equation}

\paragraph{Reduction to a polynomial matrix $\widetilde M$.}
Let $D_\phi \triangleq \operatorname{diag}(1+\phi_t)$ and $\widetilde M \triangleq \Phi_N(\boldsymbol{\phi})^2\, D_\phi M^0 D_\phi$. Since $\Phi_N > 0$ and $D_\phi \succ 0$ on the positive orthant, this congruence preserves inertia, so $M^0 \succeq 0 \iff \widetilde M \succeq 0$. Clearing denominators in \eqref{eq:general_diag_M0}--\eqref{eq:general_offdiag_M0},
\begin{align}
\widetilde M_{tt} &= \phi_t\, e_N(\boldsymbol{\phi}_{-t}) \bigl(2\Phi_N(\boldsymbol{\phi}) + \phi_t\, e_N(\boldsymbol{\phi}_{-t})\bigr), \label{eq:Mtilde_entries_diag}\\
\widetilde M_{ts} &= \phi_t \phi_s (1+\phi_t)(1+\phi_s)\, T_N(\boldsymbol{\phi}_{-\{t,s\}}), \quad t \neq s. \label{eq:Mtilde_entries_offdiag}
\end{align}
Every entry of $\widetilde M$ is a polynomial in $\boldsymbol{\phi}$, which opens the door to a coefficientwise PSD argument.

\subsubsection*{Step 2: Block structure and closed forms for the coefficients of $\widetilde M$}

Let $n\triangleq 2N+1$. Expanding $\widetilde M$ in the monomial basis,
\[
\widetilde M(\boldsymbol{\phi}) = \sum_{\alpha \in \mathbb{Z}_{\geq 0}^n} C_\alpha\, \boldsymbol{\phi}^{\alpha}, \qquad \boldsymbol{\phi}^{\alpha} \triangleq \prod_t \phi_t^{\alpha_t},
\]
it suffices to show $C_\alpha \succeq 0$ for every $\alpha$, since then $v^\top \widetilde M v = \sum_\alpha \boldsymbol{\phi}^\alpha (v^\top C_\alpha v) \geq 0$ on the positive orthant. We adopt the conventions $\binom{m}{r} = 0$ for $r \notin \{0,\dots,m\}$ and $[x]_+ = \max\{x,0\}$.

\paragraph{Support and block structure of $C_\alpha$.}
By \eqref{eq:Mtilde_entries_diag}--\eqref{eq:Mtilde_entries_offdiag}, each variable $\phi_t$ appears in $\widetilde M$ to power at most $2$; hence only $\alpha \in \{0,1,2\}^n$ contribute. Partition the support of such $\alpha$ as
\[
S_1 = \{t : \alpha_t = 1\}, \quad S_2 = \{t : \alpha_t = 2\}, \quad b = |S_1|, \quad a = |S_2|, \quad L = N - a.
\]
If $\alpha_t = 0$, every monomial in row $t$ of $\widetilde M$ contains $\phi_t$ to at least the first power (via the leading $\phi_t$ factor), so it cannot contribute to $\boldsymbol{\phi}^\alpha$; the same holds column-wise. Therefore $C_\alpha$ is supported on $S_1 \cup S_2$.

The symmetry of $\Phi_N, e_N, T_N$ in their arguments, together with the fact that $\boldsymbol{\phi}^\alpha$ depends only on $S_1, S_2$ as sets, forces $C_\alpha$ to be invariant under permutations preserving $S_1$ and $S_2$ setwise. Hence its entries take only five values, classified by index-pair type:
\[
\begin{array}{c|cc}
 & t = s & t \neq s \\ \hline
t, s \in S_1 & d_1 & c_{11} \\
t, s \in S_2 & d_2 & c_{22} \\
t \in S_1,\, s \in S_2 & - & c_{12}
\end{array}
\]
Ordering $S_1$ before $S_2$,
\begin{equation}\label{eq:Calpha_block}
C_\alpha = \begin{pmatrix} d_1 I_b + c_{11}(J_b - I_b) & c_{12}\, \mathbf{1}_b \mathbf{1}_a^\top \\ c_{12}\, \mathbf{1}_a \mathbf{1}_b^\top & d_2 I_a + c_{22}(J_a - I_a) \end{pmatrix},
\end{equation}
where $J_m = \mathbf{1}_m \mathbf{1}_m^\top$.

\begin{lemma}[Coefficient counting]\label{lem:coefficient_counting}
Let $S_1$ and $S_2$ be disjoint index sets with $|S_1|=b$ and $|S_2|=a$, and define
\[
    m_{S_1,S_2}(\boldsymbol{\psi})
    \triangleq
    \prod_{t\in S_1}\psi_t
    \prod_{t\in S_2}\psi_t^2.
\]
Then
\[
    [m_{S_1,S_2}]\,
    e_N(\boldsymbol{\psi})\Phi_K(\boldsymbol{\psi})
    =
    \binom{b}{N-a}
    \mathbf 1_{\{\,2a+b-N\le K\,\}},
\]
where $[m]F$ denotes the coefficient of the monomial $m$ in the polynomial $F$, and $\binom{b}{N-a}=0$ when $N-a\notin\{0,\ldots,b\}$.
\end{lemma}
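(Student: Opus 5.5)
The plan is a direct combinatorial count. Both $e_N(\boldsymbol{\psi})$ and $\Phi_K(\boldsymbol{\psi})=\sum_{j=0}^{K}e_j(\boldsymbol{\psi})$ are multilinear, meaning every variable appears to power at most one. Hence every monomial of the product $e_N\Phi_K$ is obtained by choosing a pair of index sets $(U,V)$ with $|U|=N$ and $|V|=j\le K$, and reading off the term $\prod_{t\in U}\psi_t\prod_{t\in V}\psi_t$. The coefficient of $m_{S_1,S_2}$ is therefore the number of such pairs, with $0\le j\le K$, whose combined exponent vector equals $\mathbf 1_{S_1}+2\cdot\mathbf 1_{S_2}$.

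Next I would characterize these pairs. Exponent $2$ on $t$ forces $t\in U\cap V$, exponent $1$ forces $t$ to lie in exactly one of $U$ and $V$, and exponent $0$ forces $t\notin U\cup V$. So $U\cap V=S_2$ and $U\triangle V=S_1$. Writing $U=S_2\cup U_1$ and $V=S_2\cup(S_1\setminus U_1)$ with $U_1\subseteq S_1$, the constraint $|U|=N$ becomes $|U_1|=N-a$. This leaves $\binom{b}{N-a}$ choices of $U_1$, and the count is zero when $N-a\notin\{0,\dots,b\}$. Each choice determines $V$ uniquely, with $|V|=a+(b-(N-a))=2a+b-N$. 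The pair is counted in $\Phi_K$ exactly when $0\le|V|\le K$. Nonnegativity of $|V|$ is automatic whenever the binomial is nonzero, since $N-a\le b$. Consequently every admissible $U_1$ gives the same $|V|$, and the indicator $\mathbf 1_{\{2a+b-N\le K\}}$ factors out of the sum, which gives the stated formula.

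I do not expect any real obstacle. The only points needing care are the conventions: the case where the binomial vanishes, which makes the indicator irrelevant, and $K<0$, where $\Phi_K=0$ and the indicator fails as well. Both are consistent with the formula as stated. Once the decomposition into $(U,V)$ pairs is recorded, the argument is a single bijection.
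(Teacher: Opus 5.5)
Your proposal is correct and follows the paper's proof essentially step for step. Both count ordered pairs of index sets $(U,V)$ with $|U|=N$ and $|V|\le K$. Both note that each index in $S_2$ must lie in both sets and each index in $S_1$ in exactly one, so the $N-a$ indices of $S_1$ placed in $U$ can be chosen in $\binom{b}{N-a}$ ways. Since every such choice gives $|V|=2a+b-N$, the indicator factors out. Your added remarks, that $|V|\ge 0$ automatically and that $K<0$ is consistent with the formula, are harmless refinements the paper leaves implicit.
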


\begin{proof}
A contributing term is a pair of index sets $(U_1,U_2)$ such that $|U_1|=N$, $|U_2|\le K$, and
\[
    \left(\prod_{t\in U_1}\psi_t\right)
    \left(\prod_{t\in U_2}\psi_t\right)
    =
    m_{S_1,S_2}(\boldsymbol{\psi}).
\]
Every index in $S_2$ must belong to both $U_1$ and $U_2$, while every index in $S_1$ must belong to exactly one of them. Since $U_1$ already contains the $a$ indices in $S_2$, it must contain exactly $N-a$ indices from $S_1$. There are $\binom{b}{N-a}$ ways to choose them. The remaining indices in $S_1$ belong to $U_2$, so
\[
    |U_2|
    =
    a+\bigl[b-(N-a)\bigr]
    =
    2a+b-N.
\]
The chosen pair contributes if and only if $|U_2|\le K$, which gives the stated indicator.
\end{proof}

\paragraph{Closed forms for $d_1, d_2, c_{11}, c_{12}, c_{22}$.}
We claim
\begin{align}
d_1 &= 2 \binom{b-1}{L} \mathbf{1}_{\{b-1 \leq 2L\}}, \label{eq:d1}\\
d_2 &= 2 \binom{b}{L+1} \mathbf{1}_{\{b \leq 2L+1\}} + \binom{b}{L+1} \mathbf{1}_{\{b = 2L+2\}}, \label{eq:d2}\\
c_{ij} &= \Bigl[\binom{b-2+(i+j-2)}{L-1+(i+j-2)} - \binom{b-2+(i+j-2)}{L+(i+j-2)}\Bigr]_+, \quad (i,j) \in \{(1,1),(1,2),(2,2)\}; \label{eq:cij}
\end{align}
that is, $c_{11} = [\binom{b-2}{L-1} - \binom{b-2}{L}]_+$, $c_{12} = [\binom{b-1}{L} - \binom{b-1}{L+1}]_+$, and $c_{22} = [\binom{b}{L+1} - \binom{b}{L+2}]_+$. The diagonal formulas follow from \autoref{lem:coefficient_counting}, while the off-diagonal formulas follow from the Tur\'an-coefficient calculation below.

\medskip
\emph{(i) Off-diagonal coefficients.} For $t \neq s$, expanding $\phi_t\phi_s(1+\phi_t)(1+\phi_s) = \phi_t\phi_s + \phi_t^2\phi_s + \phi_t\phi_s^2 + \phi_t^2\phi_s^2$, each exponent pair $(\alpha_t, \alpha_s) \in \{1,2\}^2$ is selected from this prefactor with multiplicity $1$, so the coefficient of $\boldsymbol{\phi}^\alpha$ in $\widetilde M_{ts}$ equals the coefficient of the restricted monomial $\prod_{q\notin\{t,s\}}\phi_q^{\alpha_q}$ in $T_N(\boldsymbol{\phi}_{-\{t,s\}})$.

Fix a monomial in $\boldsymbol{\phi}_{-\{t,s\}}$ with $a'$ squared and $r$ first-power variables; set $L' = N - a'$. We claim its coefficient in $T_N$ equals
\begin{equation}\label{eq:Turan_coeff}
\bigl[\binom{r}{L'-1} - \binom{r}{L'}\bigr]_+.
\end{equation}
Indeed, expand $T_N=\Phi_{N-1}^2-\Phi_{N-2}\Phi_N$. In each $\Phi$-factor, the $a'$ squared variables must occupy slots in the index set because each $\Phi$ is squarefree, leaving residual budgets $L'-1,L'-1,L'-2,L'$. The calculation therefore reduces to comparing the admissible ways of assigning the $r$ first-power variables between the two factors; the positive part in \eqref{eq:Turan_coeff} arises when the first product admits an additional endpoint assignment. Writing $h$ for the number assigned to the first factor, the coefficient of the target monomial is
\[
\underbrace{\sum_{h: h \leq L'-1,\, r-h \leq L'-1} \binom{r}{h}}_{\Phi_{N-1}^2} - \underbrace{\sum_{h: h \leq L'-2,\, r-h \leq L'} \binom{r}{h}}_{\Phi_{N-2}\Phi_N}.
\]
Consider first the regime $L' \leq r$. The two index ranges are then $[A, B]$ and $[A-1, B-1]$ with $A = r-L'+1 \geq 1$ and $B = \min\{r, L'-1\} = L'-1$. If $r \leq 2L'-2$, both ranges are nonempty and overlap on $[A, B-1]$, where the binomials cancel, leaving $\binom{r}{B} - \binom{r}{A-1} = \binom{r}{L'-1} - \binom{r}{L'}$ (using $\binom{r}{r-L'} = \binom{r}{L'}$); if $r \geq 2L'-1$, both ranges are empty and the coefficient is $0$. Next, when $L' = r+1$, the first sum runs over all of $[0,r]$ and equals $2^r$, while the second omits only the term $h = r$ and equals $2^r - 1$, so the coefficient is $1 = \binom{r}{L'-1} - \binom{r}{L'}$. Finally, when $L' \geq r+2$, both sums collapse to $2^r$ and the coefficient vanishes. In every regime the coefficient equals $\bigl[\binom{r}{L'-1} - \binom{r}{L'}\bigr]_+$: for $r \leq 2L'-1$ the bracketed difference is nonnegative and matches the computations above, while for $r \geq 2L'$ it is negative and the coefficient is $0$. This proves \eqref{eq:Turan_coeff}.

Applying \eqref{eq:Turan_coeff} with $(a', r)$ equal to $(a, b-2)$, $(a-1, b-1)$, $(a-2, b)$ --- corresponding to $(\alpha_t, \alpha_s) = (1,1), (1,2), (2,2)$ --- yields $c_{11}, c_{12}, c_{22}$ respectively.

\medskip
\emph{(ii) Diagonal coefficients.} Substituting \eqref{eq:split} into \eqref{eq:Mtilde_entries_diag},
\[
\widetilde M_{tt} = \underbrace{2\phi_t\, e_N(\boldsymbol{\phi}_{-t})\Phi_N(\boldsymbol{\phi}_{-t})}_{\text{degree 1 in }\phi_t} + \underbrace{2\phi_t^2\, e_N(\boldsymbol{\phi}_{-t})\Phi_{N-1}(\boldsymbol{\phi}_{-t}) + \phi_t^2\, e_N(\boldsymbol{\phi}_{-t})^2}_{\text{degree 2 in }\phi_t}.
\]

\emph{Case $\alpha_t=1$.}
The complementary monomial in $\boldsymbol{\phi}_{-t}$ has $a$ squared variables and $b-1$ first-power variables. Applying \autoref{lem:coefficient_counting} with $K=N$ gives the coefficient
\[
    \binom{b-1}{N-a}
    \mathbf 1_{\{\,2a+b-1-N\le N\,\}}
    =
    \binom{b-1}{L}
    \mathbf 1_{\{\,b-1\le2L\,\}}.
\]
Multiplying by the leading factor $2$ yields~\eqref{eq:d1}.

\emph{Case $\alpha_t=2$.}
The complementary monomial has $a-1$ squared variables and $b$ first-power variables. Applying \autoref{lem:coefficient_counting} to the first summand $2e_N\Phi_{N-1}$ gives
\[
    2\binom{b}{N-(a-1)}
    \mathbf 1_{\{\,2(a-1)+b-N\le N-1\,\}}
    =
    2\binom{b}{L+1}
    \mathbf 1_{\{\,b\le2L+1\,\}}.
\]
For the second summand $e_N^2$, each squared variable must belong to both size-$N$ index sets, while each first-power variable must belong to exactly one of them. Each index set therefore contains $L+1$ of the $b$ first-power variables. This is possible only when $b=2L+2$, in which case there are $\binom{b}{L+1}$ valid ordered pairs. Combining the two contributions yields~\eqref{eq:d2}.

\subsubsection*{Step 3: Verifying $C_\alpha \succeq 0$ across all regimes}

\paragraph{Degenerate cases $L < 0$.}
Each entry of $\widetilde M$ has total degree at most $2N+2$, attained by $\phi_t^2 e_N(\boldsymbol{\phi}_{-t})^2$, so $C_\alpha = 0$ whenever $b + 2a > 2N+2$. If $L \leq -2$, then $b + 2a \geq 2N+4$, so $C_\alpha = 0$. If $L = -1$ and $b \geq 1$, then $b + 2a \geq 2N+3$, so again $C_\alpha = 0$. The only nontrivial case is $L = -1, b = 0$: then $S_2$ has $N+1$ elements, and \eqref{eq:d2}--\eqref{eq:cij} give $d_2 = 1, c_{22} = 1$, so $C_\alpha = I_{N+1} + (J_{N+1} - I_{N+1}) = J_{N+1} \succeq 0$. Henceforth assume $L \geq 0$.

\paragraph{Reduction to three scalar inequalities.}
A block of the form $dI_m+c(J_m-I_m)$ has eigenvalue $d+(m-1)c$ on $\operatorname{span}(\mathbf{1}_m)$ and eigenvalue $d-c$ on its orthogonal complement. In \eqref{eq:Calpha_block}, the coupling term $c_{12}\mathbf{1}_b\mathbf{1}_a^\top$ acts only on $\operatorname{span}(\mathbf{1}_b)\oplus\operatorname{span}(\mathbf{1}_a)$. Therefore, on the within-block orthogonal complements, positive semidefiniteness requires $d_1-c_{11}\geq0$ and $d_2-c_{22}\geq0$. On the two-dimensional coupling subspace, using the normalized basis vectors $\mathbf{1}_b/\sqrt b$ and $\mathbf{1}_a/\sqrt a$, $C_\alpha$ is represented by
\[
\begin{pmatrix}
d_1+(b-1)c_{11} & \sqrt{ab}\,c_{12}\\
\sqrt{ab}\,c_{12} & d_2+(a-1)c_{22}
\end{pmatrix}.
\]
All five coefficients are nonnegative by \eqref{eq:d1}--\eqref{eq:cij}, so the diagonal entries of this matrix are nonnegative. Its positive semidefiniteness is therefore equivalent to a nonnegative determinant. Hence $C_\alpha\succeq0$ if and only if
\begin{equation}\label{eq:PSD_conditions}
d_1-c_{11}\geq0,\qquad
d_2-c_{22}\geq0,\qquad
\bigl[d_1+(b-1)c_{11}\bigr]\bigl[d_2+(a-1)c_{22}\bigr]
\geq ab\,c_{12}^2.
\end{equation}
The first condition is needed only when $b\geq2$, the second only when $a\geq2$, and conditions associated with an absent block are omitted.

\paragraph{First two inequalities of \eqref{eq:PSD_conditions}.}
If $c_{11} > 0$, then $b \leq 2L+1$, so $d_1 = 2\binom{b-1}{L}$. Pascal's identity $\binom{b-1}{L} = \binom{b-2}{L-1} + \binom{b-2}{L}$ gives
\[
d_1 - c_{11} = 2\binom{b-1}{L} - \Bigl[\binom{b-2}{L-1} - \binom{b-2}{L}\Bigr] = \binom{b-2}{L-1} + 3\binom{b-2}{L} \geq 0.
\]
If $c_{11} = 0$, the inequality is trivial. Similarly, if $c_{22} > 0$ and $b \leq 2L+1$, then $d_2 = 2\binom{b}{L+1}$ and
\[
d_2 - c_{22} = 2\binom{b}{L+1} - \Bigl[\binom{b}{L+1} - \binom{b}{L+2}\Bigr] = \binom{b}{L+1} + \binom{b}{L+2} \geq 0.
\]
The boundary subcase $b = 2L+2$ is handled below.

\paragraph{Third inequality of \eqref{eq:PSD_conditions} for $b \leq 2L+1$.}
The cases $a = 0$ (no second block) and $b = 0$ (RHS vanishes) are trivial. If $b = 1$, then $c_{11} = 0$, and $c_{12} = 0$ unless $L = 0$; when $L = 0$, direct substitution gives $d_1 = d_2 = 2$, $c_{12} = c_{22} = 1$, and $d_1(d_2 + (a-1)c_{22}) = 2(a+1) \geq a = ab\,c_{12}^2$.

Henceforth assume $a \geq 1$, $b \geq 2$, $b \leq 2L+1$. Introduce $d \triangleq 2L+2-b \geq 1$ and $x \triangleq \binom{b-1}{L}$. Pascal's identities yield
\[
d_1 = 2x, \quad c_{11} = \tfrac{d-1}{b-1}x, \quad c_{12} = \tfrac{d}{L+1}x, \quad c_{22} = \tfrac{b(d+1)}{(L+1)(L+2)}x, \quad d_2 = \tfrac{2b}{L+1}x,
\]
so that $d_1 + (b-1)c_{11} = (d+1)x$ and $d_2 + (a-1)c_{22} = \tfrac{b}{L+1}x\bigl[2 + \tfrac{(a-1)(d+1)}{L+2}\bigr]$. The third inequality of \eqref{eq:PSD_conditions} becomes
\begin{equation}\label{eq:third_diff}
\tfrac{bx^2}{L+1}\Bigl[(d+1)\bigl(2 + \tfrac{(a-1)(d+1)}{L+2}\bigr) - \tfrac{ad^2}{L+1}\Bigr] \geq 0.
\end{equation}
The bracket is linear in $a$ with slope $\tfrac{(d+1)^2}{L+2} - \tfrac{d^2}{L+1} = \tfrac{-d^2 + 2(L+1)d + (L+1)}{(L+1)(L+2)}$, whose numerator (a downward quadratic in $d$) has positive root $(L+1) + \sqrt{(L+1)(L+2)} > 2L+1 \geq d$, so the slope is nonnegative. At $a = 1$, the bracket equals $2(d+1) - \tfrac{d^2}{L+1} \geq 0$ if and only if $d^2 - 2(L+1)d - 2(L+1) \leq 0$, whose positive root $(L+1) + \sqrt{(L+1)(L+3)}$ likewise exceeds $2L+1$. Hence the bracket is nonnegative at $a = 1$ and weakly increasing thereafter, establishing \eqref{eq:third_diff}.

\paragraph{Boundary case $b = 2L+2$.}
At $b = 2L+2$, the identity $\binom{2L+1}{L} = \binom{2L+1}{L+1}$ forces $c_{12} = 0$, decoupling the two blocks. The first block vanishes ($d_1 = c_{11} = 0$ since $b - 1 > 2L$). The second block has $d_2 = \binom{2L+2}{L+1}$ and $c_{22} = \binom{2L+2}{L+1} - \binom{2L+2}{L+2}$, so $d_2 - c_{22} = \binom{2L+2}{L+2} \geq 0$. All conditions in \eqref{eq:PSD_conditions} hold. For $b > 2L+2$, all coefficients vanish by the indicator constraints, so $C_\alpha = 0$.

\paragraph{Conclusion.}
The four cases above establish $C_\alpha \succeq 0$ for every $\alpha$, hence $\widetilde M(\boldsymbol{\phi}) \succeq 0$ on the positive orthant. By the congruence of Step 1, $M^0 \succeq 0$, completing the proof.

\section{Proofs for \autoref{sec:discussion}}\label{app:disc}

This part collects the proofs of \autoref{strength} and \autoref{prop:salience_cs}.



\subsection*{Proof of \autoref{strength}}

By \autoref{pro:PPP}, $v_{Bt}^*/v_{At}^* = k$ for every $t \in \mathcal{N}$, so the equilibrium battle-winning probabilities \eqref{eq:pStar} are
\[
p_{At}^{*} = \frac{c_t}{c_t + k^{r_t}}, \qquad p_{Bt}^{*} = \frac{k^{r_t}}{c_t + k^{r_t}},
\]
where $c_t = (c_{Bt}/c_{At})^{r_t}$ and $k = W_B/W_A$. Since $p_{At}^*$ depends only on $(c_t, r_t, k)$, team~$A$'s majority-winning probability can be written as
\[
    \textnormal{Prob}_A^{*} = \sum_{\substack{w \subseteq \mathcal{N} \\ |w| \geq N+1}} \prod_{t \in w} p_{At}^{*} \prod_{q \in \mathcal{N} \setminus w} p_{Bq}^{*},
\]
yielding $\partial\, \textnormal{Prob}_A^* / \partial p_{At}^* = \theta^*(t) > 0$, where $\theta^*(t)$ is the pivotality of battle~$t$.

For (i), $\partial p_{At}^*/\partial c_t = k^{r_t}/(c_t + k^{r_t})^2 > 0$, and the chain rule gives $\partial\, \textnormal{Prob}_A^*/\partial c_t > 0$. Since $c_t$ is strictly decreasing in $c_{At}$ and strictly increasing in $c_{Bt}$, the claim follows. For (ii), $\partial p_{At}^*/\partial k = -r_t c_t k^{r_t-1}/(c_t + k^{r_t})^2 < 0$ for every $t$, so $\partial\, \textnormal{Prob}_A^*/\partial k < 0$.

\subsection*{Proof of \autoref{prop:salience_cs}}

The equilibrium battle-winning probabilities are
\[
    p_{At}^{*} \;=\; \frac{c_t}{c_t + k^{r_t}}, \qquad p_{Bt}^{*} \;=\; \frac{k^{r_t}}{c_t + k^{r_t}},
\]
with $c_t = \rho_t^{r_t}$. Since $\theta^*(t)$ depends only on $\{p_{As}^*, p_{Bs}^*\}_{s \neq t}$, it is invariant to perturbations of own-battle primitives that leave the remaining battles unaffected.

\smallskip
\noindent\emph{Part (i).} Fix $r_t$, $k$, and all other primitives. The map $\rho_t \mapsto c_t = \rho_t^{r_t}$ is a strictly increasing bijection on $(0, \infty)$, so it suffices to establish strict single-peakedness in $c_t$ with the maximum at $c_t = k^{r_t}$. Let $x \triangleq c_t$ and $a \triangleq k^{r_t} > 0$. Because $r_t$ and $\theta^*(t)$ are independent of $x$, the dependence of $S_t$ on $x$ is fully captured by
\[
    \mathrm{Clo}_t(x) \;=\; \frac{a\,x}{(x + a)^2}.
\]
Differentiating,
\[
    \frac{d \mathrm{Clo}_t}{d x} \;=\; \frac{a\,(x + a)^2 - 2 a\,x\,(x + a)}{(x + a)^4} \;=\; \frac{a\,(a - x)}{(x + a)^3},
\]
which is strictly positive for $x < a$, zero at $x = a$, and strictly negative for $x > a$. Hence $\mathrm{Clo}_t$, and therefore $S_t$, is strictly single-peaked in $x$ with unique maximum at $x = a$, i.e., $c_t = k^{r_t}$. Pulling back through $c_t = \rho_t^{r_t}$ yields the unique maximum at $\rho_t = k$. Substituting $c_t = k^{r_t}$ into the expression for $p_{At}^*$ gives $p_{At}^* = 1/2$, so the peak coincides with equal equilibrium winning probabilities. Finally, $\mathrm{Clo}_t(x) = a / (x + 2 a + a^2/x) \to 0$ as $x \to 0$ or $x \to \infty$, so $S_t \to 0$ as $\rho_t \to 0$ or $\rho_t \to \infty$.

\smallskip
\noindent\emph{Part (ii).} When $\rho_t=k$, we have $c_t=k^{r_t}$ and hence $\mathrm{Clo}_t=1/4$, independently of $r_t$. Moreover, $\theta^*(t)$ depends only on the other battles and is unaffected by $r_t$. Therefore $S_t=r_t\theta^*(t)/4$ is strictly increasing in $r_t$.

\smallskip
\noindent\emph{Part (iii).} Fix $\{c_s, r_s\}_{s \in \mathcal{N}}$. From $\mathrm{Clo}_t = c_t k^{r_t} / (c_t + k^{r_t})^2$,
\[
    \log \mathrm{Clo}_t \;=\; \log c_t + r_t \log k - 2 \log\!\left(c_t + k^{r_t}\right).
\]
Differentiating with respect to $\log k$, and using $\partial k^{r_t} / \partial \log k = r_t k^{r_t}$,
\[
    \frac{\partial \log \mathrm{Clo}_t}{\partial \log k} \;=\; r_t - 2 \cdot \frac{r_t k^{r_t}}{c_t + k^{r_t}} \;=\; r_t \left(1 - 2 p_{Bt}^*\right) \;=\; r_t\, (p_{At}^* - p_{Bt}^*),
\]
where the last equality uses $p_{At}^* + p_{Bt}^* = 1$. The sign claim is immediate: $\partial \log \mathrm{Clo}_t / \partial \log k > 0$ if and only if $p_{At}^* > p_{Bt}^*$.

\bibliographystyle{aer}
\bibliography{TeamContest_reconstructed_new}

@article{banzhaf1965weighted,
  author  = {Banzhaf, John F.},
  title   = {Weighted Voting Doesn't Work: A Mathematical Analysis},
  journal = {Rutgers Law Review},
  year    = {1965},
  volume  = {19},
  pages   = {317--343}
}

@article{Barbieri-2024-Winners,
  author  = {Barbieri, Stefano and Serena, Marco},
  title   = {Winner's Effort in Multi-Battle Team Contests},
  journal = {Games and Economic Behavior},
  year    = {2024},
  volume  = {145},
  pages   = {526--556}
}

@article{brams1974three,
  author  = {Brams, Steven J. and Davis, Morton D.},
  title   = {The 3/2's Rule in Presidential Campaigning},
  journal = {American Political Science Review},
  year    = {1974},
  volume  = {68},
  number  = {1},
  pages   = {113--134}
}

@article{Clark-2020-Creating,
  author  = {Clark, Derek J. and Nilssen, Tore},
  title   = {Creating Incentives at the Margin: The Case of Dynamic Multi-Battle Contests},
  journal = {Journal of Public Economic Theory},
  year    = {2020},
  volume  = {22},
  pages   = {1529--1551}
}

@article{dixit1996determinants,
  author  = {Dixit, Avinash and Londregan, John},
  title   = {The Determinants of Success of Special Interests in Redistributive Politics},
  journal = {Journal of Politics},
  year    = {1996},
  volume  = {58},
  number  = {4},
  pages   = {1132--1155}
}

@article{dynes2015partisanship,
  author  = {Dynes, Adam M. and Huber, Gregory A.},
  title   = {Partisanship and the Allocation of Federal Spending: Do Same-Party Legislators or Voters Benefit from Shared Party Affiliation with the President and House Majority?},
  journal = {American Political Science Review},
  year    = {2015},
  volume  = {109},
  number  = {1},
  pages   = {172--186}
}

@article{fang2020turning,
  author  = {Fang, Dawei and Noe, Thomas and Strack, Philipp},
  title   = {Turning Up the Heat: The Discouraging Effect of Competition in Contests},
  journal = {Journal of Political Economy},
  year    = {2020},
  volume  = {128},
  number  = {9},
  pages   = {3401--3430}
}

@article{Feng-2018-How,
  author  = {Feng, Xin and Lu, Jingfeng},
  title   = {How to Split the Pie: Optimal Rewards in Dynamic Multi-Battle Competitions},
  journal = {Journal of Public Economics},
  year    = {2018},
  volume  = {160},
  pages   = {82--95}
}

@article{Feng-2024-Optimal,
  author  = {Feng, Xin and Jiao, Qian and Kuang, Zhonghong and Lu, Jingfeng},
  title   = {Optimal Prize Design in Team Contests with Pairwise Battles},
  journal = {Journal of Economic Theory},
  year    = {2024},
  volume  = {215},
  pages   = {105765}
}

@article{Fu-2015-Team,
  author  = {Fu, Qiang and Lu, Jingfeng and Pan, Yue},
  title   = {Team Contests with Multiple Pairwise Battles},
  journal = {American Economic Review},
  year    = {2015},
  volume  = {105},
  number  = {7},
  pages   = {2120--2140}
}

@article{Gelder-2014-From,
  author  = {Gelder, Alan},
  title   = {From Custer to Thermopylae: Last Stand Behavior in Multi-Stage Contests},
  journal = {Games and Economic Behavior},
  year    = {2014},
  volume  = {87},
  pages   = {442--466}
}

@article{Hafner-2017-Tug,
  author  = {H{\"a}fner, Samuel},
  title   = {A Tug-of-War Team Contest},
  journal = {Games and Economic Behavior},
  year    = {2017},
  volume  = {104},
  pages   = {372--391}
}

@article{Hafner-2020-Eternal,
  author  = {H{\"a}fner, Samuel},
  title   = {Eternal Peace in the Tug-of-War?},
  journal = {Economic Theory},
  year    = {2022},
  volume  = {74},
  number  = {4},
  pages   = {1057--1101}
}

@article{halac2021rank,
  author  = {Halac, Marina and Lipnowski, Elliot and Rappoport, Daniel},
  title   = {Rank Uncertainty in Organizations},
  journal = {American Economic Review},
  year    = {2021},
  volume  = {111},
  number  = {3},
  pages   = {757--786}
}

@article{Harris-1987-Racing,
  author  = {Harris, Christopher and Vickers, John},
  title   = {Racing with Uncertainty},
  journal = {Review of Economic Studies},
  year    = {1987},
  volume  = {54},
  pages   = {1--21}
}

@article{holmstrom1982moral,
  author  = {Holmstr{\"o}m, Bengt},
  title   = {Moral Hazard in Teams},
  journal = {Bell Journal of Economics},
  year    = {1982},
  volume  = {13},
  number  = {2},
  pages   = {324--340}
}

@article{Klumpp-2006-Primaries,
  author  = {Klumpp, Tilman and Polborn, Mattias K.},
  title   = {Primaries and the New Hampshire Effect},
  journal = {Journal of Public Economics},
  year    = {2006},
  volume  = {90},
  number  = {6--7},
  pages   = {1073--1114}
}

@article{Kobayashi-2024-Effort,
  author  = {Kobayashi, Katsuya},
  title   = {Effort Complementarity and Role Assignments in Group Contests},
  journal = {Journal of Economics \& Management Strategy},
  year    = {2024},
  volume  = {33},
  number  = {3},
  pages   = {483--508}
}

@article{Konishi-2022-Equilibrium,
  author  = {Konishi, Hideo and Pan, Chen and Simeonov, Dimitar},
  title   = {Equilibrium Player Choices in Team Contests with Multiple Pairwise Battles},
  journal = {Games and Economic Behavior},
  year    = {2022},
  volume  = {132},
  pages   = {274--287}
}

@article{Konishi-2024-Allocation,
  author  = {Konishi, Hideo and Sahuguet, Nicolas and Crutzen, Beno{\^i}t S. Y.},
  title   = {Allocation Rules of Indivisible Prizes in Team Contests},
  journal = {Economic Theory},
  year    = {2024},
  volume  = {78},
  pages   = {69--100}
}

@article{Konrad-2009-Multi,
  author  = {Konrad, Kai A. and Kovenock, Dan},
  title   = {Multi-Battle Contests},
  journal = {Games and Economic Behavior},
  year    = {2009},
  volume  = {66},
  number  = {1},
  pages   = {256--274}
}

@article{Kovenock-2021-Generalizations,
  author  = {Kovenock, Dan and Roberson, Brian},
  title   = {Generalizations of the General Lotto and Colonel Blotto Games},
  journal = {Economic Theory},
  year    = {2021},
  volume  = {71},
  pages   = {997--1032}
}

@article{kujala2020donors,
  author  = {Kujala, Jordan},
  title   = {Donors, Primary Elections, and Polarization in the United States},
  journal = {American Journal of Political Science},
  year    = {2020},
  volume  = {64},
  number  = {3},
  pages   = {587--602}
}

@article{lazear1981rank,
  author  = {Lazear, Edward P. and Rosen, Sherwin},
  title   = {Rank-Order Tournaments as Optimum Labor Contracts},
  journal = {Journal of Political Economy},
  year    = {1981},
  volume  = {89},
  number  = {5},
  pages   = {841--864}
}

@article{lemus2025contingent,
  author  = {Lemus, Jorge and Marshall, Guillermo},
  title   = {Contingent Prizes in Dynamic Contests},
  journal = {RAND Journal of Economics},
  year    = {2025},
  volume  = {56},
  number  = {4},
  pages   = {476--493}
}

@article{moldovanu2001optimal,
  author  = {Moldovanu, Benny and Sela, Aner},
  title   = {The Optimal Allocation of Prizes in Contests},
  journal = {American Economic Review},
  year    = {2001},
  volume  = {91},
  number  = {3},
  pages   = {542--558}
}

@article{Nitzan-2011-Prize,
  author  = {Nitzan, Shmuel and Ueda, Kaoru},
  title   = {Prize Sharing in Collective Contests},
  journal = {European Economic Review},
  year    = {2011},
  volume  = {55},
  pages   = {678--687}
}

@article{Nitzan-2014-Intra,
  author  = {Nitzan, Shmuel and Ueda, Kaoru},
  title   = {Intra-Group Heterogeneity in Collective Contests},
  journal = {Social Choice and Welfare},
  year    = {2014},
  volume  = {43},
  pages   = {219--238}
}

@article{olszewski2016large,
  author  = {Olszewski, Wojciech and Siegel, Ron},
  title   = {Large Contests},
  journal = {Econometrica},
  year    = {2016},
  volume  = {84},
  number  = {2},
  pages   = {835--854}
}

@article{penrose1946elementary,
  author  = {Penrose, L. S.},
  title   = {The Elementary Statistics of Majority Voting},
  journal = {Journal of the Royal Statistical Society},
  year    = {1946},
  volume  = {109},
  number  = {1},
  pages   = {53--57}
}

@article{rayo2007relational,
  author  = {Rayo, Luis},
  title   = {Relational Incentives and Moral Hazard in Teams},
  journal = {Review of Economic Studies},
  year    = {2007},
  volume  = {74},
  number  = {3},
  pages   = {937--963}
}

@article{Roberson-2006-Colonel,
  author  = {Roberson, Brian},
  title   = {The Colonel Blotto Game},
  journal = {Economic Theory},
  year    = {2006},
  volume  = {29},
  number  = {1},
  pages   = {1--24}
}

@article{shapley1954method,
  author  = {Shapley, L. S. and Shubik, Martin},
  title   = {A Method for Evaluating the Distribution of Power in a Committee System},
  journal = {American Political Science Review},
  year    = {1954},
  volume  = {48},
  number  = {3},
  pages   = {787--792}
}

@article{stromberg2008electoral,
  author  = {Str{\"o}mberg, David},
  title   = {How the Electoral College Influences Campaigns and Policy: The Probability of Being Florida},
  journal = {American Economic Review},
  year    = {2008},
  volume  = {98},
  number  = {3},
  pages   = {769--807}
}

@article{thomsen2023competition,
  author  = {Thomsen, Danielle M.},
  title   = {Competition in Congressional Elections: Money versus Votes},
  journal = {American Political Science Review},
  year    = {2023},
  volume  = {117},
  number  = {2},
  pages   = {675--691}
}

@article{winter2004incentives,
  author  = {Winter, Eyal},
  title   = {Incentives and Discrimination},
  journal = {American Economic Review},
  year    = {2004},
  volume  = {94},
  number  = {3},
  pages   = {764--773}
}

\end{document}